\theoremstyle{plain}
\newtheorem{theorem}{Theorem}
\newtheorem{lemma}[theorem]{Lemma}
\newtheorem{corollary}[theorem]{Corollary}
\newtheorem{proposition}[theorem]{Proposition}
\theoremstyle{definition}
\newtheorem{definition}[theorem]{Definition}
\newtheorem{conjecture}[theorem]{Conjecture}
\newtheorem*{problem}{Problem}
\newtheorem{example}[theorem]{Example}
\newtheorem*{remark}{Remark}
\newtheorem{note}[theorem]{Note}
\newcommand{\ang}[1]{\langle#1\rangle}
\newcommand{\tildO}{\tilde{O}}
\newcommand{\Romnum}[1]{\uppercase\expandafter{\romannumeral #1}}
\DeclareMathOperator{\trace}{Tr} 
\DeclareMathOperator{\gal}{Gal} 
\DeclareMathOperator{\order}{ord} 
\DeclareMathOperator{\Res}{Res}
\DeclareMathOperator{\Aut}{Aut}
\DeclareMathOperator{\loglog}{loglog}
\def\Q{\ensuremath{\mathbb{Q}}}
\def\Z{\ensuremath{\mathbb{Z}}}
\def\F{\ensuremath{\mathbb{F}}}
\def\MM{\ensuremath{\mathsf{M}}}
\def\euler{\ensuremath{\varphi}}
\newcounter{algorithm}
\renewcommand{\thealgorithm}{\arabic{algorithm}}
\def\algorithm{\@ifnextchar[{\@algorithma}{\@algorithmb}}
\def\@algorithma[#1]{%
	\refstepcounter{algorithm}
	\trivlist
	\leftmargin\z@
	\itemindent\z@
	\labelsep\z@
	\item[\parbox{\columnwidth}{%
		\hrule
		\hrule
		\noindent\strut\textbf{Algorithm \thealgorithm} #1
		\hrule
	}]\hfil\vskip0em%
}
\def\@algorithmb{\@algorithma[]}
\title{Computing isomorphisms and embeddings of finite fields}
\author{Ludovic Brieulle, Luca De Feo, Javad Doliskani,\\ Jean-Pierre
  Flori and \'Eric Schost}
\begin{document}

\maketitle
\begin{abstract}
  Let $\F_q$ be a finite field. %
  Given two irreducible polynomials $f,g$ over $\F_q$, with $\deg f$
  dividing $\deg g$, the finite field embedding problem asks to
  compute an explicit description of a field embedding of
  $\F_q[X]/f(X)$ into $\F_q[Y]/g(Y)$. %
  When $\deg f = \deg g$, this is also known as the isomorphism
  problem.

  This problem, a special instance of polynomial factorization, plays
  a central role in computer algebra software. %
  We review previous algorithms, due to Lenstra, Allombert, Rains, and
  Narayanan, and propose improvements and generalizations. %
  Our detailed complexity analysis shows that our newly proposed
  variants are at least as efficient as previously known algorithms,
  and in many cases significantly better.

  We also implement most of the presented algorithms, compare them
  with the state of the art computer algebra software, and make the
  code available as open source. %
  Our experiments show that our new variants consistently outperform
  available software.
\end{abstract}

{
  \hypersetup{linkcolor=black}
  \setcounter{tocdepth}{2}
  \tableofcontents
}

\section{Introduction}
\label{sec:introduction}

Let $q$ be a prime power and let $\F_q$ be a field with $q$
elements. Let $f$ and $g$ be irreducible polynomials over $\F_q$, with
$\deg f$ dividing $\deg g$. Define $k=\F_q[X]/f(X)$ and
$K=\F_q[Y]/g(Y)$; then, there is an embedding $\phi:k\hookrightarrow
K$, unique up to $\F_q$-automorphisms of $k$. The goal of this paper
is to describe algorithms to efficiently represent and evaluate one
such embedding.

All the algorithms we are aware of split the embedding problem in two
sub-problems:
\begin{enumerate}
\item Determine elements $\alpha\in k$ and $\beta\in K$ such that
  $k=\F_q(\alpha)$, and such that there exists an
  embedding $\phi$ mapping $\alpha\mapsto\beta$. We refer to this
  problem as the \emph{embedding description problem}.
  It is easily seen that $\alpha$ and $\beta$ describe an embedding
  if and only if they share the same minimal polynomial.
\item Given elements $\alpha$ and $\beta$ as above, given $\gamma\in
  k$ and $\delta\in K$, solve the following problems:
  \begin{itemize}
  \item Compute $\phi(\gamma)\in K$.
  \item Test if $\delta\in\phi(k)$.
  \item If $\delta\in\phi(k)$, compute $\phi^{-1}(\delta)\in k$.
  \end{itemize}
  We refer collectively to these problems as the \emph{embedding
    evaluation problem}.
\end{enumerate}

\paragraph{Motivation, previous work}
The first to get interested in this problem was H.~Lenstra: in his
seminal paper~\cite{LenstraJr91} he shows that it can be solved in
deterministic polynomial time, by using a representation for finite
fields that he calls \emph{explicit data}.\footnote{Technically,
  Lenstra only proved his theorem in the case where $k$ and $K$ are
  isomorphic; however, the generalization to the embedding problem
  poses no difficulties.} %
In practice, the embedding problem arises naturally when designing a
computer algebra system: as soon as a system is capable of
representing arbitrary finite fields, it is natural to ask it to
compute the morphisms between them. %
Ultimately, by representing effectively the lattice of finite fields
with inclusions, the user is given access to the algebraic closure of
$\F_q$. %
The first system to implement a general embedding algorithm was
Magma~\cite{MAGMA}. %
As detailed by its developers~\cite{bosma+cannon+steel97}, it used a
much simpler approach than Lenstra's algorithm, entirely based on
polynomial factorization and linear algebra. %
Lenstra's algorithm was later revived by
Allombert~\cite{Allombert02,Allombert02-rev} who modified some key
steps in order to make it practical; his implementation has since been
part of the PARI/GP system~\cite{Pari}.

Meanwhile, a distinct family of algorithms for the embedding problem
was started by Pinch~\cite{Pinch}, and later improved by
Rains~\cite{rains2008}. %
These algorithms, based on principles radically different from
Lenstra's, are intrinsically probabilistic. %
Although their worst-case complexity is no better than that of
Allombert's algorithm, they are potentially much more efficient on a
large set of parameters. %
This potential was understood by Magma's developers, who implemented
Rains' algorithm in Magma v2.14.%
\footnote{As a matter of fact, Rains' algorithm was never published;
  the only publicly available source for it is in Magma's source code
  (file \texttt{package/Ring/FldFin/embed.m}, since v2.14).}

With the exception of Lenstra's work, the aforementioned papers were
mostly concerned with the practical aspects of the embedding
problem. %
While it was generally understood that computing embeddings is an
easier problem than general polynomial factoring, no results on its
complexity more precise than Lenstra's had appeared until recently. %
A few months before the present paper was finalized, Narayanan
published a novel generalization of Allombert's
algorithm~\cite{narayanan2016fast}, based on elliptic curve
computations, and showed that its (randomized) complexity is at most
quadratic.
Narayanan's
  generalization relies on the fact that Artin--Schreier and Kummer
  theories are special cases of a more general situation:
  as already emphasized by Couveignes and Lercier~\cite{CL08}, whereas
  the former theory acts on the additive group of a finite field, and
  the latter on its multiplicative group, they can be extended to more
  general commutative algebraic groups, in particular to elliptic
  curves.

\paragraph{Our contribution}
This work aims to be, in large part, a complete review of all known
algorithms for the embedding problem; we analyze in detail the cost of 
existing algorithms and introduce several new variants. %
After laying out the foundations in the next section, we start with
algorithms for the embedding description problem. %

Section~\ref{sec:kummer} describes the family of algorithms based
(more or less loosely) on Lenstra's work; we call these
\emph{Kummer-type} algorithms. %
In doing so, we pay a particular attention to Allombert's algorithm:
to our knowledge, this is the first detailed and complete complexity
analysis of this algorithm and its variants. %
Thanks to our work on asymptotic complexity, we were able to devise
improvements to the original variants of Allombert that largely
outperform them both in theory and practice. %
One notable omission in this section is Narayanan's algorithm, which
is, in our opinion, mostly of theoretical rather than practical
interest. %
We present instead in Subsection~\ref{sec:fast-algor-large} a simpler
algorithm with essentially the same complexity.

In Section~\ref{sec:rains-algorithm} we describe Rains' algorithm. %
Rains' original preprint~\cite{rains2008} went unpublished, thus we
give here a complete description and analysis of his algorithm, for
reference. %
We also give new variants of Rains' algorithm of lesser
interest in Appendix~\ref{app:rains-vars}.

Then, in Section~\ref{sec:rains-elliptic} we present a generalization
of Rains' algorithm using elliptic curves. %
The possibility of this algorithm was hinted at by Rains, but never
fully developed; we show that it is indeed possible to use
\emph{elliptic periods} to solve the embedding description problem,
and that the resulting algorithm behaves well both in theory and in
practice. %
While working out the correctness proof of the elliptic variant of
Rains' algorithm, we encounter an unexpected difficulty: whereas roots
of unity enjoy Galois properties that guarantee the success of Rains'
original algorithm, points of elliptic curves fail to provide the
same. %
Heuristically, the failure probability of the elliptic variant is
extremely small, however we are not able to prove it formally. %
Our experimental searches even seem to suggest that the failure
probability might be, surprisingly, zero. %
We state this as a conjecture on elliptic periods (see
Conjecture~\ref{conj:ellperiods}); our findings and supporting
evidence are summarized in Appendix~\ref{app:ellprdsdata}.

Section~\ref{sec:selection} does a global comparison of all the
algorithms presented previously. %
In particular, Rains' algorithm and variants require a non-trivial
search for parameters, which we discuss thoroughly. %
Then we present an algorithm to select the best performing embedding
description algorithm from a practical point of view. %
This theoretical study is complemented by the experimental
Section~\ref{sec:experimental-results}, where we compare our
implementations of all the algorithms; our source code is made
available through the Git repository
\url{https://github.com/defeo/ffisom} for replication and further
scrutiny.

Our review could not be complete without a presentation of all the
embedding evaluation algorithms, which we undertake in
Section~\ref{sec:eval}. %
Given that the algorithms of this section are much more classical and
well understood, we only give a theoretical presentation, with no
experimental support. %

In conclusion, we hope that our review will constitute a reference
guide for researchers and engineers interested in implementing
embeddings of finite fields in a computer algebra system.

\paragraph{Acknowledgments}
We would like to thank Eric M. Rains for sharing his preprint with
us. %
We also thank Bill Allombert, Christian Berghoff, Jean-Marc
Couveignes, Reynald Lercier, and Benjamin Smith for fruitful
discussions.

\section{Preliminaries}
\label{sec:preliminaries}

\subsection{Fundamental algorithms and complexity}
\label{sec:fundamentalgo}
We review the fundamental building blocks that constitute the
algorithms presented next.  We are going to measure all complexities
in number of operations $+$, $\times$, $\div$ in $\F_q$, unless
explicitly stated otherwise. Most of the algorithms we present are
randomized; we use the \emph{big-Oh} notation $O(\,)$ to express {\em average}
asymptotic complexity, and we will make it clear when this complexity
depends on heuristics. We also occasionally use the notation
$\tildO(\,)$ to neglect logarithmic factors in the parameters.

We let $\MM(m)$ be a function such that polynomials in $\F_q[X]$ of
degree less than $m$ can be multiplied in $\MM(m)$ operations in
$\F_q$, under the assumptions of~\cite[Ch.~8.3]{vzGG}, together with
the slightly stronger one, that $\MM(mn)$ is in $O(m^{1+\varepsilon}
\MM(n))$ for all $\varepsilon > 0$. Using FFT multiplication, one can
take $\MM(m)\in O(m\log(m) \loglog(m))$~\cite{cantor+kaltofen91}.

We denote by $\omega$ the \emph{exponent of linear algebra}, i.e.\ a
constant such that $m\times m$ matrices with coefficients in any field
$k$ can be multiplied using $O(m^\omega)$ additions and
multiplications in $k$. One can take $\omega < 2.38$, the best result
to date being in~\cite{LeGall14}; on the other hand, we also suppose
that $\omega > 2$.

The algorithms presented in the next sections perform computations in
ring extensions of finite fields. Some of these extensions also happen
to be finite fields. As customary, if $k$ is a finite field and $\xi$
is some element of an algebraic extension of $k$, we will write
$k[\xi]$ for the ring generated by $\xi$. To avoid confusion, when the
extension generated by $\xi$ is a finite field, we will write instead
$k(\xi)$.

Some algorithms will operate in a polynomial ring $k[Z]$, where $k$ is
a field extension of $\F_q$; some other algorithms will operate in
$k[Z]/h(Z)$, where $h$ is a monic polynomial in $k[Z]$. We review the
basic operations in these rings. We assume that $k$ is represented as
a quotient ring $\F_q[X]/f(X)$, with $m=\deg f$, and we let $s=\deg h$
in the complexity estimates.

Multiplying and dividing polynomials of degree at most $s$ in $k[Z]$
is done in $O(\MM(sm))$ operations in $\F_q$, using Kronecker's
substitution~\cite{moenck76,kaltofen87,vzGG,vzgathen+shoup92,harvey09}.
Multiplication in $k[Z]/h(Z)$ is also done in $O(\MM(sm))$ operations using the
technique in~\cite{pascal+schost06}. By the same techniques, gcds of
degree $m$ polynomials in $k[Z]$ and inverses in $k[Z]/h(Z)$ are
computed in $O(\MM(sm)\log(sm))$ operations.

Given polynomials $e,g,h \in k[Z]$ of degree at most $s$, modular
composition is the problem of computing $e(g) \bmod h$. An upper bound
on the algebraic complexity of modular composition is obtained by the
Brent--Kung algorithm~\cite{brent+kung}; under our assumptions on the
respective costs of polynomial and matrix multiplication, its cost is
$O(s^{(\omega+1)/2}\MM(m))$ operations in $\F_q$
(so if $k=\F_q$, this is $O(s^{(\omega+1)/2})$). In the binary RAM
complexity model, the Kedlaya--Umans algorithm~\cite{KeUm11} and its
extension in~\cite{PoSc13a} yield an algorithm with essentially linear
complexity in $s$, $m$ and $\log(q)$. Unfortunately, making these
algorithms competitive in practice is challenging; we are not aware of
any implementation of them that would outperform Brent and Kung's
algorithm. 

\begin{note}\label{note:multimc}
If we have several modular compositions of the form $e_1(g) \bmod
h,\dots,e_t(g) \bmod h$ to compute, we can slightly improve the obvious
bound $O(ts^{(\omega+1)/2})$ (we discuss here $k=\F_q$, so $m=1$). If
$t=O(s)$, using~\cite[Lemma~4]{kaltofen+shoup98}, this can be done in
time $O(t^{(\omega-1)/2}s^{(\omega+1)/2})$. If $t=\Omega(s)$, this can
be done in $O(t s^{\omega-1})$ operations, by computing $1,g,\dots,g^{s-1}$
modulo $f$, and doing a matrix product in size $s \times s$ by
$s \times t$.
\end{note}

\paragraph{Frobenius evaluation} Consider an $\F_q$-algebra $Q$,
and an element $\alpha$ in $Q$. Given integers $c,d$, we will have to
compute expressions of the form
\[
\sigma_d= \alpha^{q^d}, \quad \tau_d = \sum_{i=0}^{d-1} \alpha^{q^{ci}}, \quad
\mu_d=\alpha^{\lfloor q^d/c\rfloor} \enspace .
\]
A direct binary powering approach
would yield a complexity of, e.g., $O(d\log(q))$ multiplications in $Q$
for the first expression.

To do better, we use a recursive approach that goes back
to~\cite{von1992computing}, with further ideas borrowed from~\cite{shoup94,kaltofen+shoup97}.
For $i \ge 1$, define integers $A_i, B_i$ as follows
\begin{equation*}
  q^i = A_ic + B_i, \quad 0\le B_i < c.
\end{equation*}
Then, we have the relations
$$\sigma_{i+j}=\sigma_j^{q^i}, \quad \tau_{i+j}=\tau_i +
\tau_j^{q^{ic}}, \quad \mu_{i+j}=\mu_j^{q^i}
\mu_i^{B_j}\alpha^{\lfloor B_iB_j / c \rfloor}.$$ Since we are
interested in $\sigma_d,\tau_d$ and $\mu_d$, using an addition chain
for $d$, we are left to perform $O(\log(d))$ steps as above.

To perform these operations, we will make a heavy use of a technique
originating in~\cite{von1992computing}. In its simplest form, it
amounts to the following: if $Q=\F_q[X]/f(X)$, for some polynomial $f$
in $\F_q[X]$, and $\beta$ is in $Q$, we can compute $\beta^{q}$ by
means of the modular composition $\beta(\xi)$, where $\xi = x^{q}$ and
$x$ is the image of $X$ modulo $f$.

In the following proposition, we discuss
versions of this idea for various kinds of algebras $Q$, and how they
allow us to compute the expressions $\sigma_d, \tau_d, \mu_d$ defined
above.

\begin{proposition}
\label{prop:trace-like}
Let $f \in \F_q[X]$ be a polynomial of degree $m$,
and define the $\F_q$-algebra $Q = \F_q[X]/f(X)$.
Let $h \in Q[Z]$ be a polynomial of degree $s$,
and define the $Q$-algebra $S =  Q[Z]/h(Z)$.
Finally, whenever $h \in \F_q[Z]$, define
the $\F_q$-algebra $Q' = \F_q[Z]/h(Z)$.

Denote by $T_Q, T_S, T_{Q'}$ the cost, in terms of $\F_q$-operations,
of one modular composition in $Q, S, Q'$ respectively. %
Also denote by $T_{Q,t}\le tT_Q$ (resp. $T_{S,t}, T_{Q',t}$) the cost of $t$
modular compositions sharing the same polynomial (see
Note~\ref{note:multimc}).

Then the expressions
\[
\sigma_d= \alpha^{q^d}, \quad \tau_d = \sum_{i=0}^{d-1} \alpha^{q^{ci}}, \quad
\mu_d=\alpha^{\lfloor q^d/c\rfloor}
\]
can be computed using the following number of operations:
\begin{enumerate}[label=\textbf{Case~\theenumi.},leftmargin=*, align=left]
\item $\alpha \in Q$:
\begin{itemize}
\item
$\sigma_d$: $O(\MM(m) \log(q) + T_Q \log(d))$,
\item
$\tau_d$: $O(\MM(m) \log(q) + T_Q \log(d c))$,
\item
$\mu_d$: $O(\MM(m) \log(q) + (T_Q + \MM(m)\log(c))\log(d))$;
\end{itemize}
\item $\alpha \in Q$ with $f | X^r - 1$:
\begin{itemize}
\item
$\sigma_d$: $O(\MM(m) \log(q) + \MM(r) \log(d))$,
\item
$\tau_d$: $O(\MM(m) \log(q) + \MM(r) \log(d c))$,
\item
$\mu_d$: $O(\MM(m) \log(q) + (\MM(r) + \MM(m)\log(c))\log(d))$;
\end{itemize}
\item $\alpha \in S$:
\begin{itemize}
\item
$\sigma_d$: $O(\MM(ms)\log(q) + (T_{Q,s}+T_S)\log(d))$,
\item
$\tau_d$: $O(\MM(ms)\log(q) + (T_{Q,s}+T_S)\log(d c))$,
\item
$\mu_d$: $ O(\MM(ms)\log(q) + (T_{Q,s}+T_S + \MM(ms)\log(c))\log(d))$;
\end{itemize}
\item $\alpha \in S$ with $h \in \F_q[Z]$:
\begin{itemize}
\item
$\sigma_d$: $O((\MM(m)+\MM(s))\log(q) + (T_{Q,s}+T_{Q',m})\log(d)$,
\item
$\tau_d$: $O((\MM(m)+\MM(s))\log(q) + (T_{Q,s}+T_{Q',m})\log(d c))$,
\item
$\mu_d$: $O((\MM(m)+\MM(s))\log(q) + (T_{Q,s}+T_{Q',m} + (m\MM(s)+s\MM(m))\log(c))\log(d))$;
\end{itemize}
\item $\alpha \in S$ with $h | X^r - a$ for $a \in Q$:
\begin{itemize}
\item
$\sigma_d$: $O(\MM(m)\log(q) + (T_{Q,s}+ \MM(mr))\log(d))$,
\item
$\tau_d$: $O(\MM(m)\log(q) + (T_{Q,s} + \MM(mr))\log(d c))$,
\item
$\mu_d$: $O(\MM(m)\log(q) + (T_{Q,s} + \MM(mr) + \MM(m)\log(c))\log(d))$.
\end{itemize}
\end{enumerate}
\end{proposition}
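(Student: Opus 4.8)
The plan is to establish all five cases by the same recursive scheme: reduce the computation of $\sigma_d,\tau_d,\mu_d$ to $O(\log d)$ (or $O(\log dc)$) applications of the doubling relations
$$\sigma_{i+j}=\sigma_j^{q^i},\quad \tau_{i+j}=\tau_i+\tau_j^{q^{ic}},\quad \mu_{i+j}=\mu_j^{q^i}\mu_i^{B_j}\alpha^{\lfloor B_iB_j/c\rfloor},$$
where each ``step'' of this recursion requires (a) the application of an iterated Frobenius $\xi\mapsto\xi^{q^i}$ to an element of the ambient algebra, and (b) a constant number of multiplications and small powerings in that algebra. The whole proof therefore comes down to two bookkeeping tasks, done once per case: bound the cost of a single iterated-Frobenius application, and bound the cost of a multiplication, then sum over the $O(\log d)$ steps and add the one-time cost of setting up the Frobenius.

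\textbf{Setup of the recursion.} First I would record the base case: $\sigma_1=\alpha^q$, $\tau_1=\alpha$, $\mu_1=\alpha^{\lfloor q/c\rfloor}$, each computed by binary powering in the relevant algebra, costing $O((\text{mult cost})\log q)$ — this is the source of every $\log q$ term. Then, following~\cite{von1992computing,shoup94,kaltofen+shoup97}, I would observe that applying $\xi\mapsto\xi^{q}$ to an arbitrary element $\xi$ of an algebra presented as $R[Z]/(h)$ is, once the single element $Z^q\bmod h$ (and, recursively, the Frobenius on the coefficients) is known, a modular composition; iterating, $\xi\mapsto\xi^{q^i}$ costs $O(\log i)$ modular compositions after an $O(\log q)$-cost precomputation of the needed Frobenius images. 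Using an addition chain for the target exponent $d$, the three doubling relations above are invoked $O(\log d)$ times (with an extra $O(\log c)$ charge hidden in the $\mu$ relation because of the $\mu_i^{B_j}$ and $\alpha^{\lfloor B_iB_j/c\rfloor}$ factors, whose exponents are bounded by $c$); this accounts for the shape $O(\cdots\log q + (\cdots)\log d)$ of every bound, and the $\log(dc)$ in the $\tau_d$ lines comes from the shift by $q^{ic}$ rather than $q^i$.

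\textbf{Case-by-case accounting.} What then distinguishes the five cases is purely the cost of one modular composition and one multiplication in the ambient algebra, together with how the Frobenius decomposes across the tower:
\emph{Case 1} ($\alpha\in Q=\F_q[X]/f$): one modular composition costs $T_Q$, multiplication costs $O(\MM(m))$, directly giving the stated bounds.
\emph{Case 2} ($f\mid X^r-1$): here $x^{q}$ is just a cyclic shift of exponents modulo $r$, so iterated Frobenius is a permutation of coefficients realized by a multiplication in $\F_q[X]/(X^r-1)$, replacing $T_Q$ by $\MM(r)$.
\emph{Case 3} ($\alpha\in S=Q[Z]/h$): the Frobenius on $S$ sends $Z\mapsto Z^q\bmod h$ and acts on the $s$ coefficients (in $Q$) simultaneously — that is $s$ modular compositions in $Q$ sharing a polynomial, cost $T_{Q,s}$, plus one modular composition in $S$, cost $T_S$; multiplication in $S$ is $O(\MM(ms))$.
\emph{Case 4} ($h\in\F_q[Z]$): now $S\cong Q\otimes_{\F_q}Q'$, so the Frobenius factors as the Frobenius on $Q$ (applied to $s$ coefficients: $T_{Q,s}$) and the Frobenius on $Q'=\F_q[Z]/h$ (applied to $m$ coefficients: $T_{Q',m}$), and multiplication splits into the cheaper $O((\MM(m)+\MM(s)))$-per-slot bookkeeping, with the $m\MM(s)+s\MM(m)$ term in the $\mu$ line coming from the small powerings done coordinatewise.
\emph{Case 5} ($h\mid X^r-a$, $a\in Q$): as in Case 2 the $Z$-part of the Frobenius becomes a shift, so the $T_S$ of Case 3 collapses to $\MM(mr)$, while the coefficient part still costs $T_{Q,s}$.

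\textbf{The main obstacle.} The routine part is the arithmetic; the delicate point is \emph{Case 4}, the tensor-product splitting. One must check that when $h\in\F_q[Z]$ the Frobenius of $S$ genuinely decomposes as the ``product'' of the Frobenius of $Q$ acting on coefficients and the Frobenius of $Q'$ acting on the $Z$-variable — i.e. that $\left(\sum_i a_iZ^i\right)^q=\sum_i a_i^q\,(Z^q\bmod h)^i$ with the two operations commuting — and, more importantly, that the multiplications and $B_j$-th powers in the recursion can be carried out ``slotwise'' so that one never pays the full $\MM(ms)$ but only $O(\MM(m)+\MM(s))$ per step (plus the stated $(m\MM(s)+s\MM(m))\log c$ for $\mu_d$). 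Pinning down exactly which operations are slotwise and which are genuinely in $S$, and confirming that the multi-modular-composition bound $T_{Q,s}$ (resp. $T_{Q',m}$) from Note~\ref{note:multimc} applies because all $s$ (resp. $m$) compositions share a polynomial, is where the real care is needed; the remaining cases are then obtained by specializing or simplifying this analysis.
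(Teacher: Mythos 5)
Your proposal is correct and follows essentially the same route as the paper's proof: the same addition-chain recursion on the doubling relations, iterated Frobenius via modular composition after an $O(\log q)$ precomputation, the Kaltofen--Shoup coefficientwise decomposition giving $T_{Q,s}+T_S$ in Case~3, the tensor splitting giving $T_{Q,s}+T_{Q',m}$ and the cheaper bivariate arithmetic in Case~4, and the specializations in Cases~2 and~5. The only real gloss is Case~5: since $z^r=a\neq 1$, the $Z$-part of the Frobenius is not a pure shift but a shift scaled by $a_i=a^{\lfloor q^i/r\rfloor}$, and these scalars must be carried along the recursion (via the same $\mu$-type relation, with $a_1$ obtained by an $O(\MM(m)\log q)$ powering in $Q$) before Horner evaluation and a degree-$r$ Euclidean division yield the claimed $O(\MM(mr))$ per step --- an extra cost that still fits inside the stated bound, as the paper checks explicitly.
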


\begin{proof}
The complexity estimates mostly rely on the complexity of modular composition.
\begin{enumerate}[label=\textbf{Case~\theenumi.},leftmargin=*, align=left]
\item
  We let $x$ be the image of $X$ in $Q$, and we
  start by computing $x^q$, using $O(\MM(m)\log(q))$ operations in
  $\F_q$.

  For $i \ge 0$, given $\xi_i = x^{q^i}$ and $\beta$ in $Q$, we can
  compute $\beta^{q^i}$ as $\beta^{q^i}=\beta(\xi_i)$, using
  $T_Q=O(m^{(\omega+1)/2})$ operations; in particular, this allows us to
  compute $\xi_{i+j}$ from the knowledge of $\xi_i$ and $\xi_j$. Given
  an addition chain for $d$, we thus compute all corresponding
  $\xi_i$'s, and we deduce the $\sigma_i$'s similarly, since
  $\sigma_{i+j}=\sigma_j(\xi_j)$. Altogether, starting from
  $\xi_1=x^q$, this gives us $\sigma_d$ for $O(T_Q\log(d))$
  further operations in $\F_q$.

  The same holds for $\tau_d$, with a cost in $O(T_Q\log(cd))$,
  since we have to compute $\xi_c$ first;
  and for $\mu_d$, with a cost in
  $O((T_Q + \MM(m)\log(c))\log(d))$ operations,
  as the formula for $\mu_{i+j}$ shows that
  we can obtain it by means of a modular composition
  (to compute $\mu_j^{q^i}=\mu_j(\xi_i)$), together with
  two exponentiations of indices less than~$c$.

  The costs for computing $\sigma_d, \tau_d, \mu_d$ follow immediately.

\item
  When $f$ divides $X^r-1$, we obtain $\beta(\xi_i)$ by computing
  $\beta(X^{q^i \bmod r}) \bmod (X^r-1)$ first, and then reducing
  modulo $f(X)$.  %
  Thus, the cost of one modular composition is $T_Q=O(\MM(r))$, and
  the total cost is obtained by replacing this value in the estimates
  for the previous case.

\item
  We let $x$ and $z$ be the respective images of $X$ in $Q$ and $Z$ in $S$,
  and as a first step, we compute $z^q$ (and $x^q$, unless $f$ is as
  in Case~2 above), in $O(\MM(ms)\log(q))$ operations.
  
  In order to compute
  the quantities $\sigma_d,\tau_d,\mu_d$, we apply the same strategy
  as above; the key factor for complexity is thus the cost of
  computing $\beta^{q^i}$, for $\beta$ in $S$, given $\zeta_i =
  z^{q^i}$ and $\xi_i=x^{q^i}$ (as we did in Case~1, we apply this
  procedure to our input element $\alpha$, as well as to $\zeta_i$
  itself, and $\xi_i$, in order to be able to continue the
  calculation).

  To do so, we use an algorithm by Kaltofen and Shoup~\cite{kaltofen+shoup97},
  which boils down to writing $\beta=\sum_{j=0}^{s-1} c_j(x) z^j$, so
  that $\beta^{q^i} = \sum_{j=0}^{s-1} c_j(x)^{q^i} \zeta_i^j.$ The
  $s$ coefficients $c_j(x)^{q^i}$ are computed by applying the
  previous algorithms in $Q$ to $s$ inputs. This takes time at most
  $sT_Q$, but as pointed out in Note~\ref{note:multimc}, improvements 
  are possible if we base our algorithm on modular composition;
  we thus denote the cost $T_{Q,s}$.

  Then, we do a modular composition in $S$ to evaluate the result at
  $\zeta_i$; this latter step takes $T_S=O(s^{(\omega+1)/2} \MM(m))$
  operations in~$\F_q$.

\item
  The cost for computing $z^q$ is
  $O(\MM(s)\log(q))$ and that for computing $x^q$ is $O(\MM(m)\log(q))$.
  In the last step, the cost $T_S$ of modular
  composition in $S$ is now that of $m$ modular compositions in degree
  $s$ (with the same argument), as detailed in
  Note~\ref{note:multimc}, that we denote $T_{Q',m}$.
  Similarly, the cost of multiplication in $S$ can be reduced
  from $O(\MM(ms))$ to $O(s\MM(m) + m\MM(s))$ operations.

\item
  We start by computing $x^q$, using $O(\MM(m)\log(q))$ operations in
  $\F_q$.

  For $\beta$ as above, suppose that we have already computed all
  coefficients $d_j(x)=c_j(x)^{q^i}$ in $O(T_{Q,s})$ operations;
  we now have to compute $\beta^{q^i} = \sum_{j=0}^{s-1} d_j(x) \zeta_i^j$.

  We first do the calculation modulo $Z^r-a$ rather than modulo $h$;
  that is, we compute $\sum_{j=0}^{s-1} d_j(x) z_i^j$ where
  $z_i=z^{q^i}$.
  Because $z^r=a$, we
  have $z_i = a_i z^{q^i \bmod r}$, with $a_i =
  a^{\lfloor q^i/r\rfloor}$. If we assume that $a_i$ is
  known, we can compute $\sum_{j=0}^{s-1} d_j(x) z_i^j$ using
  Horner's method, in time $O(s \MM(m))$, and we reduce this result
  modulo $h$, for the cost $O(\MM(mr))$ of a Euclidean division in
  degree $r$ in $Q[Z]$. 

  In order to continue the calculation for all indices in our addition
  chain, we must thus compute the corresponding $a_i$'s as well,
  just like the $\mu_i$'s;
  this takes $O(T_{Q}+\MM(m)\log(r))$ operations.

  Since the first stage of the algorithm took $O(T_{Q,s})$ operations,
  we can take $T_S = O(\MM(mr))$
  for computing $\beta^{q^i}$.
  
  To initiate the procedure, the algorithm also needs to compute
  $a_1=a^{\lfloor q/r\rfloor}$, using $O(\log(q))$
  multiplications in $Q$ for a cost $O(\MM(m)\log(q))$. \qedhere
\end{enumerate}
\end{proof}

\paragraph{Computing subfields}
With $k = \F_q[X]/f(X)$ and $\deg f=m$ as above, we are given a
divisor $r$ of $m$, and we want to construct an intermediate extension
$\F_q \subset L \subset k$ of degree $r$ over $\F_q$. More precisely,
we want to compute a monic irreducible polynomial $g \in \F_q[X]$ of
degree $r$, and a polynomial $h \in \F_q[X]$ such that
$x \mapsto h(x)\bmod f$ defines an embedding
$L = \F_q[X] / g(X) \hookrightarrow k$. We proceed as follows.

Let $\alpha\in k$ be a random element. Then $\alpha$ has a minimal polynomial of degree $m$ 
over $\F_q$ with high probability. In other words, one needs $O(1)$ such random elements to find 
one with degree $m$ minimal polynomial. Now, the trace
\begin{equation}
	\label{equ:trace-simple}
	\trace_{k/L}(\alpha) = \alpha + \alpha^{q^r} + \cdots + \alpha^{q^{m - r}}
\end{equation}
has a minimal polynomial of degree $r$ over $\F_q$ with high
probability as well. This means we can compute, after $O(1)$ random
trials, the desired polynomials $\beta = \trace_{k/L}(\alpha)$,
its minimal polynomial
$g$, and $h$ the polynomial of degree less than $m$ representing $\beta$.

\begin{proposition}
	\label{prop:subfield}
	Let $\F_q \subset k$ be a finite extension of degree $m$, and
        let $r$ be a divisor of $m$.  Computing an intermediate field
        $\F_q \subset L \subset k$ with $[L:\F_q]=r$ takes an expected
        $O(m^{(\omega+1)/2}\log(m) + \MM(m)\log(q))$ operations in $\F_q$.  Once
        $L$ is computed, any element $\gamma\in L$ can be lifted to
        its image in $k$ using $O(m^{(\omega+1)/2})$ operations.
\end{proposition}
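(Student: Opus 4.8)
The plan is to carry out the two-step recipe described just before the statement, reading off the costs from Proposition~\ref{prop:trace-like}.

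\textbf{Construction.} Let $L\subseteq k$ be the unique subfield with $[L:\F_q]=r$. First I would pick $\alpha\in k$ uniformly at random and compute
\[
\beta \;=\; \trace_{k/L}(\alpha)\;=\;\sum_{i=0}^{m/r-1}\alpha^{q^{ri}}\ \in\ L,
\]
represented as a polynomial $h\in\F_q[X]$ of degree $<m$ with $\beta = h(x)\bmod f$. Next I would compute the minimal polynomial $g=\minpoly_{\F_q}(\beta)$ and check that $\deg g = r$; if not, I discard $\alpha$ and start over. When $\deg g = r$, the polynomial $g$ is irreducible of degree $r$ and $\F_q[X]/g(X)\cong\F_q(\beta)$; since $k$ has a unique subfield of degree $r$, this copy of $\F_q[X]/g(X)$ is precisely $L$, so $(g,h)$ is the desired data, the embedding being $\bar X\mapsto h(x)\bmod f$. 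To lift $\gamma\in\F_q[X]/g(X)$, I would compute $\gamma(h)\bmod f$.

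\textbf{Number of trials.} Since $\trace_{k/L}$ is a surjective $\F_q$-linear map, $\beta$ is uniformly distributed in $L$ when $\alpha$ is uniform in $k$. Hence $\deg g<r$ exactly when $\beta$ lands in one of the (at most $\log_2 r$) maximal subfields of $L$, which happens with probability at most $\sum_{\ell\mid r\text{ prime}}q^{-r(1-1/\ell)}$; a short check shows this is bounded away from $1$ (indeed $<1/2$) for every prime power $q$ and every $r$. So an expected $O(1)$ trials suffice. (The same kind of estimate shows $\alpha$ itself generates $k$ with probability bounded away from $0$, should one also want that.)

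\textbf{Costs.} The element $\beta$ is exactly the quantity $\tau_d$ of Proposition~\ref{prop:trace-like} in Case~1, taken with $Q=\F_q[X]/f(X)$, $c=r$ and $d=m/r$: its computation costs $O(\MM(m)\log q + T_Q\log(cd))$ where $T_Q=O(m^{(\omega+1)/2})$ and $cd=m$, i.e.\ $O(\MM(m)\log q + m^{(\omega+1)/2}\log m)$ (the $\log q$ coming from the one Frobenius $x\mapsto x^q$, the $\log m$ from the $O(\log m)$ modular compositions in the addition-chain recursion). Computing $g$ and testing its degree can be done with the standard randomized minimal-polynomial algorithm --- Krylov sequence of multiplication by $\beta$, power projections via transposed Brent--Kung, Berlekamp--Massey, then a single check $g(\beta)=0$ --- for $\tildO(m^{(\omega+1)/2})$ operations, which is absorbed in the bound above; I would avoid the naive route of forming $\beta,\beta^2,\dots,\beta^{2r}$, which would cost $O(m\,\MM(m))$. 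Multiplying by the $O(1)$ expected number of trials yields the claimed $O(m^{(\omega+1)/2}\log m + \MM(m)\log q)$. Finally, lifting $\gamma$ is the modular composition $\gamma(h)\bmod f$ with all three polynomials of degree at most $m$, hence $O(m^{(\omega+1)/2})$ by Brent--Kung.

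\textbf{Main obstacle.} There is no deep obstacle: once Proposition~\ref{prop:trace-like} is available, the construction is routine. The two points needing a little care are (i) turning the informal ``with high probability'' into an honest constant lower bound on the success probability, uniform in $q$ and $r$, so that $O(1)$ trials are really enough; and (ii) selecting a minimal-polynomial subroutine whose cost genuinely fits within $O(m^{(\omega+1)/2}\log m)$ rather than the $\tildO(m^2)$ of the obvious method.
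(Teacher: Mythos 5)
Your proposal is correct and follows essentially the same route as the paper: compute $\beta=\trace_{k/L}(\alpha)$ as the expression $\tau_d$ of Proposition~\ref{prop:trace-like} (Case~1, $c=r$, $d=m/r$) in $O(m^{(\omega+1)/2}\log(m)+\MM(m)\log(q))$ operations, obtain its minimal polynomial by a fast (Shoup-style) minimal-polynomial routine in $O(m^{(\omega+1)/2})$, and lift elements of $L$ by a single modular composition. Your probability analysis is a slightly more explicit version of the paper's ``$O(1)$ random trials'' claim --- arguing directly that $\beta$ is uniform in $L$ because the trace is a surjective $\F_q$-linear map, rather than passing through the condition that $\alpha$ generate $k$ --- but this is a refinement of, not a departure from, the paper's argument.
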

\begin{proof}
  Computing the minimal polynomial of an element in $k$ takes
  $O(m^{(\omega+1)/2})$ operations in $\F_q$, see~\cite{shoup93}. %
  The trace in Eq.~\eqref{equ:trace-simple} is computed as the
  expression $\tau_m$ of the previous paragraph (with $c=r$ and
  $d=m/r$), at a cost of $O(m^{(\omega+1)/2}\log(m)+\MM(m)\log(q))$
  operations in $\F_q$.
  
  Finally, given an element $\gamma\in L$, its image in $k$ is
  computed by evaluating $h(\gamma)$, where $h$ is the polynomial
  representation of $\trace_{k/L}(\alpha)$. This can be done by a
  modular composition at cost $O(m^{(\omega+1)/2})$.
\end{proof}

\paragraph{Root finding in cyclotomic extensions}
Given a field $k = \F_q[X]/f(X)$ of degree $m$ as above, we will need
to factor some special polynomials in $k[Z]$: we are interested in
finding one factor of a polynomial that splits into factors of the
same, known, degree. This problem is known as \emph{equal degree
  factorization} (EDF), and the best generic algorithm for it is the
Cantor--Zassenhaus method~\cite{cantor1981,von1992computing}, which
runs in $O(\MM(sm)(dm\log(q) + \log(sm)))$ operations in
$\F_q$~\cite[Th.~14.9]{vzGG}, where $s$ is the degree of the
polynomial to factor, and $d$ is the degree of the factors.

More efficient variants of the Cantor--Zassenhaus method are known for
special cases. When the degree $s$ of the polynomial is small compared
to the extension degree $m$, Kaltofen and Shoup~\cite{kaltofen+shoup97} 
give an efficient algorithm which is as follows.

\begin{algorithm}[Kaltofen--Shoup EDF for extension fields]
	\label{alg:ks}
	\begin{algorithmic}[1]
		\REQUIRE A polynomial $h$  with irreducible factors of degree $d$ over $k=\F_q[X]/f(X)$.
		\ENSURE An irreducible factor of $h$ over $k$.
		\STATE If $\deg h = d$ return $h$.
		\STATE Take a random polynomial $a_0\in k[Z]$ of degree less than $\deg h$,
		\STATE\label{alg:ks-pseudotrace} Compute $\displaystyle a_1 
		\leftarrow \sum_{i=0}^{md-1} a_0^{q^i} \mod h$,
		\IF{$q$ is an even power $q=2^e$}
		\STATE\label{alg:ks:even} Compute $\displaystyle a_2 \leftarrow 
		\sum_{i=0}^{e-1} a_1^{2^i}\mod h$
		\ELSE
		\STATE\label{alg:ks:odd} Compute $a_2 \leftarrow a_1^{(q-1)/2}\mod h$
		\ENDIF
		\STATE\label{alg:ks:gcd} Compute $h_0\leftarrow\gcd(a_2,h)$ and 
		$h_1\leftarrow\gcd(a_2-1,h)$ and $h_{-1}\leftarrow h/(h_0h_1)$,
		\STATE Apply recursively to the smallest non-constant polynomial among 
		$h_0,h_1,h_{-1}$.
	\end{algorithmic}
\end{algorithm}

We refer the reader to the original paper~\cite{kaltofen+shoup97} for
the correctness of the Kaltofen--Shoup algorithm. We are mainly
interested here in its application to root extraction in cyclotomic
extensions. Let $r$ be a prime power and let $f$ be an irreducible
factor of the $r$-th cyclotomic polynomial $\Phi_r$, with $s = \deg
f$. Denote $\F_q[X] / f(X)$ by $\F_q(\zeta)$, where $\zeta$ is the
image of $X$ in the quotient. Given an $r$-th power $\alpha \in
\F_q(\zeta)$ we want to compute an $r$-th root $\alpha^{1 / r}$, or
equivalently a linear factor of $Z^r - \alpha$ over $\F_q(\zeta)$.

We propose two different algorithms; one of them is quadratic in $r$,
whereas the other one has a runtime that depends on $r$ and $s$,
and will perform better for small values of $s$.

\begin{proposition}
  \label{prop:root-fpz}
  Let $r$ be a prime power and let $\zeta$ be a primitive $r$-th root of unity; let also $s = 
  [\F_q(\zeta): \F_q]$. One can take $r$-th roots in 
  $\F_q(\zeta)$ using either 
  $$O(\MM(s)\log(q) + rs^{\omega-1}\log(r)\log(s) + \MM(rs)\log(s)\log(r))$$
  or
  $$O(\MM(s)\log(q) + r\MM(r)\log(s) + \MM(rs)\log(s)\log(r))$$ operations in $\F_q$. 
\end{proposition}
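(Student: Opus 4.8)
The plan is to reduce $r$-th root extraction in $\F_q(\zeta)$ to equal-degree factorization of $Z^r-\alpha$ over $\F_q(\zeta)$ and to invoke the Kaltofen--Shoup algorithm (Algorithm~\ref{alg:ks}), analyzing its cost carefully in the cyclotomic setting where the base field $k=\F_q(\zeta)$ has degree $s=\deg f$ with $f\mid\Phi_r$. First I would observe that $Z^r-\alpha$, once $\alpha$ is known to be an $r$-th power, splits over $\F_q(\zeta)$ into factors all of the same degree $d$: indeed $\F_q(\zeta)$ already contains the $r$-th roots of unity, so the splitting field of $Z^r-\alpha$ is generated by a single root, and the common factor degree $d$ divides $r$ and can be computed cheaply (e.g. as the order of $q$ modulo the relevant power of the prime dividing $r$, or by a gcd with $Z^{q^i}-Z$). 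Then I would feed $h=Z^r-\alpha$ into Algorithm~\ref{alg:ks} with base field $k=\F_q(\zeta)$ and extract a linear factor by iterating down to $d=1$; since we only want one root, the recursion always follows the smallest non-constant branch, contributing only logarithmically many stages.

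The cost analysis is where the two stated bounds come from, and the key is step~\ref{alg:ks-pseudotrace}, the computation of the pseudotrace $a_1=\sum_{i=0}^{sd-1}a_0^{q^i}\bmod h$ (here $md$ in the algorithm statement becomes $sd$ since our base field has degree $s$). This is exactly an expression of type $\tau_d$ from Proposition~\ref{prop:trace-like}, applied in the algebra $S=\F_q(\zeta)[Z]/(Z^r-\alpha)$, which is precisely \textbf{Case~5} of that proposition ($h=Z^r-a$ with $a=\alpha\in Q=\F_q(\zeta)$, so $m\leftarrow s$, $r\leftarrow r$, $s\leftarrow r$). Plugging into Case~5 gives a $\tau$-cost of $O(\MM(s)\log q + (T_{Q,s}+\MM(sr))\log(sd))$; the leading Frobenius set-up is the $O(\MM(s)\log q)$ term, and $\log(sd)=O(\log(rs))$. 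The two variants of the bound then correspond to the two ways of costing the $T_{Q,s}$ term --- the batch of $s$ modular compositions in $Q=\F_q(\zeta)$ with shared modulus $f$: using the $t=\Omega(s)$-style matrix-product estimate from Note~\ref{note:multimc} one gets $T_{Q,s}=O(s\cdot s^{\omega-1}\log r)=O(rs^{\omega-1}\log r)$ after accounting for the $O(\log r)$ Frobenius-doubling steps needed to reach exponent $q^{sd}$ (wait: $T_{Q,s}$ absorbs one composition batch; the $\log d$ outer factor handles the addition chain), whereas using the cyclotomic shortcut of \textbf{Case~2} of Proposition~\ref{prop:trace-like} --- valid because $f\mid\Phi_r\mid X^r-1$ --- each Frobenius in $Q$ costs only $O(\MM(r))$, giving $T_{Q,s}=O(r\MM(r)\log r)$. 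The remaining per-stage arithmetic in $S$ (the gcds $h_0,h_1,h_{-1}$, the odd-characteristic power $a_1^{(q-1)/2}$, the even-characteristic trace) costs $O(\MM(rs)\log(rs))$ by the polynomial-arithmetic bounds from Subsection~\ref{sec:fundamentalgo}, and multiplying through by the $O(\log r)$ recursive stages yields the $\MM(rs)\log(s)\log(r)$ term common to both bounds.

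Assembling: the $O(\log r)$ stages of Algorithm~\ref{alg:ks} each cost one pseudotrace-type computation plus $O(\MM(rs)\log(rs))$ of gcd/exponentiation work; factoring out the one-time $O(\MM(s)\log q)$ Frobenius precomputation (which can be shared across stages since the modulus $f$ is fixed) and collecting the $\tau_d$ contributions gives exactly
\[
O\!\big(\MM(s)\log q + r s^{\omega-1}\log(r)\log(s) + \MM(rs)\log(s)\log(r)\big)
\]
or, via the Case~2 shortcut,
\[
O\!\big(\MM(s)\log q + r\MM(r)\log(s) + \MM(rs)\log(s)\log(r)\big).
\]
I expect the main obstacle to be bookkeeping rather than mathematics: making sure the Frobenius set-up cost $O(\MM(s)\log q)$ is genuinely incurred only once and not once per recursive stage (so that it does not pick up a spurious $\log r$ factor), and carefully matching the parameter substitution $m\leftrightarrow s$, $s\leftrightarrow r$ between the generic Proposition~\ref{prop:trace-like} statement and the present cyclotomic instance --- in particular checking that the ``$T_{Q,s}$'' batch really is $s$ compositions (one per coefficient $c_j(x)$ in the Kaltofen--Shoup expansion) sharing the modulus $f$, so that Note~\ref{note:multimc}'s improvement applies. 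A secondary point to verify is that $\deg h = r$ is the right input size throughout (the polynomial never grows), so that no stage costs more than the first.
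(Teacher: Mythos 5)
Your overall route is the paper's: reduce to finding a linear factor of $Z^r-\alpha$ via Algorithm~\ref{alg:ks} over $k=\F_q(\zeta)$, observe that the pseudotrace in Step~\ref{alg:ks-pseudotrace} dominates, and cost it through Case~5 of Proposition~\ref{prop:trace-like}, the two stated bounds coming from the two ways of doing Frobenius evaluation in $Q=\F_q(\zeta)$ (the Note~\ref{note:multimc} estimate versus the Case~2 shortcut using $f\mid X^r-1$). Two points need fixing, though. First, you hedge about a ``common factor degree $d$'' and speak of ``iterating down to $d=1$'': in fact $d=1$ outright, because $\alpha$ is an $r$-th power in $\F_q(\zeta)$ and $\F_q(\zeta)$ contains all $r$-th roots of unity, so $Z^r-\alpha$ splits into linear factors. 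This is not cosmetic: EDF only produces factors of the common degree $d$, so a root is reachable precisely because $d=1$; and it is what makes the pseudotrace a $\tau_s$ of length $s$, giving the $\log(s)$ factors in the bounds rather than the $\log(sd)=O(\log(rs))$ you write mid-proof (for small $s$, exactly the regime where the second bound is meant to win, $\log(rs)$ versus $\log(s)$ is a real loss).

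Second, and more seriously, your closing claim that ``$\deg h=r$ is the right input size throughout'' and that ``no stage costs more than the first'' forfeits the second bound. The batch of modular compositions in $Q$ has size equal to the degree $n$ of the current factor $h$ (one composition per coefficient $c_j$ in the Kaltofen--Shoup expansion), and the recursion passes to the smallest non-constant factor, so $n$ is (in expectation) at least halved at each of the $O(\log r)$ levels. In the Case~2 variant the per-level cost of this batch is $O(n\MM(r)\log(s))$, and it is only because $\sum n$ over the levels is $O(r)$ that these contributions total $O(r\MM(r)\log(s))$; with your accounting ($n=r$ at every level) you would only obtain $O(r\MM(r)\log(s)\log(r))$, which is weaker than the stated second bound. (The first bound survives your accounting, since there one simply bounds $n\le r$ at each level, as the paper also does, and likewise the $\MM(rs)\log(s)\log(r)$ term is fine.) The repair is exactly the paper's bookkeeping: keep $n$ as a parameter in $T_{Q,n}$, charge the $O(\MM(s)\log q)$ computation of $\alpha^{\lfloor q/r\rfloor}$ once as you did, and sum the resulting geometric series over the recursion.
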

\begin{proof}
We use Algorithm~\ref{alg:ks} with $k=\F_q(\zeta)$, to get a linear
factor of the polynomial $Z^r-\alpha$, so that $d=1$ (note that
$Z^r-\alpha$ splits into linear factors in $k[Z]$). We discuss
Step~\ref{alg:ks-pseudotrace}, which is the dominant step. Let $f \in
\F_q[X]$ be the defining polynomial of $\F_q(\zeta)$ and let $h$ be a
factor of $Z^r-\alpha$ of degree $n$. 

We are in Case~5 of our discussion on Frobenius evaluation, and we
want to compute a trace-like expression of the form $\tau_s$. As per
that discussion, two algorithms are available to do Frobenius
evaluation in $k$ (one of them uses modular composition, the other the
fact that $f$ divides $X^r-1$). Because $s \le r$, we deduce that
$a_1$ can be computed in either
$$O(\MM(s)\log(q) + rs^{\omega-1}\log(s) + \MM(rs)\log(s))$$
or
$$O(\MM(s)\log(q) + n\MM(r)\log(s) + \MM(rs)\log(s))$$ operations in
$\F_q$, where the first term accounts for computing $\alpha^{\lfloor
  q/r \rfloor}$ (so we need only compute it once). The depth of the
recursion in Algorithm~\ref{alg:ks} is $\log(r)$, and the degree $n$
is halved each time, so we obtain the
desired result.
\end{proof}

\paragraph{Root finding in some extensions of cyclotomic extensions}
Let $r = v^d$, where $v \ne p$ is a prime and $d$ is a positive
integer and let $s$ be the order of $q$ in $\Z / v\Z$.  We 
assume that $d \ge 2$, since this will be the case whenever we want to
apply the following.

Consider an extension $\F_q \subset k=\F_q[X]/f(X)$ of degree $r$, and
let $\F_q(\zeta)$ and $k(\zeta)$ be extensions of degree $s$ over
$\F_q$ and $k$ respectively, defined by an irreducible factor of the
$v$-th cyclotomic polynomial over $\F_q$. In this paragraph, we
discuss the cost of computing a $v$-th root in $k(\zeta)$, by adapting
the root extraction algorithm given in~\cite{doliskanischost2011}.

Following~\cite[Algorithm~3]{doliskanischost2011}, one reduces the
root extraction in $k(\zeta)$ to a root extraction in $\F_q(\zeta)$;
note that \cite[Algorithm~3]{doliskanischost2011} reduces the root
extraction to the smallest possible extension of $\F_p$, but
projecting to $\F_q(\zeta)$ is more convenient here. The critical
computation in this algorithm is a trace-like computation performing
the reduction.

\begin{algorithm}
[$v$-th root in $k(\zeta)$]
\label{algorithm:new}
\begin{algorithmic}[1]
\REQUIRE $a \in k(\zeta)^v$
\ENSURE a $v$-th root of $a$
\REPEAT
\STATE choose a random $c \in k(\zeta)$
\STATE $a'\leftarrow ac^v$
\STATE $\lambda \leftarrow {a'}^{(q^s-1)/v}$
\STATE $b \leftarrow 1 + \lambda + \lambda^{1+q^{s}} + \cdots + \lambda^{1+q^{s}+\cdots+q^{(r-2)s}}$
\UNTIL {$b \ne 0$}
\STATE $\beta \leftarrow (a'b^v)^{1/v}$ in $\F_q(\zeta)$
\RETURN $\beta b^{-1}c^{-1}$
\end{algorithmic}
\end{algorithm}

One multiplication in $k(\zeta)$ amounts to doing $r$ multiplications
modulo a degree $s$ factor of $\Phi_v$, and $s$ multiplications modulo
$f$; since $s \le r$, this takes $O(s \MM(r))$ operations in $\F_q$.
The computation of $\lambda = {a'}^{(q^s - 1)/v}= {a'}^{\lfloor
  q^s/v\rfloor}$ can then be done as explained in our discussion on
Frobenius evaluation (Case~4). The cost of each modular composition
 is $O(s^{(\omega-1)/2} r^{(\omega+1)/2})$, for a 
total of $O(s^{(\omega-1)/2} r^{(\omega+1)/2}\log(s) +
s\MM(r)\log(q))$ operations in $\F_q$.

The trace-like computation of $1 + \lambda + \lambda^{1+q^{s}} +
\cdots + \lambda^{1+q^{s}+\cdots+q^{(r-2)s}}$ can be done as follows.
Let $x$ be the image of $X$ in $k=\F_q[X]/f(X)$.  To compute $x^{q^s}$
we first compute $x^q$ using $O(\MM(r)\log(q))$ operations in $\F_q$,
and then do $\log(s)$ modular compositions in $k$.  To compute
$\lambda^{q^s}$, note that an element $\lambda \in k(\zeta)$ can be
written as $\lambda = \lambda_0(x) + \lambda_1(x) \zeta + \cdots +
\lambda_{s - 1}(x) \zeta^{s - 1}$ and that $\zeta^{q^s} = \zeta$.
Therefore for any $i$,
\[
\lambda^{q^{i s}} = \sum_{j = 0}^{s - 1} \lambda_j(x^{q^{i s}}) \left(\zeta^{q^{i s}}\right)^j = \sum_{j = 0}^{s - 1} \lambda_j(x^{q^{i s}}) \zeta^j.
\]
In particular, given $x^{q^{i s}}$, $\lambda^{q^{is}}$ can be computed using $O(s^{(\omega-1)/2}r^{(\omega+1)/2})$ operations in $\F_q$,
and~\cite[Algorithm~2]{doliskanischost2011} can be applied in a direct
way, with a cost of $O(s^{(\omega-1)/2}r^{(\omega+1)/2}\log(r) + \MM(r)\log(q))$ operations in $\F_q$.

The root extraction in $\F_q(\zeta)$ is done as in the previous
paragraph, and have a negligible cost, since we assumed that $s \le v
\le \sqrt{r}$. Therefore, we arrive at the following result.

\begin{proposition}\label{prop:root_high_degree_extension}
With $k$, $\zeta$ and $v$ as above, one can extract $v$-th roots in
$k(\zeta)$ using an expected
$O(s^{(\omega-1)/2}r^{(\omega+1)/2}\log(r) + s\MM(r)\log(q))$
operations in $\F_q$.
\end{proposition}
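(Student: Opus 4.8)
The plan is to read off the cost of Proposition~\ref{prop:root_high_degree_extension} directly from Algorithm~\ref{algorithm:new}, using the estimates already assembled in the paragraph preceding its statement. First I would recall why the algorithm is correct, citing~\cite{doliskanischost2011}: since $a'=ac^v$ is again a $v$-th power, $\lambda = {a'}^{(q^s-1)/v}$ lies in the group $\mu_v$ of $v$-th roots of unity, which is contained in $\F_q(\zeta)^\times$; the quantity $b$ is a ``Hilbert~90''-type sum engineered so that $a'b^v \in \F_q(\zeta)$, hence its $v$-th root $\beta$ can be taken in the small field $\F_q(\zeta)$, and $\beta b^{-1}c^{-1}$ is a $v$-th root of $a$. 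I would also recall from~\cite{doliskanischost2011} that, over a random choice of $c$, the element $b$ is nonzero with probability bounded away from zero; hence the \textbf{repeat} loop terminates after an expected $O(1)$ iterations, and it suffices to bound one pass plus the final root extraction.

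For a single pass, I would invoke three ingredients. One multiplication in $k(\zeta)$ costs $O(s\MM(r))$ operations in $\F_q$, since a product there amounts to $r$ multiplications modulo a degree-$s$ factor of $\Phi_v$ and $s$ multiplications modulo $f$, and $s\le r$. The exponentiation $\lambda = {a'}^{(q^s-1)/v} = {a'}^{\lfloor q^s/v\rfloor}$ is exactly the expression $\mu_s$ of Proposition~\ref{prop:trace-like} in the setting of Case~4, applied to $S = Q[Z]/h(Z)$ with $Q=\F_q[X]/f(X)$ of degree $r$ and $h\in\F_q[Z]$ a degree-$s$ factor of $\Phi_v$; plugging the Brent--Kung bounds for $T_{Q,s}$ and $T_{Q',r}$ (using Note~\ref{note:multimc} and $s\le v\le\sqrt r$) gives cost $O(s^{(\omega-1)/2}r^{(\omega+1)/2}\log(s) + s\MM(r)\log(q))$. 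The trace-like sum $b = 1+\lambda+\lambda^{1+q^s}+\cdots+\lambda^{1+q^s+\cdots+q^{(r-2)s}}$ is handled by~\cite[Algorithm~2]{doliskanischost2011}: because $\zeta^{q^s}=\zeta$, the map $\lambda\mapsto\lambda^{q^{is}}$ reduces to applying $x\mapsto x^{q^{is}}$ to the $s$ coordinates $\lambda_j(x)$, which after precomputing $x^q$ (cost $O(\MM(r)\log(q))$) and $O(\log s)$ modular compositions in $k$ costs $O(s^{(\omega-1)/2}r^{(\omega+1)/2})$ per step, so the $O(\log r)$ addition-chain steps total $O(s^{(\omega-1)/2}r^{(\omega+1)/2}\log(r) + \MM(r)\log(q))$. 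Finally, the root extraction of $a'b^v$ in $\F_q(\zeta)$ is done by Proposition~\ref{prop:root-fpz} on the degree-$s$ cyclotomic extension; since $s\le v\le\sqrt r$, every term it contributes is dominated by the terms already collected, so it is negligible.

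Summing these contributions over the expected $O(1)$ iterations, the term $s^{(\omega-1)/2}r^{(\omega+1)/2}\log(r)$ absorbs its $\log(s)$ variant and the term $s\MM(r)\log(q)$ absorbs $\MM(r)\log(q)$, yielding the stated bound. The main obstacle I anticipate is not any individual cost estimate — those follow mechanically from Proposition~\ref{prop:trace-like} and~\cite{doliskanischost2011} — but being careful about which parameter plays which role when invoking Case~4: here the modulus exponent is $v$ (a factor of $X^v-1$, via $\Phi_v$), not our $r=v^d$, and the ``small degree'' is $s$ while the ``big degree'' is $r$. A secondary point to verify is that the success probability of the \textbf{repeat} loop is genuinely a constant rather than something like $1/v$; I would confirm this against the analysis in~\cite{doliskanischost2011}, and if it turned out to degrade with $v$ one could amplify with a few extra random trials without affecting the asymptotics.
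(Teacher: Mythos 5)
Your proposal is correct and follows essentially the same route as the paper: the paper's argument is precisely the paragraph preceding the proposition, bounding one pass of Algorithm~\ref{algorithm:new} by the cost $O(s\MM(r))$ per multiplication in $k(\zeta)$, the Case~4 Frobenius-evaluation bound for $\lambda={a'}^{\lfloor q^s/v\rfloor}$, the coordinatewise trick $\lambda^{q^{is}}=\sum_j\lambda_j(x^{q^{is}})\zeta^j$ combined with~\cite[Algorithm~2]{doliskanischost2011} for the trace-like sum, and the observation that the final root extraction in $\F_q(\zeta)$ is negligible since $s\le v\le\sqrt r$. Your parameter bookkeeping (degree $r$ for $Q$, degree $s$ for the factor of $\Phi_v$) and the expected-$O(1)$ loop count match the paper's analysis, so nothing is missing.
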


\subsection{The Embedding Description problem}

We are finally ready to address the problem of describing the embedding of
$k=\F_q[X]/f(X)$ in $K=\F_q[Y]/g(Y)$; throughout the paper we let $m=\deg f$ and
$n=\deg g$, so that $m|n$. The \emph{embedding description problem}
asks to find two elements $\alpha\in k$ and $\beta\in K$ such that
$\alpha\mapsto\beta$ for some field embedding $\phi:k\to K$. This is
equivalent to $\alpha$ and $\beta$ having the same minimal polynomial.

The most obvious way to solve this problem is to take the class of $X$
in $k=\F_q[X]/f(X)$ for $\alpha$, and a root of $f$ in $K$ for
$\beta$. Since $f$ splits completely in $K$, we can apply Algorithm
\ref{alg:ks} for the special case $d = 1$. Using our discussion on the
cost of Frobenius evaluation (precisely, Case~4), we obtain an upper
bound of $O\bigl((nm^{(\omega+1)/2} + \MM(m)n^{(\omega+1)/2} +
m\MM(n)\log(q))\log(m)\bigr)$ expected operations in $\F_q$ for the
problem. We remark that this complexity is strictly larger than
$\tildO(m^2)$.

For a more specialized approach, we note that it is enough to solve
the following problem: let $r$ be a prime power such that $r|m$ and
$\gcd(r,m/r)=1$, find $\alpha_r\in k$ and $\beta_r\in K$ such that
$\alpha_r$ and $\beta_r$ have the same minimal polynomial, of degree $r$.

Indeed, once such $\alpha_r$ and $\beta_r$ are known for every primary
factor $r$ of $m$, possible solutions to the embedding problem are
\begin{equation*}
  \alpha = \prod_{\substack{r|m,\\\gcd(r,m/r)=1}}\alpha_r,\qquad
  \beta = \prod_{\substack{r|m,\\\gcd(r,m/r)=1}}\beta_r,
\end{equation*}
or
\begin{equation*}
  \alpha = \sum_{\substack{r|m,\\\gcd(r,m/r)=1}}\alpha_r,\qquad
  \beta = \sum_{\substack{r|m,\\\gcd(r,m/r)=1}}\beta_r.
\end{equation*}

Moreover, to treat the general embedding description problem,
it is sufficient to treat the case where $[k:\F_q]=[K:\F_q]=r$.
Indeed, we can reduce to this situation by applying
Proposition~\ref{prop:subfield}, at an additional cost of
$O(n^{(\omega+1)/2}\log(n) + \MM(n)\log(q))$ for each primary factor~$r$.
Therefore, to simplify the exposition, we focus on algorithms
solving the following problem.
\begin{problem}
\label{prob:embedding}
Let $r$ be a prime power and $k, K$ a pair of extensions of $\F_q$
of degree $r$.
Describe an isomorphism between $k$ and $K$.
\end{problem}
Note that although some algorithms are restricted to this situation,
especially those presented in Section~\ref{sec:kummer},
some of them could still be readily applied to a more general situation,
especially those from Sections~\ref{sec:rains-algorithm}
and~\ref{sec:rains-elliptic}.

All algorithms presented next are going to rely on one common
principle: construct an element in $k$ (and in $K$) such that its
minimal polynomial (or, equivalently, its orbit under the absolute
Galois group of $\F_q$) is uniquely (or \emph{almost} uniquely)
defined.

\section{Kummer-type algorithms}
\label{sec:kummer}

In this section, we review what we call \emph{Kummer-type} approaches to the embedding problem for prime power degree extensions. 
We briefly review the works of Lenstra~\cite{LenstraJr91},
and Allombert~\cite{Allombert02,Allombert02-rev}, then 
we give variants of these algorithms with significantly lower complexities.
As stated above, we let $k, K$ be degree $r$ extensions of $\F_q$,
where $r$ is a prime power,
and we let $p$ be their characteristic.
We give our fast versions of the algorithms for two separate cases: the case $p \nmid r$
is treated in Section~\ref{sec:fast-kummer}, the case $r = p^d$, where $d$ 
is a positive integer, is treated in Section~\ref{sec:fast-artin-schreier}.
Finally, in Section~\ref{sec:fast-algor-large} we give a variant of the case
$p \nmid r$ better suited for the case where $r$ is a high-degree prime power.

In~\cite{LenstraJr91}, Lenstra proves that 
given two finite fields of the same size, there exists a deterministic polynomial time algorithm 
that finds an isomorphism between them.
The focus of the paper is on theoretical computational complexity;
in particular, it avoids using randomized subroutines, such as polynomial
factorization. 
In~\cite{Allombert02,Allombert02-rev}, Allombert gives a similar approach with more focus on practical efficiency.
In contrast to Lenstra's, his algorithm relies on polynomial factorization, thus it is
polynomial time Las Vegas.
Even though neither of the two algorithms is given a detailed complexity analysis, both rely
on solving linear systems, thus a rough analysis yields an estimate of $O(r^{\omega})$ operations
in $\F_q$ in both cases.

The idea of Lenstra's algorithm is as follows.  Assume that $r$ is
prime, and let $\F_q[\zeta]$ denote the ring extension $\F_q[Z] /
\Phi_r(Z)$ where $\Phi_r$ is the $r$-th cyclotomic polynomial.  Let
$\tau$ be a non $r$-adic residue of $\F_q[\zeta]$, and let
$\F_q[\zeta][\theta]$ denote the quotient $\F_q[\zeta][Y]/(Y^r -
\tau)$ such that $\theta=\tau^{1/r}$ is the residue class of $Y$.
Lenstra shows that $\F_q[\zeta][\theta]$ is isomorphic to $k[\zeta]$
as a ring (Lenstra actually goes the other way around and constructs
$\tau$ from $\theta$ as $\tau = \theta^r$ whereas $\theta$ itself
comes from a normal basis of $k$ computed using linear algebra.  In
Lenstra's terminology, $\theta$ and $\tau=\theta^r$ are generators of
the Teichm\"uller subgroups of $k[\zeta]$ and $\F_q[\zeta]$ and
solutions to Hilbert's theorem 90).

Furthermore, the algorithm constructs $\theta_1, \theta_2$, and
$\tau_1, \tau_2$ in such a way that an integer $j > 0$ can be found
such that
\[
\begin{array}{lrll}
\psi: & \F_q[\zeta][\theta_1] & \rightarrow & \F_q[\zeta][\theta_2] \\
& \theta_1 & \mapsto & \theta_2^j
\end{array}
\]
is an isomorphism of rings.
Finally, denoting by $\Delta$ the automorphism group of $k[\zeta]$
over $k$, an embedding $k \hookrightarrow K$ is obtained by
restricting the above isomorphism $\psi$ to the fixed field
$k[\zeta]^\Delta$.
To summarize, the algorithm is made of three steps:
\begin{itemize}
\item Construct elements $\theta_1\in k[\zeta]$ and $\theta_2\in K[\zeta]$;
\item Letting $\tau_i=\theta_i^r$, find the integer $j$ such that
  $\tau_1=\tau_2^j$ by a discrete logarithm computation in
  $\F_q[\zeta]$;
\item Compute $\alpha\in k$ and $\beta\in K$ as some functions of
  $\theta_1,\theta_2^j$ invariant under $\Delta$.
\end{itemize}
The algorithm is readily generalized to prime powers $r$ by iterating
this procedure.

Allombert's algorithms differ from Lenstra's in two key steps, both
resorting to polynomial factorization.
First, he computes an irreducible factor $h$ of the cyclotomic
polynomial $\Phi_r$ of degree $s$,
and so constructs a field extension $\F_q(\zeta)$ as $\F_q[Z]/h(Z)$.
Then he defines $k[\zeta]=k[Z]/h(Z)$ and $K[\zeta]=K[Z]/h(Z)$
(note that these are not fields if $r$ is not prime), and constructs
$\theta_1\in k[\zeta]$ and $\theta_2\in K[\zeta]$ in a way equivalent
to Lenstra's using linear algebra. 
At this point, rather than computing a discrete
logarithm, Allombert points out that there exists a $c\in\F_q(\zeta)$
such that $\theta_1\mapsto c\theta_2$ defines an isomorphism,
and that such value can be
computed as the $r$-th root of $\theta_1^r/\theta_2^r$.
Finally, by making the automorphism group of $k[\zeta]$ over $k$ act
on $\theta_1$ and $\theta_2$, he obtains an embedding $k \hookrightarrow K$.%

\subsection{Allombert's algorithm}
\label{sec:fast-kummer}

In this section, we analyze the complexity of Allombert's original
algorithm~\cite{Allombert02}, that of its revised version~\cite{Allombert02-rev},
and we present new variants with the best known asymptotic complexities.
The main difference with respect to the 
versions presented in~\cite{Allombert02,Allombert02-rev} is in the way
we compute $\theta_1, \theta_2$, which are solutions to Hilbert's theorem 90
as will become clear below.
Whereas Allombert resorts to linear algebra, we rely instead on evaluation
formulas that have a high probability of yielding a solution.
Recently, Narayanan~\cite[Sec.~3]{narayanan2016fast} independently described 
a variant which is similar to our Proposition~\ref{prop:xitheta}
in the special case $s=1$.

\subsubsection{General strategy}
Let $k=\F_q[X]/f(X)$ where $f$ has degree $r$, a prime power, and let $x$ be the image of $X$ in $k$.
Let $h(Z)$ be an 
irreducible factor of the $r$-th cyclotomic polynomial over $\F_q$. Then $h$ has degree $s$ where 
$s$ is the order of $q$ in the multiplicative group $(\Z/r\Z)^\times$. We form the field extension
$\F_q(\zeta) \cong \F_q[Z] / h(Z)$ and the ring extension $k[\zeta] = k[Z] / h(Z) \cong k \otimes
\F_q(\zeta)$ where $\zeta$ is the image of $Z$ in the quotients. The action of the Galois group
$\gal(k / \F_q)$ can be extended to $k[\zeta]$ by
\[
\left.
\begin{array}{llll}
\sigma: & k[\zeta] & \rightarrow & k[\zeta] \\
& x \otimes \zeta & \mapsto & x^q \otimes \zeta
\end{array}
\right.
.
\]
Allombert shows (see~\cite[Prop.~3.2]{Allombert02}) that $\sigma$ is
an automorphism of $\F_q(\zeta)$-algebras, and that its fixed set is
isomorphic to $\F_q(\zeta)$.
The same can be done for the ring $K[\zeta]$.
Let us restate the algorithm for clarity.

\begin{algorithm}[Allombert's algorithm]
	\begin{algorithmic}[1]
		\REQUIRE Field extensions $k, K$ of $\F_q$ of degree $r$.
		\ENSURE The description of a field embedding $k\to K$.
		\STATE Factor the $r$-th cyclotomic polynomial and make the extensions $\F_q(\zeta), 
		k[\zeta], K[\zeta]$;
		\STATE Find $\theta_1 \in k[\zeta]$ such that $\sigma(\theta_1) = \zeta\theta_1$;
		\STATE Find $\theta_2 \in K[\zeta]$ such that $\sigma(\theta_2) = \zeta\theta_2$;
		\STATE Compute an $r$-th root $c$ of $\theta_1^r / \theta_2^r$ in $\F_q(\zeta)$;
		\STATE Let $\alpha, \beta$ be the constant terms of $\theta_1, c\theta_2$ respectively;
		\RETURN The field embedding defined by $\alpha\mapsto\beta$.
	\end{algorithmic}
        \label{alog:allombert}
\end{algorithm}


The cyclotomic polynomial $\Phi_r$ is factored over $\F_q$
using~\cite[Theorem~9]{shoup94}, and $r$-th root extraction in
$\F_q(\zeta)$ is done using Proposition~\ref{prop:root-fpz}, so we are
left with the problem of finding $\theta_1$ (and $\theta_2$), that is,
instances of Hilbert's theorem 90.  

We now show how to do it in the extension $k[\zeta]/\F_q(\zeta)$, the
case of $K[\zeta]$ being analogous. We review approaches due to
Allombert, that rely on linear algebra, and propose new algorithms
that rely on evaluation formulas and ultimately polynomial
arithmetic. Note that all these variants can be directly applied to
any extension degree $r$ as long as $p \nmid r$, and do not require $r$
to be a prime power.  Nevertheless, in practice, it is more efficient
to perform computations for each primary factor independently and glue
the results together in the end.

If $A$ is a polynomial with coefficients in $\F_q(\zeta)$, we will
denote by $\hat{A}$ the morphism $A(\sigma)$ of the algebra
$k[\zeta]$; note that the usual property of \emph{$q$-polynomials}
holds: $\widehat{AB} = \hat{A}\circ\hat{B}$.

\subsubsection{Algorithms relying on linear algebra}
\label{sec:algor-rely-line}

As some algorithmic details were omitted in Allombert's publications,
and no precise complexity analysis was performed, we extracted the
details from PARI/GP source code~\cite{Pari} and perform the complexity analysis
here.  We also propose another variant, using an algorithm by Paterson
and Stockmeyer.

\paragraph{Allombert's original algorithm}
A direct solution to Hilbert's theorem 90 is to find a non-zero
$\theta\in k[\zeta]$ such that $\widehat{(S-\zeta)}(\theta)=0$.

The original version of Allombert's algorithm~\cite{Allombert02} does
precisely this, by computing the matrix of the Frobenius automorphism
$\sigma$ of $k/\F_q$ using $O(\MM(r) \log(q) + r \MM(r))$ operations
in $\F_q$ and then an eigenvalue of $\sigma$ for $\zeta$ over
$\F_q(\zeta)$ using linear algebra, at a cost of $O((rs)^\omega)$
operations in $\F_q$. This gives a total cost of $O(s \MM(r) \log(q) +
(rs)^{\omega})$ operations in~$\F_q$.

\paragraph{Allombert's revised algorithm}
Allombert's revision of his own algorithm~\cite{Allombert02-rev} uses
the factorization
\begin{equation}
  \label{eq:allomb-b}
  h(S)=(S-\zeta) b(S).
\end{equation}
If we set $h(S)=S^s+\sum_{i=0}^{s-1}h_iS^i$, we can
explicitly write $b$ as
\begin{equation}
  \label{eq:hprime}
  b(S)=\sum_{i=0}^{s-1}b_i(S)\zeta^i,\quad
  \text{where}\quad
  \left\{\begin{aligned}
    b_{s-1}(S) &= 1,\\
    b_{i-1}(S) &= b_i(S) S + h_i.
  \end{aligned}\right.
\end{equation}
Indeed, Horner's rule shows that $b_{-1}(S)=h(S)$, and by direct
calculation we find that $(S-\zeta)\cdot b(S) = b_{-1}(S)$.

We get a solution to Hilbert's theorem 90 by evaluating
$b(S)=h(S)/(S-\zeta)$ on an element in the kernel of $\hat{h}$ over
$k$, linear algebra now taking place over $\F_q$ rather than
$\F_q(\zeta)$. The details on the computation of $\hat{h}$ were
extracted from PARI/GP source code and yield the following complexity.

\begin{proposition}
  Using Allombert's revised algorithm, a solution $\theta$ to
  Hilbert's theorem 90 can be computed in $O(\MM(r) \log(q) + s r
  \MM(r) + r^\omega)$ operations in $\F_q$.
\end{proposition}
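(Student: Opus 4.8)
The plan is to follow the description of the revised algorithm recalled above: first compute a nonzero $\theta_0$ in the kernel of $\hat h$ acting on $k$ (with linear algebra over $\F_q$, which is possible precisely because $h \in \F_q[Z]$), and then set $\theta = \hat b(\theta_0)$. By the factorization~\eqref{eq:allomb-b} and the $q$-polynomial identity $\widehat{AB} = \hat A \circ \hat B$, we get $\widehat{(S-\zeta)}(\theta) = \widehat{(S-\zeta)}\bigl(\hat b(\theta_0)\bigr) = \hat h(\theta_0) = 0$, so $\theta$ solves Hilbert's theorem~90; its genericity (needed so that the resulting embedding is correct) holds for a random $\theta_0$ in the kernel and is covered by the analysis already cited. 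Since $p \nmid r$, the polynomial $Z^r-1$ is squarefree; it is the characteristic polynomial of the Frobenius $\sigma$ on $k$ (a normal basis shows the minimal polynomial has degree $r$, and it divides $Z^r-1$), and as $h \mid \Phi_r \mid Z^r - 1$ the kernel of $\hat h$ has $\F_q$-dimension exactly $s$, so a nonzero $\theta_0$ exists.

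The cost then breaks down as follows. We first compute $x^q$ in $k$ by binary powering, in $O(\MM(r)\log(q))$ operations, and from it the matrix $M_1$ of $\sigma$ over the power basis $1,x,\dots,x^{r-1}$, by computing the powers $1, x^q, (x^q)^2, \dots, (x^q)^{r-1}$ in $O(r\,\MM(r))$ operations. Next, for $i=1,\dots,s$ we advance $x^{q^i}$ to $x^{q^{i+1}}$ by a single matrix-vector product with $M_1$, and we build the matrix $M_i$ of $\beta\mapsto\beta^{q^i}$ on $k$ as the table of powers $1, x^{q^i}, \dots, (x^{q^i})^{r-1}$; each such matrix costs $O(r\,\MM(r))$, for a total of $O(s\,r\,\MM(r))$. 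The matrix of $\hat h = \sum_{i=0}^{s} h_i \sigma^i$ is then $\sum_i h_i M_i$, assembled in $O(sr^2)$ further operations, which is absorbed. A nonzero $\theta_0$ in the nullspace of this $r\times r$ matrix over $\F_q$ is found in $O(r^\omega)$ operations.

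Finally, $\theta = \hat b(\theta_0)$ is evaluated in $k[\zeta]$ using the decomposition $b(S)=\sum_{i=0}^{s-1} b_i(S)\zeta^i$ of~\eqref{eq:hprime} and the fact that $\sigma$ commutes with multiplication by $\zeta$: the vectors $\theta_0^{q^j} = M_j\theta_0$ for $j<s$ are obtained by $s$ matrix-vector products in $O(sr^2)$ operations, and the remaining $O(s^2)$ scalar multiplications and additions of elements of $k$ cost $O(s^2 r)$; both are absorbed into $O(s\,r\,\MM(r))$ since $s \le r \le \MM(r)$. Summing the four contributions gives the claimed bound $O(\MM(r)\log(q) + sr\,\MM(r) + r^\omega)$. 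The only mildly delicate point is recognizing that the matrix of $\hat h$ can be assembled within $O(sr\,\MM(r))$ rather than the naive $O(sr^\omega)$: this works because each $M_i$ is read off directly from the single element $x^{q^i}$ as its table of powers, while $x^{q^i}$ itself is advanced one step at a time by a matrix-vector product with $M_1$, so no matrix-matrix product is needed apart from the final kernel computation.
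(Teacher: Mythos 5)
Your proposal is correct and follows essentially the same route as the paper: compute the matrix of $\sigma$ from $x^q$, obtain each $x^{q^i}$ by matrix–vector products and each Frobenius-power matrix as the table of powers of $x^{q^i}$ in $O(r\MM(r))$, assemble $\hat h$ in $O(sr^2)$, find a kernel vector in $O(r^\omega)$, and evaluate $\hat b$ via Eq.~\eqref{eq:hprime} using Frobenius computations, giving the same $O(\MM(r)\log(q) + sr\MM(r) + r^\omega)$ breakdown. The only additions (justifying existence of a nonzero kernel element via the minimal polynomial $Z^r-1$ and checking $\widehat{(S-\zeta)}\circ\hat b = \hat h$) are correct and merely make explicit what the paper treats in its surrounding discussion.
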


\begin{proof}
As in Allombert's original algorithm, one first
computes the matrix of $\sigma$ over $k$ at a cost of
$O(\MM(r) \log(q) + r \MM(r))$ operations in $\F_q$.

To get the matrix of $\hat{h}$ over $k$, one first computes the powers
$x^{q^i}$ for $0 \leq i \leq s$ using the matrix of $\sigma$, at a
cost of $O(s r^2)$ operations in $\F_q$.  From them, one can
iteratively compute the powers $x^{j q^i}$ for $2 \leq j \leq r$ for a
total cost of $O(s r \MM(r))$ operations in $\F_q$, and iteratively
compute the matrix of $\hat{h}$ for an additional total cost of $O(s
r^2)$ operations in $\F_q$, accounting for the scalar multiplications
by the coefficients of $h$.  The total cost is therefore dominated by
$O(s r \MM(r))$ operations in $\F_q$.

Given the matrix of $\hat{h}$ over $k$, computing an element in its
kernel costs $O(r^\omega)$ operations in $\F_q$.  The final evaluation
of $\hat{b}$ is done using Eq.~\eqref{eq:hprime} and the matrix of
$\sigma$ for Frobenius computations, for a cost of $O(s r^2)$ operations
in $\F_q$.
\end{proof}

\paragraph{Using the Paterson--Stockmeyer algorithm}
Given the matrix $M_\sigma$ of $\sigma$, there is a natural way of evaluating
$\hat{h}$ at a reduced cost: the Paterson--Stockmeyer
algorithm~\cite{paterson_stockmeyer} computes the matrix of $\hat h$
and $h(M_\sigma)$, using $O(\sqrt{s} r^\omega)$ operations in
$\F_q$. The evaluations of $\sigma$ that take a total of $O(s r^2)$
operations in $\F_q$ can be done directly using modular
exponentiations, for a total of $O(s \MM(r) \log(q))$.

\begin{proposition}
Using the Paterson--Stockmeyer algorithm and modular exponentiations,
a solution $\theta$ to Hilbert's theorem 90 can be computed in
$O(s \MM(r) \log(q) + \sqrt{s} r^\omega)$ operations in~$\F_q$.
\end{proposition}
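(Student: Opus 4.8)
The plan is to follow the same two-stage structure as in the analysis of Allombert's revised algorithm: first compute the matrix $M_\sigma$ of the Frobenius $\sigma$ acting on $k = \F_q[X]/f(X)$ over $\F_q$, then use the Paterson--Stockmeyer algorithm to evaluate the polynomial $h$ on $M_\sigma$, obtaining the matrix of $\hat h = h(\sigma)$, from which a kernel element gives the desired solution to Hilbert's theorem 90. The key observation enabling the improvement is that we do not need the intermediate powers $x^{q^i}$ of $x$ explicitly: Paterson--Stockmeyer requires only the matrix powers $M_\sigma, M_\sigma^2, \dots, M_\sigma^{\lceil\sqrt{s}\rceil}$ together with $O(\sqrt{s})$ further matrix multiplications to assemble $h(M_\sigma)$, for a total of $O(\sqrt{s}\, r^\omega)$ operations in $\F_q$.

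\textbf{Step 1: Computing $M_\sigma$.} As in the previous propositions, the matrix of $\sigma$ over $k$ can be obtained by first computing $x^q$ via repeated squaring modulo $f$, using $O(\MM(r)\log(q))$ operations, and then computing $x^{2q}, \dots, x^{(r-1)q}$ by successive multiplications by $x^q$; each column of $M_\sigma$ is one such power, for an additional $O(r\,\MM(r))$ operations. However --- and this is the point of the statement --- rather than performing modular exponentiations only once and then using matrix-vector products for the higher Frobenius powers (which would cost $O(sr^2)$), here we may equally well compute each $x^{q^i}$ directly by modular exponentiation; but in fact the statement simply folds the $O(r\,\MM(r))$ cost of forming $M_\sigma$ into the $O(s\,\MM(r)\log(q))$ term, since $s \ge 1$ and $\log(q)\ge 1$, so it does not appear separately. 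The claim ``$O(s\,\MM(r)\log(q))$'' for this part should be read as the cost of the modular-exponentiation-based Frobenius computations used throughout, which dominate the $O(r\,\MM(r))$ cost of column assembly.

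\textbf{Step 2: Paterson--Stockmeyer evaluation.} Given $M_\sigma$, apply the Paterson--Stockmeyer algorithm~\cite{paterson_stockmeyer} with the ``baby-step/giant-step'' split at $\lceil\sqrt{s}\rceil$ to evaluate the degree-$s$ polynomial $h$ at the $r\times r$ matrix $M_\sigma$. This uses $O(\sqrt{s})$ multiplications of $r\times r$ matrices, i.e.\ $O(\sqrt{s}\, r^\omega)$ operations in $\F_q$, and yields the matrix of $\hat h$. A nonzero element $\theta$ in the kernel of this matrix is then found by Gaussian elimination in $O(r^\omega)$ operations, and by Eq.~\eqref{eq:allomb-b} this $\theta$ satisfies $\widehat{(S-\zeta)}(\theta) = \zeta$-eigenvector relation after applying $\hat b$; more precisely, as in the revised algorithm, evaluating $\hat b$ on a kernel element of $\hat h$ via Eq.~\eqref{eq:hprime} produces a solution to $\sigma(\theta_1) = \zeta\theta_1$, and this last evaluation costs $O(sr^2)$ operations using $M_\sigma$, which is absorbed into the stated bound since $sr^2 = O(\sqrt{s}\, r^\omega)$ as $\omega > 2$. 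Summing the two dominant contributions gives $O(s\,\MM(r)\log(q) + \sqrt{s}\, r^\omega)$ operations in $\F_q$, as claimed.

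\textbf{Main obstacle.} The only subtle point is justifying the cost $O(\sqrt{s}\, r^\omega)$ for Paterson--Stockmeyer at matrix arguments: the classical statement of the algorithm counts nonscalar \emph{multiplications in the coefficient ring}, which here is the noncommutative ring of $r\times r$ matrices over $\F_q$, each costing $O(r^\omega)$; one must check that the scalar multiplications (by coefficients of $h$) and additions contribute only $O(s r^2)$, which is negligible. One must also confirm that the modular-exponentiation variant for computing the needed Frobenius images costs $O(s\,\MM(r)\log(q))$ rather than requiring a separate matrix-power stage --- i.e.\ that we are trading the $O(sr^2)$ term of the revised algorithm for a term that is already dominated. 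Both checks are routine given the cost model, so the proof is essentially an accounting argument layered on top of the Paterson--Stockmeyer complexity.
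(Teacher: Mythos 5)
Your overall architecture (compute $M_\sigma$, run Paterson--Stockmeyer to obtain the matrix of $\hat h$, take a kernel element, then apply $\hat b$) is the paper's, but the accounting of the last step contains a genuine gap, and it concerns exactly the point the statement is about. You evaluate $\hat b$ on the kernel element using the matrix $M_\sigma$, at cost $O(sr^2)$, and dismiss this term on the grounds that $sr^2 = O(\sqrt{s}\,r^\omega)$ ``as $\omega>2$''. That inequality requires $\sqrt{s}=O(r^{\omega-2})$, i.e.\ $s=O(r^{2(\omega-2)})$; with $\omega=2.38$ this means roughly $s\lesssim r^{0.76}$, whereas $s$ (the order of $q$ modulo $r$) can be as large as $r-1$. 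In the regime $s=\Theta(r)$ with $q$ fixed, your algorithm costs $\Theta(r^3)$, which exceeds both $\sqrt{s}\,r^\omega$ and $s\MM(r)\log(q)$, so the claimed bound is not established. The proposition says ``Paterson--Stockmeyer \emph{and modular exponentiations}'' precisely because the $O(s)$ applications of $\sigma$ that previously cost $O(sr^2)$ via matrix--vector products (those needed to evaluate $\hat b$ through Eq.~\eqref{eq:hprime}, and formerly to compute the $x^{q^i}$) are each replaced by a binary powering $\beta\mapsto\beta^q$ at cost $O(\MM(r)\log(q))$; this is the origin of the $s\MM(r)\log(q)$ term in the statement, which you instead treat as a catch-all while keeping the matrix-based evaluation.

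Two smaller slips of the same kind: you fold the $O(r\MM(r))$ cost of assembling $M_\sigma$ into $O(s\MM(r)\log(q))$ ``since $s\ge1$ and $\log(q)\ge1$'', which would need $r=O(s\log(q))$ and is false in general (take $s=1$ and $q$ fixed); the correct absorption is $r\MM(r)\in O(r^{2+\varepsilon})\subseteq O(r^\omega)\subseteq O(\sqrt{s}\,r^\omega)$, using the paper's assumptions on $\MM$ and $\omega>2$. Likewise, your claim that the $O(sr^2)$ scalar work inside Paterson--Stockmeyer is negligible rests on the same false comparison with $\sqrt{s}\,r^\omega$; it needs a separate argument (for instance, batching the $O(\sqrt{s})$ inner linear combinations of the baby-step powers into one rectangular matrix product of cost $O(s^{(\omega-1)/2}r^2)\subseteq O(\sqrt{s}\,r^\omega)$) rather than the bare inequality you invoke.
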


Although this complexity is not as good as the ones we will obtain
next, this variant performs reasonably well in practice, as discussed
in Section~\ref{sec:experimental-results}.

\subsubsection{Algorithms relying on polynomial arithmetic}
\label{sec:algor-rely-polyn}

It is immediate to see that the minimal polynomial of $\sigma$ over $k[\zeta]$ is
$S^r-1$; by direct calculation, we verify that it
factors as
\begin{equation}
  \label{eq:theta90}
  S^r-1 = (S-\zeta)\cdot\Theta(S) =
  (S-\zeta)\sum_{i=0}^{r-1} \zeta^{-i-1}S^i. 
\end{equation}
Hence, we can set
\begin{equation}
  \label{eq:thetaa}
  \theta_a = \hat\Theta(a)
  =  a\otimes\zeta^{-1} + \sigma(a)\otimes\zeta^{-2} + \cdots + \sigma^{r-1}(a)\otimes\zeta^{-r}
\end{equation}
for some $a\in k$ chosen at random. %
Because of Eq.~\eqref{eq:theta90}, $\theta_a$ is a solution as long as
it is non-zero. %
This is reminiscent of Lenstra's algorithm~\cite[Th.~5.2]{LenstraJr91}.

To ensure the existence of $a$ such that $\theta_a\ne0$, we only need
to prove that $k$ is not entirely contained in $\ker\hat\Theta$. %
But the maps $\sigma^i$ restricted to $k$ are all distinct, thus
Artin's theorem on character independence (see~\cite[Ch~VI, Theorem~4.1]{lang})
shows that they are linearly 
independent, and therefore $\hat\Theta$ is not identically zero on $k$.
In practice, we take $a \in k$ at random until
$\theta_a\ne0$. Since the map $\hat\Theta$ is
$\F_q$-linear and non-zero, it has rank at least 1, thus a
random $\theta_a$ is zero with probability less than $1 / q$. Therefore, we only 
need $O(1)$ trials to find $\theta_1$ (and $\theta_2$).

Using the polynomial $b(S)$ introduced in Eq.~\eqref{eq:allomb-b},
and defining
$g(S)=(S^r-1)/h(S)$, we can rewrite Eq.~\eqref{eq:theta90} as
\begin{equation}
  \label{eq:theta90bis}
  \Theta(S) = b(S) \cdot g(S).
\end{equation}
Then, the morphism $\hat\Theta$ can be evaluated
as $\hat{b}\circ\hat{g}$, the advantage being that $g$ has
coefficients in $\F_q$, rather than in $\F_q(\zeta)$: we set $\tau_a = \hat{g}(a)$ for some $a\in k$ chosen at random
and compute $\theta_a = \hat{b}(\tau_a)$ using Eq.~\eqref{eq:hprime},
yielding a solution to Hilbert's theorem 90 as soon as $\tau_a \neq 0$.
As before, $O(1)$ trials are enough to get $\theta_a \neq 0$.

We now give three variations on the above algorithm to compute 
a candidate solution $\theta_a$ more efficiently.
Which algorithm has the best asymptotic complexity depends on the
value of $s$ with respect to $r$; we arrange them by increasing $s$.

\paragraph{First solution: divide-and-conquer recursion.}
We use a recursive algorithm similar to the computation of trace-like
functions in Proposition~\ref{prop:trace-like}, to directly evaluate $\theta_a$
using Eq.~\eqref{eq:thetaa}.  Let $\xi_1=x^q$ and
$\theta_{a,1}=a\zeta^{-1}$, and set the following recursive relations:
\begin{equation}
\label{eq:theta-recursive}
\xi_j = 
\begin{cases}
\sigma^{j/2}(\xi_{j / 2}) & j \text{ even,} \\
\sigma(\xi_{j - 1}) & j \text{ odd,}
\end{cases} \quad
\theta_{a, j} = 
\begin{cases}
\theta_{a, j / 2} + \zeta^{-j / 2}\sigma^{j / 2}(\theta_{a, j / 2})& j \text{ even,} \\
(a + \sigma(\theta_{a, j - 1}))\zeta^{-1} & j \text{ odd.}
\end{cases}
\end{equation}
Then $\theta_a=\theta_{a,r}$.

\begin{proposition}
  \label{prop:xitheta}
  Given $a\in k$, the value $\theta_a$ in Eq.~\eqref{eq:thetaa} can be
  computed using \[
O(s^{(\omega-1)/2}r^{(\omega+1)/2}\log(r)+\MM(r)\log(q))
\]
  operations in $\F_q$.
\end{proposition}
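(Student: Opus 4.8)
The strategy is to analyze the recursion in Eq.~\eqref{eq:theta-recursive} step by step, mirroring the divide-and-conquer analysis used for the trace-like expressions in Proposition~\ref{prop:trace-like}, Case~1. First I would set up the cost of a single application of $\sigma$ (i.e.\ a $q$-th power Frobenius) on the relevant algebra. Here the natural place to work is $k[\zeta] = k[Z]/h(Z)$, which is an instance of the algebra $S$ in Proposition~\ref{prop:trace-like} with $m = r$ and degree $s$; more precisely it is Case~3, or better Case~4 since $h \in \F_q[Z]$. What we actually need is the iterated Frobenius $\beta \mapsto \beta^{q^i}$ given the precomputed data $\xi_i = x^{q^i}$, since $\zeta^{q^i} = \zeta$ already (the $\zeta$-coordinates are fixed by $\sigma$). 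Writing $\beta = \sum_{j=0}^{s-1} c_j(x)\,\zeta^j$ with $c_j \in \F_q[X]$ of degree $< r$, one has $\beta^{q^i} = \sum_{j=0}^{s-1} c_j(x^{q^i})\,\zeta^j$, so computing $\beta^{q^i}$ from $\xi_i$ amounts to $s$ modular compositions in degree $r$ over $\F_q$ sharing the same inner polynomial $\xi_i$, which by Note~\ref{note:multimc} costs $O(s^{(\omega-1)/2} r^{(\omega+1)/2})$ operations in $\F_q$. The same bound covers updating $\xi_i$ itself.

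Next I would walk through the recursion. The index $j$ ranges over an addition-chain-like doubling from $1$ to $r$, giving $O(\log r)$ stages. At each stage we perform a constant number of operations of the following kinds: one iterated Frobenius $\sigma^{j/2}$ applied to $\theta_{a,j/2}$ (and to $\xi_{j/2}$, to maintain the data needed at the next level), plus a constant number of multiplications by the scalar $\zeta^{-j/2} \in \F_q(\zeta)$ and additions in $k[\zeta]$. The Frobenius step dominates: it costs $O(s^{(\omega-1)/2} r^{(\omega+1)/2})$ per stage by the discussion above, so over $O(\log r)$ stages this contributes $O(s^{(\omega-1)/2} r^{(\omega+1)/2}\log r)$. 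The scalar multiplications and additions in $k[\zeta]$ cost $O(\MM(rs))$ each, hence $O(\MM(rs)\log r)$ in total, which is absorbed by the previous term (using $\MM(rs) \in O(rs \cdot \mathrm{polylog})$ and $s \le r$, so $\MM(rs) = \tildO(rs) = \tildO(r^{(\omega+1)/2}s^{(\omega-1)/2})$ since $\omega>2$ forces $r^{(\omega+1)/2}s^{(\omega-1)/2} \ge rs$). Finally, the one-time setup — computing $\xi_1 = x^q$ and $\theta_{a,1} = a\zeta^{-1}$ — costs $O(\MM(r)\log q)$ for the modular exponentiation producing $x^q$, plus negligible lower-order terms. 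Summing the setup and the recursion gives the claimed bound $O(s^{(\omega-1)/2}r^{(\omega+1)/2}\log r + \MM(r)\log q)$.

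The main point requiring care — and the step I would write out most explicitly — is the cost accounting for the iterated Frobenius inside the loop: specifically, justifying that we never need a fresh modular exponentiation of cost $\MM(\cdot)\log q$ at each stage, only modular compositions once $x^q$ is known, and that the Kaltofen--Shoup-style splitting of $\beta \in k[\zeta]$ into its $s$ coordinates over $\F_q[X]/f(X)$ lets us invoke the multi-modular-composition bound of Note~\ref{note:multimc} rather than paying $s$ independent compositions. A secondary subtlety is checking that the odd-index branch of the recursion ($j$ odd, using $\sigma(\theta_{a,j-1})$ and a single $\zeta^{-1}$ multiplication) does not blow up the constant number of operations per stage, but this is routine since each odd step is followed by even doubling steps, so the total number of stages remains $O(\log r)$. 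No genuine obstacle arises; the work is entirely in organizing the per-stage cost correctly and confirming the additive $\MM(rs)\log r$ and $\MM(r)\log q$ terms are dominated as claimed.
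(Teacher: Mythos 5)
Your proposal is correct and follows essentially the same route as the paper's proof: compute $\xi_1=x^q$ once by binary powering, then at each of the $O(\log r)$ stages evaluate $\sigma^j$ via the coefficient-wise decomposition $\beta=\sum_j c_j(x)\zeta^j$, i.e.\ $s$ modular compositions in degree $r$ sharing the same argument as in Note~\ref{note:multimc}, with the multiplications by powers of $\zeta$ being lower order. The only (harmless) differences are notational — identifying $\sigma^i$ with a literal $q^i$-power, and bounding the $\zeta$-multiplications by $O(\MM(rs))$ instead of the paper's sharper $O(r\MM(s))$, both of which are still absorbed by the dominant term.
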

\begin{proof}
  The value $\xi_1$ is computed by binary powering using
  $O(\MM(r)\log(q))$ operations, while the value $\theta_{a,1}$ is
  deduced from the polynomial $h$ using $O(rs)$ operations.
  
  To compute the recursive formulas in Eq.~\eqref{eq:theta-recursive}
  we use the same technique as in Proposition~\ref{prop:trace-like}:
  given $b \in k[\zeta]$, the value $\sigma^j(b)$ is computed as the
  modular composition of the polynomial $b(x,z)$ with the polynomial
  $\xi_j(x)$ in the first argument. %
  Each modular composition in $k[\zeta]$ is done using $s$ 
  modular compositions in $k$, at a cost of $O(s^{(\omega-1)/2}r^{(\omega+1)/2})$ operations (see Note~\ref{note:multimc}). %
  Multiplications by $\zeta^{-j}$ are done by seeing the elements of
  $k[\zeta]$ as polynomials in $x$ over $\F_q(\zeta)$, thus performing
  $r$ multiplications modulo $h$, at a cost of $O(r\MM(s))$
  operations. %
  Given that the total depth of the recursion is $O(\log(r))$, we
  obtain the stated bound.
\end{proof}

\paragraph{Second solution: automorphism evaluation.}
We use Eq.~\eqref{eq:theta90bis} and Eq.~\eqref{eq:hprime}
to compute $\theta_a$ as
$\theta_a = \hat{b}\circ \hat{g} (a)$.
\begin{proposition}
  \label{prop:ks-theta}
  Given $a\in k$, the value $\theta_a$ in Eq.~\eqref{eq:thetaa} can
  be computed using
  \[
  O(r^{(\omega^2-4\omega-1)/(\omega-5)}+(s+r^{2/(5-\omega)})\MM(r)\log(q))
  \]
  operations in $\F_q$.
\end{proposition}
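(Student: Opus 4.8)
The plan is to evaluate $\theta_a=\hat b\circ\hat g(a)$ in two stages, as in Eq.~\eqref{eq:theta90bis}: first compute $\tau_a=\hat g(a)$, where $g(S)=(S^r-1)/h(S)$ has coefficients in $\F_q$ and degree $r-s$; then compute $\theta_a=\hat b(\tau_a)$ through the Horner relations of Eq.~\eqref{eq:hprime}. The first stage is the costly one, and I would speed it up by a Paterson--Stockmeyer-style baby-step/giant-step whose split parameter $u$ is optimised only at the end.

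Writing the Frobenius exponent as $i=uj+\ell$ with $0\le\ell<u$, and using that $\sigma$ fixes $\F_q$,
\[
\tau_a=\hat g(a)=\sum_{0\le j<J}\sigma^{uj}(w_j),\qquad w_j=\sum_{0\le\ell<u}g_{uj+\ell}\,\sigma^{\ell}(a),\qquad J=\lceil r/u\rceil .
\]
The baby steps $\sigma^{\ell}(a)=a^{q^{\ell}}$ for $\ell<u$ are obtained by successive $q$-th powerings in $k$, costing $O(u\,\MM(r)\log q)$; these powerings are the only source of the $\MM(r)\log q$ term. The elements $w_j$ are then assembled by a single matrix product over $\F_q$ --- the $r\times u$ matrix with columns $\sigma^{\ell}(a)$ by the $u\times J$ matrix whose $(\ell,j)$ entry is $g_{uj+\ell}$ --- and since $u$ will be chosen with $\sqrt r\le u\le r$, so that $J=r/u\le\sqrt r\le u$, this costs $O(r^{\omega-1}u^{3-\omega})$ operations in $\F_q$ (block multiplication with blocks of size $J$). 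Finally I would carry out the giant-step recombination $\sum_j\sigma^{uj}(w_j)$ by a second, recursive baby-step/giant-step: after computing $x^{q^u}\bmod f$ once, split $j=pj_1+j_0$ and use $\sigma^{uj}(w_j)=\sigma^{upj_1}\bigl(\sigma^{uj_0}(w_j)\bigr)$ together with $\F_q$-linearity of $\sigma$, which turns the recombination into $O(\log J)$ successive batches of modular compositions in $k$, each batch applying a single fixed Frobenius power (hence a single inner polynomial) to a list of elements that is known in advance. By Note~\ref{note:multimc}, such a batch of $t=O(r)$ modular compositions costs $O(t^{(\omega-1)/2}r^{(\omega+1)/2})$, and since $t$ drops geometrically across levels the total recombination cost is dominated by $O\bigl((r/u)^{(\omega-1)/2}r^{(\omega+1)/2}\bigr)=O\bigl(r^{\omega}u^{-(\omega-1)/2}\bigr)$, the remaining $O(\MM(r)\log q+r^{(\omega+1)/2}\log r)$ for the Frobenius data being negligible.

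The second stage is cheap. By Eq.~\eqref{eq:hprime}, $\hat b(\tau_a)=\sum_{i=0}^{s-1}\widehat{b_i}(\tau_a)\,\zeta^{i}$ with $\widehat{b_{s-1}}(\tau_a)=\tau_a$ and $\widehat{b_{i-1}}(\tau_a)=\sigma\bigl(\widehat{b_i}(\tau_a)\bigr)+h_i\tau_a$; since each $\widehat{b_i}(\tau_a)$ lies in $k$ and $1,\zeta,\dots,\zeta^{s-1}$ is exactly the $\F_q$-basis of $\F_q(\zeta)$, reading $\theta_a$ off from these $s$ elements is free. Running the recursion costs $s$ Frobenius applications in $k$ (again by $q$-th powering, $O(\MM(r)\log q)$ each) plus $O(sr)$ scalar operations, i.e.\ $O(s\,\MM(r)\log q)$, which accounts for the extra $s$. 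Collecting the first-stage costs, the total is $O\bigl(u\,\MM(r)\log q+r^{\omega-1}u^{3-\omega}+r^{\omega}u^{-(\omega-1)/2}\bigr)$ up to a factor $O(\log r)$: the matrix term increases and the recombination term decreases in $u$, so balancing $r^{\omega-1}u^{3-\omega}=r^{\omega}u^{-(\omega-1)/2}$ gives $u^{(5-\omega)/2}=r$, i.e.\ $u=r^{2/(5-\omega)}$ (which lies between $\sqrt r$ and $r$ for $2<\omega<3$), with common value $r^{\omega-(\omega-1)/(5-\omega)}=r^{(\omega^2-4\omega-1)/(\omega-5)}$; the $q$-th-powering contribution becomes $r^{2/(5-\omega)}\MM(r)\log q$, which cannot be absorbed because of the $\log q$. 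Adding the second stage gives the announced bound.

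The one delicate point is the giant-step recombination: a priori $\sum_j\sigma^{uj}(w_j)$ is a sum of $J$ modular compositions with \emph{distinct} inner polynomials $x^{q^{uj}}$, to which Note~\ref{note:multimc} does not apply, and the whole gain comes from reorganising it --- via $\sigma^{uj}=\sigma^{upj_1}\circ\sigma^{uj_0}$ and additivity --- into a logarithmic tower of batches that each do share an inner polynomial and have their outer polynomials available beforehand, bringing the recombination down to $\tildO\bigl((r/u)^{(\omega-1)/2}r^{(\omega+1)/2}\bigr)$. The remaining ingredients --- the $\F_q$-matrix multiplication, the successive $q$-th powerings, and the one-variable optimisation over $u$ --- are then routine.
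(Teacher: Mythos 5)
Your proof is correct, and its skeleton is exactly the paper's: decompose $\theta_a=\hat b\circ\hat g(a)$ as in Eq.~\eqref{eq:theta90bis}, spend the real effort on the automorphism evaluation $\tau_a=\hat g(a)$, and finish with the cheap Horner recursion of Eq.~\eqref{eq:hprime} at cost $O(s\,\MM(r)\log q)$. The difference is that the paper treats the first stage as a black box, citing Kaltofen and Shoup's Algorithm~AE~\cite{kaltofen+shoup98} with its cost $O(r^{(\omega+1)/2+(3-\omega)\lvert\beta-1/2\rvert}+r^{(\omega+1)/2+(1-\beta)(\omega-1)/2}+r^{\beta}\MM(r)\log q)$ and choosing $\beta=2/(5-\omega)$, whereas you re-derive that step from scratch: baby steps by repeated $q$-th powering ($u\,\MM(r)\log q$), the linear combinations $w_j$ by one rectangular $\F_q$-matrix product ($r^{\omega-1}u^{3-\omega}$, which coincides with the cited first term once $u=r^\beta$ with $\beta\ge 1/2$), and the giant-step sum $\sum_j\sigma^{uj}(w_j)$ by a logarithmic tower of batched modular compositions, each batch sharing its inner polynomial so that Note~\ref{note:multimc} applies, with geometrically decreasing batch sizes giving $O\bigl((r/u)^{(\omega-1)/2}r^{(\omega+1)/2}\bigr)$. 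You correctly identified and resolved the only delicate point (the naive giant-step sum has distinct inner polynomials), your restriction to $u\ge\sqrt r$ loses nothing since the optimum $u=r^{2/(5-\omega)}$ lies in that range, and your balancing reproduces the exponent $(\omega^2-4\omega-1)/(\omega-5)$. One small remark: the parenthetical ``up to a factor $O(\log r)$'' is unnecessary and, taken literally, would weaken the bound below the stated one; your own accounting already shows no such factor is needed, since the batch costs decay geometrically and the $O(r^{(\omega+1)/2}\log r)$ overhead for the Frobenius data has strictly smaller exponent than $(\omega^2-4\omega-1)/(\omega-5)$ for $2<\omega<3$. What your route buys is a self-contained proof not relying on the statement of Algorithm~AE; what the paper's buys is brevity.
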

\begin{proof}
  We proceed in two steps. We first compute $\hat{g}(a)$ using the
  \emph{automorphism evaluation} algorithm of Kaltofen and
  Shoup~\cite[Algorithm~AE]{kaltofen+shoup98}, at a cost of
  $O(r^{(\omega+1)/2 + (3-\omega)\lvert\beta-1/2\rvert}+
  r^{(\omega+1)/2 + (1-\beta)(\omega-1)/2}+r^\beta\MM(r)\log(q))$, for
  any $0\le\beta\le1$. Choosing $\beta=2/(5-\omega)$ minimizes the
  overall runtime, giving the exponents reported above.
  
  We then use Eq.~\eqref{eq:hprime} to compute
  $\theta_a=\sum_{i=0}^{s-1}a_i\otimes\zeta^i$, where
  $a_{s-1}=\hat{g}(a)$, and $a_{i-1}=\sigma(a_i)+h_i\hat{g}(a)$. 
  The cost of this computation is dominated by the evaluations of
  $\sigma$, which take $O(\MM(r)\log(q))$ operations each, thus
  contributing $O(s\MM(r)\log(q))$ total operations.
\end{proof}

\paragraph{Third solution: multipoint evaluation.}
Finally, we can compute all the values
$\sigma(a),\dots,\sigma^{r-1}(a)$ directly, write $\theta_a$ as a
polynomial in $x$ and $\zeta$ of degree $r-1$ in both variables, and
reduce modulo $h$ for each power $x^i$.

\begin{proposition}
  \label{prop:iter-frob-theta}
  Given $a\in k$, the value $\theta_a$ in Eq.~\eqref{eq:thetaa} can
  be computed using 
  \[
O(\MM(r^2)\log(r) + \MM(r)\log(q))
\] operations in
  $\F_q$.
\end{proposition}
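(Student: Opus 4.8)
The plan is to compute the whole orbit of conjugates $\sigma^0(a),\dots,\sigma^{r-1}(a)$ and then to read $\theta_a$ off Eq.~\eqref{eq:thetaa}. I would view $k=\F_q[X]/f(X)$ as the ring of degree-$<r$ polynomials in the image $x$ of $X$, and $k[\zeta]$ as bivariate polynomials of degree $<r$ in $x$ and $<s$ in $\zeta$; crucially, every polynomial manipulation over $k$ (or over $k[\zeta]$, or over $k[Y]$ for an auxiliary variable $Y$) is performed through Kronecker substitution, so that a product or a Euclidean division of two polynomials of degree $<d$ in $Y$ with coefficients in $k$ costs $O(\MM(dr))$ operations in $\F_q$, and a reduction modulo $h$ of a degree-$<r$ polynomial in $\zeta$ costs $O(\MM(r))$.

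First I would compute $\xi:=x^q\bmod f$ by repeated squaring, in $O(\MM(r)\log q)$ operations; this is the only place $\log q$ enters. Since the coefficients of any $b\in k$ lie in $\F_q$ we have $\sigma(b)=b(\xi)\bmod f$, so every Frobenius power $\sigma^t$ is a modular composition with a fixed inner polynomial, and the auxiliary elements satisfy $x^{q^{2t}}=(x^{q^t})^{q^t}$. I would then compute the $r$ conjugates $a,a^q,\dots,a^{q^{r-1}}$ — equivalently, the complete splitting of $f$ into the linear factors $Y-x^{q^i}$ over $k$, which comes out along the way — by a divide-and-conquer on the conjugate index, organized as in the proofs of Propositions~\ref{prop:trace-like} and~\ref{prop:xitheta} but carrying out every composition by Kronecker-substituted polynomial arithmetic rather than by fast modular composition: given the first $t$ conjugates packed as a polynomial of degree $<t$ in $Y$ with coefficients in $k$, together with $x^{q^t}$, one produces the first $2t$ conjugates by applying $\sigma^t$ to that block and concatenating. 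Each level of the recursion handles objects of total size $O(rt)$, so by superlinearity of $\MM$ its cost is $O(\MM(r^2))$, and over the $O(\log r)$ levels this is $O(\MM(r^2)\log r)$. Equivalently, once the $r$ roots of $f$ in $k$ are available, one may recover the $\sigma^i(a)=a(x^{q^i})\bmod f$ by fast multipoint evaluation of the degree-$<r$ polynomial $a$ at these $r$ points — the subproduct tree of which has root $f$ itself — each of whose $\log r$ levels again reduces, after Kronecker packing, to polynomial arithmetic on size-$O(r^2)$ data, for the same total.

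Finally I would assemble $\theta_a=\sum_{i=0}^{r-1}\sigma^i(a)\,\zeta^{-i-1}$. Since $\zeta$ is a primitive $r$-th root of unity, $\zeta^{-i-1}=\zeta^{r-1-i}$, so $\theta_a$ is the bivariate polynomial $\sum_{i=0}^{r-1}\sigma^i(a)(x)\,\zeta^{r-1-i}$ of bidegree $<r$ in each variable; rewriting it as $\sum_{l=0}^{r-1}x^l\,Q_l(\zeta)$ with each $Q_l\in\F_q[\zeta]$ of degree $<r$ and reducing every $Q_l$ modulo $h$ (of degree $s\le r$) costs $r$ Euclidean divisions of $O(\MM(r))$ operations each, i.e. $O(r\MM(r))$, which is within the claimed bound. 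Summing the three contributions gives $O(\MM(r^2)\log r+\MM(r)\log q)$.

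The main obstacle is the cost control in the middle step. A naive realization of ``apply $\sigma^t$ to a block of $t$ conjugates'' through the matrix of $\sigma$, or through a batch of modular compositions sharing the same inner polynomial, would cost $\Omega(r^\omega)$, which for $\omega>2$ exceeds $\MM(r^2)\log r$; so the real content is to arrange the iterated-Frobenius recursion — and, symmetrically, the multipoint evaluation of $a$ at the roots of $f$ — so that the work at each of the $O(\log r)$ levels is a bounded number of products and Euclidean divisions of polynomials whose total degree, after Kronecker packing, is $O(r^2)$, and then to telescope the per-level bounds using the superlinearity of $\MM$.
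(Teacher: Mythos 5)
Your overall route is the same as the paper's: compute $x^q$, obtain all conjugates $\sigma^i(a)$ by the iterated-Frobenius doubling of von zur Gathen and Shoup with Kronecker-based arithmetic over $k$, then assemble $\theta_a$ and reduce coefficientwise modulo $h$; the paper's proof is essentially a citation of \cite[Algorithm~3.1]{von1992computing} for the middle step, plus the remark that the final reduction modulo $h$ is negligible. Your bookkeeping of the outer costs ($O(\MM(r)\log(q))$ for $x^q$, $O(r\MM(r))$ for the reduction modulo $h$, telescoping of $\MM(rt)$ over the $O(\log(r))$ doublings via superlinearity) is fine.

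The gap is exactly at the point you yourself flag as ``the real content''. As you describe it, a doubling step takes the block $\sum_{i<t}\sigma^i(a)\,Y^i$ and ``applies $\sigma^t$ to that block'', i.e.\ computes $\sigma^i(a)(\xi_t)\bmod f$ for all $i<t$ --- a batch of $t$ modular compositions sharing the inner polynomial $\xi_t=x^{q^t}$ --- and you assert this costs $O(\MM(rt))$ ``by Kronecker-substituted polynomial arithmetic''. That assertion is unsupported: substituting $\xi_t$ into the packed bivariate polynomial is neither a product nor a Euclidean division, so Kronecker substitution does not apply to it; Horner gives $O(rt\,\MM(r))$, and the batched-composition techniques of Note~\ref{note:multimc} give on the order of $t^{(\omega-1)/2}r^{(\omega+1)/2}$, both of which exceed $\MM(r^2)\log(r)$ when $t$ is of order $r$. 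The missing idea, which is the heart of the cited algorithm, is to swap the roles of inner and outer polynomial: since Frobenius powers commute and $\xi_t$ has coefficients in $\F_q$, one has $\xi_{i+t}=\sigma^i(\xi_t)=\xi_t(\xi_i)$, so a doubling is the evaluation of the \emph{single} polynomial $\xi_t$ at the $t$ already-known points $\xi_0,\dots,\xi_{t-1}$, i.e.\ a genuine fast multipoint evaluation over $k[Y]$ whose subproduct-tree levels are products and divisions amenable to Kronecker packing, at cost $O(\MM(rt)\log(t))$; and once all the roots $\xi_i$ of $f$ are available, $\sigma^i(a)=a(\xi_i)$ is one more such multipoint evaluation --- the one stage you did state correctly. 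With that reformulation of the doubling step your argument closes and coincides with the paper's.
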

\begin{proof}
  The values $\sigma(a),\dots,\sigma^{r-1}(a)$ can be computed by
  binary powering using $O(r\MM(r)\log(q))$. %
  We can do slightly better using the iterated Frobenius technique of
  von~zur~Gathen and Shoup~\cite[Algorithm~3.1]{von1992computing} (see
  also~\cite[Ch.~14.7]{vzGG}), which costs of
  $O(\MM(r^2)\log(r) + \MM(r)\log(q))$ operations. %
  The final reduction modulo $h$ costs $O(r\MM(r)\log(r))$ operations,
  which is negligible in front of the previous step.
\end{proof}


The following proposition summarizes our analysis. To clarify the
order of magnitude of the exponents, let us assume $q=O(1)$ and
neglect polylogarithmic factors; then, if $\omega=2.38$ (best bound to
date), the runtimes are $O(s^{0.69}r^{1.69})$ for $s \in O(r^{0.23})$,
$O(r^{1.85}+s^{1.38}r)$ for $s \in \Omega(r^{0.23})$ and $s\in
O(r^{0.72})$, and $\tildO(r^2)$ otherwise.
For $\omega=3$, all costs are at best quadratic.

\begin{proposition}
  \label{proposition:XiDelta-updated}
  Given $k,K$ of degree $r$ over $\F_q$, assuming that $s$ is the
  order of $q$ in $(\Z/r\Z)^\times$,   Algorithm~\ref{alog:allombert} 
 computes its output using 
  \begin{itemize}
  \item $O(s^{(\omega-1)/2}r^{(\omega+1)/2}\log(r)+\MM(r)\log(q))$
    expected operations in $\F_q$ if $s \in O(r^{(\omega-3)/(\omega-5)})$, or
  \item
    $O(
 r^{(\omega^2-4\omega-1)/(\omega-5)}+(s+r^{2/(5-\omega)})\MM(r)\log(q)
+s^{\omega-1}r\log(r)\log(s))$
    expected operations in $\F_q$ if if $s \in \Omega(r^{(\omega-3)/(\omega-5)})$
    and $s \in O(r^{1/(w-1)})$, or
  \item $O(\MM(r^2)\log^2(r) + \MM(r)\log(r)\log(q))$ expected operations in
    $\F_q$ otherwise.
  \end{itemize}
\end{proposition}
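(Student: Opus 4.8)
The plan is to assemble Proposition~\ref{proposition:XiDelta-updated} by tallying the costs of the individual steps of Algorithm~\ref{alog:allombert}, and then showing that the cost of finding $\theta_1$ (respectively $\theta_2$) dominates in each of the three parameter regimes.

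First I would account for the setup steps. Factoring $\Phi_r$ over $\F_q$ by~\cite[Theorem~9]{shoup94} costs $O(\MM(r)\log(q) + \ldots)$ operations, which is asymptotically negligible against the other terms (recall $s \le r$, so $\MM(rs) \le \MM(r^2)$ and similar bounds absorb it). Constructing $k[\zeta]$ and $K[\zeta]$ is free in this model. The $r$-th root extraction of $\theta_1^r/\theta_2^r$ in $\F_q(\zeta)$ is handled by Proposition~\ref{prop:root-fpz}, contributing (in the first variant) $O(\MM(s)\log(q) + rs^{\omega-1}\log(r)\log(s) + \MM(rs)\log(s)\log(r))$; this is where the $s^{\omega-1}r\log(r)\log(s)$ term in the second bullet comes from, and in the other regimes it is dominated. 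Extracting the constant terms of $\theta_1$ and $c\theta_2$ (Step~5) is $O(rs)$ and negligible.

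Next, for the heart of the matter, I would invoke the three propositions of Subsection~\ref{sec:algor-rely-polyn}. Proposition~\ref{prop:xitheta} gives $O(s^{(\omega-1)/2}r^{(\omega+1)/2}\log(r) + \MM(r)\log(q))$; Proposition~\ref{prop:ks-theta} gives $O(r^{(\omega^2-4\omega-1)/(\omega-5)} + (s+r^{2/(5-\omega)})\MM(r)\log(q))$; Proposition~\ref{prop:iter-frob-theta} gives $O(\MM(r^2)\log(r) + \MM(r)\log(q))$. Since we apply whichever is cheapest, I would compare the leading exponents pairwise. The crossover between the first and second is found by solving $s^{(\omega-1)/2}r^{(\omega+1)/2} = r^{(\omega^2-4\omega-1)/(\omega-5)}$ for $s$ as a power of $r$; a short computation gives the threshold $s = r^{(\omega-3)/(\omega-5)}$, matching the first bullet's condition. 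The crossover between the second and third is governed by when the automorphism-evaluation term $r^{(\omega^2-4\omega-1)/(\omega-5)}$ (or the $r^{2/(5-\omega)}\MM(r)$ term) overtakes $\MM(r^2) = \tildO(r^2)$; solving $r^{2/(5-\omega)} \cdot r \approx r^2$, i.e. $2/(5-\omega) \approx 1$, needs refining — the cleaner statement is $s \le r^{1/(\omega-1)}$ marking where the multipoint-evaluation solution of Proposition~\ref{prop:iter-frob-theta} becomes preferable. I would verify these two inequalities carefully, using $2 < \omega < 3$, and fold in the extra $\log$ factors in the third bullet (an extra $\log(r)$ from the recursion depth in the EDF step within root extraction, and the $\log(r)$ from iterated Frobenius).

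The main obstacle I expect is bookkeeping the crossover thresholds and confirming that in each regime every other term — the cyclotomic factorization, the root extraction, the two separate computations of $\theta_1$ and $\theta_2$ (which just doubles a cost, hence is absorbed in the $O$), and the constant-term extraction — is genuinely dominated by the stated expression. In particular one must check that in the second regime the root-extraction cost $s^{\omega-1}r\log(r)\log(s)$ is not itself dominated and so legitimately appears in the bound, whereas in the first and third regimes it is swallowed by $s^{(\omega-1)/2}r^{(\omega+1)/2}$ and $\MM(r^2)$ respectively (here $s \le r^{(\omega-3)/(\omega-5)}$ and the constraint $\omega > 2$ are what make this work). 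The verification is routine but fiddly; once the three threshold comparisons are pinned down, the proposition follows by taking, in each range of $s$, the minimum over the three available algorithms and adding the fixed overhead.
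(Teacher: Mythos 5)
Your proposal is correct and follows essentially the same route as the paper: it combines the three Hilbert~90 solutions (Propositions~\ref{prop:xitheta}, \ref{prop:ks-theta}, \ref{prop:iter-frob-theta}) with the root-extraction costs of Proposition~\ref{prop:root-fpz} (first case for the first two regimes, second case with $s$ replaced by $r$ for the third), dismisses the cyclotomic factorization and other setup steps as negligible, and derives the thresholds $s\approx r^{(\omega-3)/(\omega-5)}$ and $s\approx r^{1/(\omega-1)}$ exactly as the paper does, correctly attributing the $s^{\omega-1}r\log(r)\log(s)$ term to the root extraction. The only omitted detail is the (likewise negligible) Kronecker-substitution cost $O(\MM(sr)\log(r))$ of computing $\theta_1^r,\theta_2^r$, which does not affect the argument.
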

\begin{proof}
  The cost of factoring the $r$-th cyclotomic polynomial is an
  expected $O(\MM(r)\log(rq))$ operations in $\F_q$,
  using~\cite[Theorem~9]{shoup94}. This is negligible compared with
  other steps. The solutions $\theta_1,\theta_2$ to Hilbert's theorem
  90 are computed as described above, according to the size of $s$.
  The powers $\theta_1^r,\theta_2^r$ are computed using Kronecker
  substitution in $O(\MM(sr)\log(r))$ operations, which is also
  negligible. Finally, the cost of computing an $r$-th root in
  $\F_q(\zeta)$ is given by Proposition~\ref{prop:root-fpz} and can
  not be neglected.

  Combining the costs coming from the solution to Hilbert's theorem 90
  and the $r$-th root extraction, we obtain the following complexities
  according to $s$.
  \begin{itemize}
  \item If we use the algorithm described in
    our first solution, combining Proposition~\ref{prop:xitheta} with the first
    case of Proposition~\ref{prop:root-fpz}, we obtain an estimate of
    $O(s^{(\omega-1)/2}r^{(\omega+1)/2}\log(r)+\MM(r)\log(q))$
    operations.
  \item If we use the algorithm described in
    our second solution, combining Proposition~\ref{prop:ks-theta} with the first
    case of Proposition~\ref{prop:root-fpz}, we obtain an estimate of
    $O( r^{(\omega^2-4\omega-1)/(\omega-5)}+(s+r^{2/(5-\omega)})\MM(r)\log(q)
+s^{\omega-1}r\log(r)\log(s)+\MM(rs)\log(r)\log(s))$
    operations.
  \item Otherwise, we use the algorithm described in our third solution. Combining
    Proposition~\ref{prop:iter-frob-theta} with the second case of
    Proposition~\ref{prop:root-fpz}, and replacing $s$ with $r$
    everywhere, we obtain an estimate of
    $O(\MM(r^2)\log^2(r) + \MM(r)\log(r)\log(q)) $ expected operations.
  \end{itemize}
  For $s\in O(r^{(\omega-3)/(\omega-5)})$, the first solution has the
  better runtime. Assuming $s\in \Omega(r^{(\omega-3)/(\omega-5)})$,
  the runtime in the second case can be written as $O(
  r^{(\omega^2-4\omega-1)/(\omega-5)}+(s+r^{2/(5-\omega)})\MM(r)\log(q)
  +s^{\omega-1}r\log(r)\log(s))$. If in addition $s$ is in $O(r^{1/(w-1)})$,
  this runtime is subquadratic, that is, better than that in our third solution.
\end{proof}

\subsection{The Artin--Schreier case}
\label{sec:fast-artin-schreier}

This section is devoted to the case $r = p^d$ for some positive integer $d$.
The technique we present here originates in Adleman and Lenstra's work~\cite[Lemma 5]{Adleman-Lenstra},
and appears again in Lenstra's~\cite{LenstraJr91}
and Allombert's~\cite{Allombert02}.
The chief difference with previous work once again consists
in replacing linear algebra
with a technique to solve the additive version of Hilbert's theorem 90
similar to the one in the previous section.
Recently, Narayanan~\cite[Sec.~4]{narayanan2016fast} independently described
a related variant with a similar complexity.

The idea is to build a 
tower inside the extension $k/\F_q$ using polynomials of the form $X^p - X - a$ where $a \in k$.
To  start, let $a_1 \in \F_q$ be such that $\trace_{\F_q/\F_p}(a_1) \ne 0$.
Let $\sigma \in  \gal(\F_q/\F_p)$ be a generator of the Galois group.
Then by the additive version of  Hilbert's theorem 90 there is no element 
$\alpha \in \F_q$ such that $\sigma(\alpha) - \alpha = a_1$.
Equivalently, the polynomial $f_1 =  X^p - X - a_1$ has no root in $\F_q$.
By the Artin--Schreier theorem in~\cite[Ch VI]{lang} $f_1$ 
is irreducible over $\F_q$. For a root $\alpha_1 \in k$ of $f_1$ the extension $\F_q(\alpha_1) / \F_q$ is of degree $p$.
Now let $a_2 = a_1\alpha_1^{p - 1}$. Then by~\cite[Lemma 5]{Adleman-Lenstra} the polynomial $f_2 = 
X^p - X - a_2$ is irreducible over $\F_q(\alpha_1)$. So, for a root $\alpha_2 \in k$ of $f_2$ the 
extension $\F_q(\alpha_2, \alpha_1) / \F_q(\alpha_1)$ is of degree $p$. Continuing the above 
process we build a tower
\begin{equation}
	\label{equ:art-sch-tower}
	\F_q \subset \F_q(\alpha_1)  \subset \cdots \subset \F_q(\alpha_1, \cdots, \alpha_d) = k.
\end{equation}
The idea of building such tower using the Artin--Schreier polynomials $f_i$ can also be found
in~\cite{LenstraJr91, Allombert02, shoup93}. By construction, $\alpha_i \notin \F_q(\alpha_1, \cdots, 
\alpha_{i - 1})$ for all $1 \le i \le d$. This means that the minimal polynomial of $\alpha_d$ over 
$\F_q$ is of degree $r = p^d$. Therefore, $k = \F_q(\alpha_d)$, and the element $\alpha_d$ is 
uniquely defined up to $\F_q$-isomorphism.

The above construction boils down to computing a root of the polynomial $f = X^p - X - a \in k[X]$.
We now show how to efficiently compute such a root.
By construction, $a$ is always in an intermediate subfield 
$\F_q \subseteq k' \subset k$. This means 
\[ \trace_{k / \F_p}(a) = \trace_{k' / \F_p}(\trace_{k / k'}(a)) = \trace_{k' / \F_p}(p^ia) = 0  \]
for some $i > 0$. By Hilbert's theorem 90 there exists $\alpha \in k$ such that $\alpha - 
\sigma(\alpha) = -a$ for a generator $\sigma \in \gal(k / \F_p)$. In other words, $\alpha^p - 
\alpha - a = 0$. Therefore, $\alpha$ is a root of $f$. On the other hand, for a random element 
$\theta \in k$ with nonzero trace, $\alpha$ can be explicitly set as
\begin{equation}
	\label{equ:art-sch-hil}
	\alpha = \frac{1}{\trace(\theta)}[a\sigma(\theta) + (a + \sigma(a))\sigma^2(\theta) + \cdots + 
	(a + \sigma(a) + \cdots + \sigma^{rt - 2}(a))\sigma^{rt - 1}(\theta)]
\end{equation}
where $t = [\F_q : \F_p]$.
To compute $\alpha$ using Eq.~\eqref{equ:art-sch-hil} efficiently, we define
\[ \xi_i = \sigma^i(x), \quad \beta_i(u) = u + \sigma(u) + \cdots + \sigma^{i - 1}(u), \quad 
\alpha_i(v) = \beta_1(a)\sigma(v) + \cdots + \beta_i(a)\sigma^i(v). \]
A simple calculation gives
\[ \alpha_{j + k}(v) = \alpha_j(v) + \sigma^j(\alpha_k(v)) + \beta_j(a)\sigma^{j + 1}(\beta_k(v)). 
\]
From these we can extract the following recursive relations:
\[
\begin{aligned}
	\xi_j & = 
	\begin{cases}
		\sigma^{j / 2}(\xi_{j / 2}) & j \text{ even} \\
		\sigma(\xi_{j - 1}) & j \text{ odd}
	\end{cases}, \\
	\beta_j(u) & = 
	\begin{cases}
		\beta_{j / 2}(u) + \sigma^{j / 2}(\beta_{j / 2}(u)) & j \text{ even} \\
		u + \sigma(\beta_{j - 1}(u)) & j \text{ odd}
	\end{cases}, \\
	\alpha_{j}(v) & = 
	\begin{cases}
		\alpha_{j / 2}(v) + \sigma^{j / 2}(\alpha_{j / 2}(v)) + \beta_{j / 2}(a)\sigma^{j / 2 + 
		1}(\beta_{j / 2}(v)) & j \text{ even} \\
		\alpha_1(v) + \sigma(\alpha_{j - 1}(v)) + a\sigma^2(\beta_{j - 1}(v)) & j \text{ odd} 
	\end{cases}
\end{aligned}
\]
Therefore, the values $\trace(\theta) = \beta_{rt}(\theta)$, and $\alpha = 
\beta_{rt}(\theta)^{-1}\alpha_{rt}(\theta)$ can be computed recursively, in $O(\log(rt))$ steps. At 
step $j$ of the recursive algorithm, $\xi_j, \beta_j(a), \beta_j(\theta), \alpha_j(\theta)$ are 
computed. As before, the action of $\sigma^j$ is the same as composing with $\xi_j$. So each step 
of the recursion is dominated by $O(1)$ modular compositions over $\F_q$ at the cost of $O(r^{(\omega+1)/2})$ 
operations in $\F_q$. The initial value of $\xi_1 = x^q$ is computed using $O(\MM(r)\log(q))$ 
operations in $\F_q$. Therefore, the cost of computing a root of $f$ is $O(r^{(\omega+1)/2}\log(rt) + 
\MM(r)\log(q))$ operations in $\F_q$.

Now, to compute $\alpha_d$ in Eq.~\eqref{equ:art-sch-tower} we 
need to take $d$ roots where $d \in O(\log(r) / \log(p))$ which leads to the following result. (Note that $\xi_1$ is computed only once and reused thereafter.)

\begin{proposition}
	Let $r = p^d$ for a positive integer $d$, and let $t = [\F_q : \F_p]$. An isomorphism of two 
	extensions $k / \F_q$, $K / \F_q$ of degree $r$ can be constructed using 
	$O(r^{(\omega+1)/2}\log(rt)\log(r) + \MM(r)\log(q))$ operations in $\F_q$.
\end{proposition}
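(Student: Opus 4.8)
The plan is to obtain the embedding description by running, independently in $k$ and in $K$, the Artin--Schreier tower construction described above, and then to let the cost of a single root extraction --- bounded in the preceding discussion by $O(r^{(\omega+1)/2}\log(rt)+\MM(r)\log q)$ operations --- drive the whole analysis. The output will be a pair $(\alpha_d,\beta_d)$ with $\alpha_d\in k$ and $\beta_d\in K$, both of degree $r$ over $\F_q$ and sharing the same minimal polynomial; this is exactly what the embedding description problem asks for, and turning such a description into an effective morphism is the business of the evaluation algorithms treated later and need not concern us here.

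Concretely, I would first fix a single element $a_1\in\F_q$ with $\trace_{\F_q/\F_p}(a_1)\neq0$ (found at negligible cost, and identical for both fields). Building the tower~\eqref{equ:art-sch-tower} in $k$ then amounts to computing, for $i=1,\dots,d$, a root $\alpha_i\in k$ of $f_i=X^p-X-a_i$, where $a_i=a_1\alpha_{i-1}^{p-1}$; the final $\alpha_d$ generates $k$, has minimal polynomial of degree $r$ over $\F_q$, and --- by the classical fact recalled above, going back to~\cite{Adleman-Lenstra,LenstraJr91} --- is canonically determined up to $\F_q$-isomorphism. Running the identical recipe in $K$ produces $\beta_d\in K$, so that $\alpha_d$ and $\beta_d$ have the same minimal polynomial; its value, if wanted explicitly, is recovered from $\alpha_d$ with $O(r^{(\omega+1)/2})$ further operations~\cite{shoup93}.

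For the complexity I would first dispatch the cheap steps: each passage from $\alpha_{i-1}$ to $a_i=a_1\alpha_{i-1}^{p-1}$ costs $O(\MM(r)\log p)$, hence $O(\MM(r)\log r)$ over the whole tower, which is absorbed by the claimed bound. The dominant contribution is the $d$ root extractions. Each one evaluates the explicit Hilbert~90 expression~\eqref{equ:art-sch-hil} through the recursive relations for $\xi_j$, $\beta_j(u)$ and $\alpha_j(v)$: the recursion has depth $O(\log(rt))$, and every step costs $O(1)$ modular compositions in $k=\F_q[X]/f(X)$, i.e.\ $O(r^{(\omega+1)/2})$ operations, giving $O(r^{(\omega+1)/2}\log(rt))$ per extraction. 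The only ingredient that is not a modular composition is the initial value $\xi_1=x^q$, computed by binary powering in $O(\MM(r)\log q)$; since it depends only on $k$ and not on $a$, it is computed once and reused across all $d$ levels, which is precisely why the $\MM(r)\log q$ term is incurred with multiplicity $1$ rather than $d$. Finally, the random $\theta\in k$ with $\trace_{k/\F_p}(\theta)\neq0$ required by~\eqref{equ:art-sch-hil} is found in $O(1)$ expected trials at no extra cost, because $\trace_{k/\F_p}(\theta)=\beta_{rt}(\theta)$ is produced as a byproduct of the very recursion being run.

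Summing the $d$ levels performed in $k$ with the $d$ performed in $K$, and using $d=\log(r)/\log(p)=O(\log(r))$, the total expected cost is
\[
O\bigl(d\,r^{(\omega+1)/2}\log(rt)+\MM(r)\log q\bigr)\subseteq O\bigl(r^{(\omega+1)/2}\log(rt)\log(r)+\MM(r)\log q\bigr),
\]
which is the stated bound. I expect the genuinely delicate point to be \emph{correctness} rather than complexity: two independent randomized runs --- one in $k$, one in $K$, each making its own choices of $\theta$ at every level of the tower --- must nevertheless land on generators with the same minimal polynomial, and the only thing guaranteeing this is the \emph{canonicity} of the Artin--Schreier generator $\alpha_d$ recalled above. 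Everything downstream of that fact is routine addition-chain and modular-composition bookkeeping, mirroring the analyses already carried out for $\tau_m$ in Proposition~\ref{prop:trace-like} and for $\theta_a$ in Proposition~\ref{prop:xitheta}.
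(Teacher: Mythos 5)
Your proposal follows essentially the same route as the paper: build the Artin--Schreier tower independently in $k$ and $K$, rely on the canonicity of $\alpha_d$ (up to $\F_q$-conjugacy) for correctness, and charge each of the $d=O(\log r/\log p)$ root extractions $O(r^{(\omega+1)/2}\log(rt))$ via the recursive Hilbert~90 evaluation, with the one-time $O(\MM(r)\log q)$ cost of $\xi_1=x^q$ reused across levels. The accounting and the conclusion match the paper's argument, so there is nothing to add.
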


\subsection{High-degree prime powers}
\label{sec:fast-algor-large}

We end this section with an algorithm that is particularly efficient when the extension 
degree $r$ is a high-degree prime power.
Allombert's algorithm works well in this case, however its
complexity depends linearly on the order $s$ of $q$ modulo $r$. If $r=v^d$ for some prime $v\ne p$,
it is natural to seek an algorithm which depends on the order of $q$ modulo $v$ instead.
The idea we present is a variation on Lenstra's algorithm, using successive $v$-th root
extractions.
We are not aware of this algorithm appearing anywhere in the literature. %
We also note that Narayanan~\cite[Sec.~5]{narayanan2016fast} recently
published a radically different generalization
of Allombert's algorithm with a very similar complexity in $r$ (his
algorithm has much worse complexity in $q$, though).

An overview of our construction is as follows. Let $r = v^d$ where $v \ne p$ is a 
prime and $d$ is a positive integer. Suppose the extension $k/\F_q$ is of degree $r$. Let $s$ be 
the order of $q$ in $\Z / v\Z$, and write $q^s - 1 = uv^t$ where $\gcd(v, u) = 1$. We first move to 
cyclotomic field extensions $\F_q(\zeta), k(\zeta), K(\zeta)$ of degree $s$ over $\F_q, k, K$ respectively, by 
obtaining an irreducible factor of the $v$-th cyclotomic polynomial over $\F_q$. Then we obtain a  
random non-$v$-adic residue $\eta \in \F_q(\zeta)$.

We have $[k(\zeta): \F_q(\zeta)] = r$, so we can 
compute an $r$-th root $\theta$ of $\eta$ in $k(\zeta)$ using $d$ successive $v$-th 
root extractions in $k(\zeta)$. Therefore, $\theta$ is a generator for the unique subgroup of 
$k(\zeta)^*$ of order $v^{d + t}$. Then the constant term $\alpha$ of $\theta$ is such that $k = 
\F_q(\alpha)$. %
Doing the same in $K$ yields an element $\beta\in K$ such that
the map $\alpha\mapsto\beta$ defines an isomorphism. %
The main difficulty in applying such an algorithm resides in computing
efficiently $v$-th roots in $k(\zeta)$, for which we use Proposition~\ref{prop:root_high_degree_extension};
this yields the main result of this section.
\begin{theorem}
	\label{theorem:isom-root}
	Let $r = v^d$ where $v \ne p$ is a prime and $d$ is a positive integer. Also let $s$ be the 
	order of $q$ in $\Z / v\Z$. Given extensions $k/\F_q$, $K/\F_q$ of degree $r$, an
        embedding $k\hookrightarrow K$ can be constructed at the cost of 
an expected $O(s^{(\omega-1)/2}r^{(\omega+1)/2}\log(r)^2 + s\MM(r)\log(r)\log(q)$
operations in $\F_q$.
\end{theorem}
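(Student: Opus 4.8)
The plan is to realise the overview above as a proof in two halves: a correctness argument, which is a Kummer-theoretic computation over $\F_q(\zeta)$ in the spirit of Lenstra~\cite{LenstraJr91}, and a cost analysis, which amounts to iterating Proposition~\ref{prop:root_high_degree_extension} a logarithmic number of times. First I would dispose of the degenerate case $d=1$: there $r=v$ is prime, $s$ is the order of $q$ in $(\Z/r\Z)^\times$, and Allombert's algorithm of Section~\ref{sec:fast-kummer} (or a single call to Proposition~\ref{prop:root-fpz}) already runs within the announced bound; so from here on I assume $d\ge 2$, which is the regime of Proposition~\ref{prop:root_high_degree_extension}. Fix an irreducible factor $h\in\F_q[Z]$ of the $v$-th cyclotomic polynomial, of degree $s$, and set $\F_q(\zeta)=\F_q[Z]/h(Z)$, $k(\zeta)=k[Z]/h(Z)$, $K(\zeta)=K[Z]/h(Z)$; these are the fields $\F_{q^s}$, $\F_{q^{sr}}$, $\F_{q^{sr}}$, with $\F_q(\zeta)$ embedded in both of the others.

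For correctness, write $q^s-1=uv^t$ with $\gcd(u,v)=1$ (note $t\ge 1$ since $s$ is the order of $q$ modulo $v$). Two valuation computations drive the argument: on the one hand $(q^{sr}-1)/(q^s-1)=\sum_{j=0}^{r-1}q^{js}\equiv r\equiv 0\pmod v$, and on the other $v_v(q^{sr}-1)=t+d$ (a routine $v$-adic valuation computation, lifting the exponent for $v$ odd; $v=2$ needs a minor adjustment). Together these show that a non-$v$-adic residue $\eta\in\F_q(\zeta)^*$ — obtained after $O(1)$ random trials, each a $v$-th power test costing $O(\MM(s)\log q)$ operations — becomes a $v^d$-th power in $k(\zeta)$; hence $\eta$ has an $r$-th root $\theta$ there, which after normalisation (raising to a power coprime to $v$ so as to generate the $v$-Sylow subgroup of $k(\zeta)^*$) is determined by $\eta$ up to the action of $\gal(k(\zeta)/\F_q)$. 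Since $\eta$ is not a $v$-th power in $\F_q(\zeta)$, the Kummer irreducibility criterion for $X^{v^d}-\eta$ gives $\F_q(\zeta)(\theta)=k(\zeta)$; writing $\theta=\sum_{i=0}^{s-1}\alpha_i\zeta^i$ with $\alpha_i\in k$, the $\alpha_i$ cannot therefore all lie in a common proper subfield of $k$, so the constant term $\alpha_0$ — or, failing that, a random $\F_q$-linear combination of the $\alpha_i$, which works with probability $1-O(1/q)$ since the subfield lattice of $k$ is a chain — generates $k$ over $\F_q$. Repeating the construction in $K(\zeta)$ with the \emph{same} $\eta$ produces $\beta\in K$, and because roots of unity carry enough Galois structure — precisely the property underlying Lenstra's and Allombert's algorithms, which the authors note fails for elliptic curves — $\theta$ and its counterpart in $K(\zeta)$ are conjugate over $\F_q$, so $\alpha$ and $\beta$ have the same minimal polynomial, which is the required description of the embedding $k\hookrightarrow K$.

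For the cost, factoring the $v$-th cyclotomic polynomial over $\F_q$ by~\cite[Theorem~9]{shoup94} costs $O(\MM(v)\log(vq))$, and building $\F_q(\zeta)$, $k(\zeta)$, $K(\zeta)$ and finding $\eta$ are all negligible compared with what follows. The dominant step is computing $\theta$ (and its analogue in $K(\zeta)$) by $d$ successive $v$-th root extractions in $k(\zeta)$ (resp.\ $K(\zeta)$): each costs, by Proposition~\ref{prop:root_high_degree_extension}, an expected $O(s^{(\omega-1)/2}r^{(\omega+1)/2}\log r+s\MM(r)\log q)$ operations, and since $d\le\log r/\log v = O(\log r)$ there are $2d=O(\log r)$ of them; summing gives $O(s^{(\omega-1)/2}r^{(\omega+1)/2}\log^2 r+s\MM(r)\log r\log q)$ operations. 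Extracting the constant term is free, the minimal polynomial of $\alpha$ over $\F_q$ costs $O(r^{(\omega+1)/2})$ by~\cite{shoup93}, and the $O(1)$ expected retries on the linear combination only multiply the total by a constant; all of this is absorbed into the bound above, yielding the claimed complexity. The step I expect to be the real obstacle is not this summation but the correctness just sketched: one must normalise $\theta$ precisely enough that, for the single shared $\eta$, the $\gal(k(\zeta)/\F_q)$-orbit of $\theta$ and the $\gal(K(\zeta)/\F_q)$-orbit of its counterpart correspond under any $\F_q(\zeta)$-isomorphism $k(\zeta)\cong K(\zeta)$ — in particular one has to control the interaction between the residual $\mu_{v^d}$-ambiguity of the $r$-th root and the Frobenius action, and to handle the $v=2$ case of the Kummer criterion; this Galois-theoretic bookkeeping is the heart of the matter, everything else being routine.
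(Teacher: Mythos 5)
Your algorithm and your cost accounting coincide with the paper's: build $\F_q(\zeta)$, $k(\zeta)$, $K(\zeta)$ from an irreducible factor of the $v$-th cyclotomic polynomial, draw a non-$v$-adic residue $\eta\in\F_q(\zeta)$, extract an $r$-th root by $d=O(\log(r)/\log(v))$ successive $v$-th roots via Proposition~\ref{prop:root_high_degree_extension}, read the answer off the $\zeta$-expansion, and sum $d$ times the per-root cost to get exactly the announced bound (your separate treatment of $d=1$ is a harmless refinement, since Proposition~\ref{prop:root_high_degree_extension} assumes $d\ge2$). The gap is that you never finish the correctness argument, and you say so yourself: the claim that the root $\theta_1$ computed in $k(\zeta)$ and the root $\theta_2$ computed independently in $K(\zeta)$ are matched by an $\F_q$-isomorphism, so that $\alpha\mapsto\beta$ really describes an embedding, is deferred to an unspecified ``normalisation'' of $\theta$ and labelled the heart of the matter. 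A proof cannot leave its central claim open; note that the paper's own proof only details the complexity and asserts correctness in one line, so this is precisely the part you were expected to supply.

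What you are missing is that no normalisation is needed: the gap closes in two lines from facts you already established. For $v$ odd, your valuation computation shows the $v$-part of the order of $\theta_1$ is $v^{t+d}$, which forces $\F_q(\theta_1)=k(\zeta)$ and hence the irreducibility of $X^r-\eta$ over $\F_q(\zeta)$; its $r$ roots therefore form a single Galois orbit over $\F_q(\zeta)$, so whatever roots the two runs return, there is an $\F_q(\zeta)$-isomorphism $\psi:k(\zeta)\to K(\zeta)$ with $\psi(\theta_1)=\theta_2$. Such a $\psi$ fixes $\F_q$, hence carries the unique degree-$r$ subfield $k$ onto $K$, and it acts coefficientwise on $\theta_1=\sum_i\alpha_i\zeta^i$, so $\psi(\alpha_i)=\beta_i$ for every $i$; in particular $\alpha_0$ and $\beta_0$ (or a common $\F_q$-linear combination, as in your fallback) have the same minimal polynomial. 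The residual $\mu_{v^d}$-ambiguity you worry about is exactly this Galois ambiguity and is harmless. One caveat: the ``minor adjustment'' you promise for $v=2$ is not minor. When $q\equiv3\pmod 4$ every non-square of $\F_q$ lies in $-4\F_q^4$, so both the Kummer criterion and the lifting-the-exponent step fail, and the $2^d$-th roots of $\eta$ can land in a proper subfield (for $q=3$, $r=4$, the fourth roots of $2$ lie in $\F_9$), so the construction genuinely needs modification there; the paper's terse argument does not address this case either, but your write-up explicitly claims to cover it.
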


\begin{proof}
We can construct the embedding of Theorem~\ref{theorem:isom-root} as
follows. We first build the extensions $k(\zeta) / \F_q(\zeta)$ and
$K(\zeta)/\F_q(\zeta)$. Let $\eta$ be a non-$v$-adic residue in
$\F_q(\zeta)$.  Then $\eta$ is an $r$-power in $k(\zeta)$ and
$K(\zeta)$. To obtain $r$-th roots $\theta_1\in k$, $\theta_2\in K$ of
$\eta$ we take $d$ successive $v$-th roots.

\begin{algorithm}
	[Kummer-type algorithm for extension towers]
	\label{algorithm:alpha-root}
	\begin{algorithmic}[1]
		\REQUIRE Extensions $k / \F_q$ $K/\F_q$ of degree prime-power $r = v^d$, with $v\ne p$.
		\ENSURE The description of a field embedding $k\hookrightarrow K$.
		\STATE\label{step:factor-cyclo} Factor the $v$-th cyclotomic polynomial over $\F_q$ to 
		build the extensions $k(\zeta) / \F_q(\zeta)$ and $K(\zeta)/\F_q(\zeta)$;
		\STATE\label{step:non-v-res} Find a random non-$v$-adic residue $\eta \in \F_q(\zeta)$;
		\STATE\label{step:rep-root1} Compute $r$-th roots $\theta_1,\theta_2$ of $\eta$ in $k(\zeta),K(\zeta)$;
		\STATE Let $\alpha, \beta$ be the constant terms of $\theta_1, \theta_2$ respectively;
		\RETURN The field embedding defined by $\alpha\mapsto\beta$.
	\end{algorithmic}
\end{algorithm}

Step~\ref{step:factor-cyclo} is done using~\cite[Theorem~9]{shoup94},
which takes $O(\MM(v)\log(vq))$ operations in $\F_q$. We do
Step~\ref{step:non-v-res} by taking random elements in $\F_q(\zeta)$
until a non-$v$-adic residue is found. Testing $v$-adic residuosity of
$\eta$ amounts to computing $\eta^{(q^s-1)/v}$ in $\F_q(\zeta)$, which
can be done in $O(s^{(\omega-1)/2}\log(s) + \MM(s)\log(v)\log(s) +
\MM(s)\log(q))$ operations in $\F_q$, in view of our discussion in
Section~\ref{sec:preliminaries}.

Step~\ref{step:rep-root1} is done using $d = O(\log(r) / \log(v))$
successive root extractions, each of which takes an expected
$O(s^{(\omega-1)/2}r^{(\omega+1)/2}\log(r) + s\MM(r)\log(q))$
operations in $\F_q$. Therefore Algorithm~\ref{algorithm:alpha-root}
runs in an expected $O(s^{(\omega-1)/2}r^{(\omega+1)/2}\log(r)^2 + s\MM(r)\log(r)\log(q)$ operations in
$\F_q$.
\end{proof}

\section{Rains' algorithm}
\label{sec:rains-algorithm}

We now move on to a different family of algorithms based on the theory of
algebraic groups. The simplest of these is Pinch's
cyclotomic algorithm~\cite{Pinch}. The idea is very simple: given $r$,
select an integer $\ell$ such that $[\F_q(\mu_\ell):\F_q]=r$, where
$\mu_\ell$ is the group of $\ell$-th roots of unity.  Then, any
embedding $k\to K$ takes $\mu_\ell\subset k^\ast$ to $\mu_\ell\subset
K^\ast$, and the minimal polynomial of any primitive $\ell$-th root of
unity has degree exactly $r$.

Pinch's algorithm is very effective when $r=\euler(\ell)$. Indeed in
this case the $\ell$-th cyclotomic polynomial $\Phi_\ell$ is
irreducible over $\F_q$, and its roots form a unique orbit under the
action of the absolute Galois group of $\F_q$. Thus we can take any
primitive $\ell$-th roots of unity $\alpha\in k$ and $\beta\in K$ to
describe the embedding.

In the general case, however, the roots of $\Phi_\ell$ are partitioned
in $\euler(\ell)/r$ orbits, thus for two randomly chosen $\ell$-th
roots of unity $\zeta_1\in k$ and $\zeta_2\in K$, we can only say that
there exists an exponent $e$ such that
\begin{equation*}
  \alpha = \zeta_1 \mapsto \zeta_2^e = \beta
\end{equation*}
defines a valid embedding. Pinch's algorithm tests all possible
exponents $e$, until a suitable one is found. To test for the validity
of a given $e$, it applies the embedding $\phi:\zeta_1\mapsto\zeta_2$
to the class of $X$ in $k$, and verifies that its image is a root of
$f$ in $K$ (see Section~\ref{sec:eval} for details on embedding
evaluation).

The trial-and-error nature of Pinch's algorithm makes it impractical,
except for rare favorable cases where a \emph{small} $\ell$ such that
$r=\euler(\ell)$ can be found. One possible workaround, suggested by
Pinch himself, is to replace the group of roots of unity with a group
of torsion points of a well chosen elliptic curve. We analyze this
idea in greater detail in Section~\ref{sec:rains-elliptic}.

This section is devoted to a different way of improving Pinch's
algorithm, imagined by Rains~\cite{rains2008}, and implemented in the
Magma computer algebra system~\cite{MAGMA}. Rains' technical
contribution is twofold: first he replaces roots of unity with
Gaussian periods to avoid trial-and-error, second he moves to slightly
larger extension fields to ensure the existence of a small $\ell$ as
above.

\subsection{Uniquely defined orbits from Gaussian periods}

For the rest of the section, we are going to assume that $q$ is
prime. The case where $q$ is a higher power of a prime is discussed in
Note~\ref{note:rains-non-prime}.

Suppose that we have an $\ell$, coprime with $q$, such that
$[\F_q(\mu_\ell):\F_q]=r$, then the cyclotomic polynomial $\Phi_\ell$
factors over $\F_q$ into $\euler(\ell)/r$ distinct factors of degree
$r$. Pinch's method, by choosing random roots of $\Phi_\ell$ in $k$
and $K$, randomly selects one of these factors as minimal polynomial.
By combining the roots of $\Phi_\ell$ into Gaussian periods, Rains'
method uniquely selects a minimal polynomial of degree $r$.

\begin{definition}
  Let $q$ be a prime, and let $\ell$ be a squarefree integer such that
  $(\Z/\ell\Z)^\times = \langle q\rangle \times S$ for some $S$.  For any
  generator $\zeta_\ell$ of $\mu_\ell$ in $\F_q(\mu_\ell)$, define the
  Gaussian period $\eta_q(\zeta_\ell)$ as
  \begin{equation}
    \eta_q(\zeta_\ell) = \sum_{\sigma\in S}{\zeta_\ell^{\sigma}}.
  \end{equation}
\end{definition}

It is evident from the definition that the Galois orbit of
$\eta_q(\zeta_\ell)$ is independent of the initial choice of
$\zeta_\ell$. Much less evident is the fact that this orbit has
maximal size and forms a normal basis of $\F_q(\mu_\ell)$, as stated
in the following lemma.

\begin{lemma}
  \label{th:gaussian}
  Let $q$ be a prime, and let $\ell$ be a squarefree integer such that
  $(\Z/\ell\Z)^\times = \langle q\rangle \times S$ for some $S$.  The
  periods $\eta_q(\zeta_\ell^\tau)$ for $\tau$ running through
  $\langle q\rangle$ form a normal basis of $\F_q(\mu_\ell)$ over
  $\F_q$, independent of the choice of $\zeta_\ell$.
\end{lemma}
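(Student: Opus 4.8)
The plan is as follows. Fix a generator $\zeta=\zeta_\ell$ of $\mu_\ell$ and identify $\gal(\F_q(\mu_\ell)/\F_q)$ with the decomposition group $\langle q\rangle\subseteq(\Z/\ell\Z)^\times$ via the Frobenius $x\mapsto x^q$; in particular $[\F_q(\mu_\ell):\F_q]=r$ is the order of $q$ modulo $\ell$. For $\tau\in\langle q\rangle$ one has $\tau\bigl(\eta_q(\zeta)\bigr)=\sum_{\sigma\in S}\zeta^{\tau\sigma}=\eta_q(\zeta^{\tau})$, so the Galois orbit of $\eta_q(\zeta)$ is exactly $\{\eta_q(\zeta^{\tau}):\tau\in\langle q\rangle\}$; and since any other generator of $\mu_\ell$ has the form $\zeta^{\tau\sigma}$ with $\tau\in\langle q\rangle$, $\sigma\in S$, and $\sigma S=S$, it produces the same set of periods — which settles independence of the choice of $\zeta_\ell$. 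It therefore remains to show that the $r$ elements $\eta,\eta^{q},\dots,\eta^{q^{r-1}}$, where $\eta:=\eta_q(\zeta)$, are $\F_q$-linearly independent; this forces them to be pairwise distinct, so the orbit has full size $r$, $\eta$ generates $\F_q(\mu_\ell)$, and the orbit is a normal basis.

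For the linear independence I would invoke the classical polynomial criterion for normal bases (see, e.g., Lidl--Niederreiter): $\{\eta^{q^i}\}_{0\le i<r}$ is a normal basis of $\F_q(\mu_\ell)/\F_q$ if and only if $x^r-1$ and $A(x):=\sum_{i=0}^{r-1}\eta^{q^i}x^{r-1-i}$ are coprime in $\F_q(\mu_\ell)[x]$. Since $\overline{\F_q}$ is algebraically closed and contains every root of $x^r-1$, this is equivalent to $A(\beta)\ne 0$ for all $\beta\in\overline{\F_q}$ with $\beta^r=1$. Working with this formulation rather than directly diagonalising the circulant matrix $\bigl(\eta^{q^{i+j}}\bigr)_{i,j}$ is what makes the argument insensitive to whether $q\mid r$, which does occur (e.g. $q=2$, $\ell=5$, $r=4$, where $S$ is trivial and $\eta=\zeta$).

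Next I would rewrite $A(\beta)$ as a Gauss sum. Write each $a\in(\Z/\ell\Z)^\times$ uniquely as $a=q^{i(a)}\sigma_a$ with $0\le i(a)<r$ and $\sigma_a\in S$; then $a\mapsto i(a)\bmod r$ is a homomorphism $(\Z/\ell\Z)^\times\to\Z/r\Z$ that is trivial on $S$, and using $\eta^{q^i}=\sum_{\sigma\in S}\zeta^{q^i\sigma}$ one obtains
\[
A(\beta)=\beta^{r-1}\sum_{i=0}^{r-1}\eta^{q^i}\beta^{-i}=\beta^{r-1}\sum_{a\in(\Z/\ell\Z)^\times}\chi_\beta(a)\,\zeta^{a}=\beta^{r-1}\,g(\chi_\beta),
\]
where $\chi_\beta(a):=\beta^{-i(a)}$ is a character $(\Z/\ell\Z)^\times\to\overline{\F_q}^{\times}$ (well defined since every element of $\overline{\F_q}^{\times}$ has order prime to $q$) and $g(\chi_\beta)$ is the associated Gauss sum. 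As $\beta^{r-1}\ne 0$, it suffices to show $g(\chi_\beta)\ne 0$.

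The non-vanishing of the Gauss sum is exactly where squarefreeness of $\ell$ enters. By CRT write $\ell=p_1\cdots p_k$ with the $p_i$ distinct primes, decompose $(\Z/\ell\Z)^\times\cong\prod_i(\Z/p_i\Z)^\times$ and $\chi_\beta=\prod_i\chi_i$, and choose primitive $p_i$-th roots of unity $\xi_i$ with $\zeta=\xi_1\cdots\xi_k$; then the Gauss sum factors, $g(\chi_\beta)=\prod_i g_{p_i}(\chi_i)$ with $g_{p_i}(\chi_i)=\sum_{a\in(\Z/p_i\Z)^\times}\chi_i(a)\xi_i^{a}$. Each local factor is a unit: if $\chi_i$ is trivial then $g_{p_i}(\chi_i)=\sum_{a\ne 0}\xi_i^{a}=-1$, while if $\chi_i$ is non-trivial it is automatically primitive ($p_i$ being prime), and the characteristic-free identity $g_{p_i}(\chi_i)\,g_{p_i}(\overline{\chi_i})=\chi_i(-1)\,p_i$ (proved by the usual reindexing $a\mapsto ab$) gives $g_{p_i}(\chi_i)\ne 0$ because $q\ne p_i$ makes $p_i$ invertible in $\F_q$. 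Hence $g(\chi_\beta)\ne 0$ for every relevant $\beta$, so $\gcd(x^r-1,A)=1$ and the lemma follows. I expect the only real work to be bookkeeping — keeping the Galois/$(\Z/\ell\Z)^\times$ dictionary and the indexing of the Gauss-sum rewrite straight; the one genuinely essential hypothesis is that $\ell$ is squarefree, which is precisely what forces each local factor above to be either $-1$ or a full Gauss sum modulo a prime, hence nonzero.
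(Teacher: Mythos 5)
Your proof is correct, and it takes a genuinely different route from the paper. The paper disposes of this lemma by citation to Feisel--von zur Gathen--Shokrollahi, whose argument runs through characteristic zero: one first shows the (general) Gauss periods are normal over $\Q$, then uses integrality to carry normality through reduction modulo $q$. You instead work directly in characteristic $q$: the reduction to non-vanishing of $A(\beta)=\sum_i\eta^{q^i}\beta^{r-1-i}$ at the roots of $x^r-1$ is exactly the Lidl--Niederreiter normal-basis criterion (and, as you note, phrasing it via the gcd rather than a circulant determinant correctly sidesteps the case $q\mid r$, which does occur, e.g.\ $q=2$, $\ell=5$); the rewriting of $A(\beta)$ as $\beta^{r-1}g(\chi_\beta)$ is right, since the map $(i,\sigma)\mapsto q^i\sigma$ is a bijection onto $(\Z/\ell\Z)^\times$ by the direct-product hypothesis and $a\mapsto i(a)\bmod r$ is a homomorphism killed on $S$; and the CRT factorization $g(\chi_\beta)=\prod_i g_{p_i}(\chi_i)$ together with $g_{p_i}(\chi_i)=-1$ (trivial case) or $g_{p_i}(\chi_i)g_{p_i}(\overline{\chi_i})=\chi_i(-1)p_i\neq 0$ in $\overline{\F_q}$ (nontrivial case, $p_i\neq q$ because $q$ is a unit modulo $\ell$) is a valid, purely formal computation. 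Your diagnosis of where squarefreeness enters is also the right one: without it a component character could be nontrivial but imprimitive modulo a prime power, and the local Gauss sum could vanish. What your approach buys is a self-contained, elementary finite-field proof that makes the hypotheses' roles transparent; what the paper's citation buys is brevity and the greater generality of the reference (general Gauss periods), at the cost of an argument that passes through number fields and reduction. Both routes ultimately rest on the non-vanishing of Gauss sums, so they are philosophically close, but yours is the more direct and verifiable one in this context.
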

\begin{proof}
  See~\cite[Main Theorem]{feisel1999normal}.
  The main idea of the proof is to show that cyclotomic units are
  normal in characteristic zero, then that integrality conditions
  carry normality through reduction modulo $q$.
\end{proof}

In what follows we are going to write $\eta(\zeta_\ell)$ when $q$ is
clear from the context.

\begin{example} 
  Consider the extension $\F_8/\F_2$ of degree $3$, which is generated
  by the $7$-th roots of unity. We have a decomposition
  $(\Z/7\Z)^\times=\langle 2\rangle\times\langle-1\rangle$, and the
  cyclotomic polynomial factors as
  \begin{equation}
    \Phi_7(X) = (X^3 + X + 1) (X^3 + X^2 + 1).
  \end{equation}
  For any root $\zeta_7$, we define the period
  \begin{equation}
    \eta_2(\zeta_7) = \zeta_7+\zeta_7^{-1}.
  \end{equation}
  The three periods $\eta_2(\zeta_7)$, $\eta_2(\zeta_7)^2$ and
  $\eta_2(\zeta_7)^4$ are all roots of the polynomial $x^3+x^2+1$ and
  form a normal basis of $\F_8/\F_2$.
\end{example}

\subsection{Rains' cyclotomic algorithm}

The bottom-line of Rains' algorithm follows immediately from the
previous section: given $k$, $K$ and $r$,
\begin{enumerate}
\item find a \emph{small} $\ell$ satisfying the conditions of
  Lemma~\ref{th:gaussian} with $[\F_q(\mu_\ell):\F_q]=r$;
\item take random $\ell$-th roots of unity $\zeta_\ell\in k$ and
  $\zeta_\ell'\in K$;
\item return the Gaussian periods $\alpha_r=\eta(\zeta_\ell)$ and
  $\beta_r=\eta(\zeta_\ell')$.
\end{enumerate}

The problem with this algorithm is the vaguely defined
\emph{smallness} requirement on $\ell$. Indeed the conditions of
Lemma~\ref{th:gaussian} imply that $\ell$ divides $\Phi_r(q)$, thus in
the worst case $\ell$ can be as large as $O(q^{\euler(r)})$, which
yields an algorithm of exponential complexity in the field size.

To circumvent this problem, Rains allows the algorithm to work in
small auxiliary extensions of $k$ and $K$, and then descend the
results to $k$ and $K$ via a field trace. In other words, Rains'
algorithm looks for $\ell$ such that $[\F_q(\mu_\ell):\F_q]=rs$ for
some small $s$. We summarize this method in
Algorithm~\ref{algorithm:rains-cyclo}; we only give the procedure for
the field $k$, the procedure for the field $K$ being identical.

\begin{algorithm}[Rains' cyclotomic algorithm]
  \label{algorithm:rains-cyclo}
  \begin{algorithmic}[1]
    \REQUIRE A field extension $k/\F_q$ of degree $r$; a squarefree
    integer $\ell$ such that
    \begin{itemize}
    \item $(\Z/\ell\Z)^\times = \langle q\rangle \times S$ for some $S$,
    \item $\#\langle q\rangle = rs$ for some integer $s$;
    \end{itemize}
    a polynomial $h$ of degree $s$ irreducible over $k$.
    \ENSURE A normal generator of $k$ over $\F_q$,
    with a uniquely defined Galois orbit.
    
    \STATE Construct the field extension $k'=k[Z]/h(Z)$;
    \REPEAT
    \STATE\label{algorithm:rains-cyclo:power} Compute $\zeta\leftarrow \theta^{(\#k'-1)/\ell}$ for a random $\theta\in k'$
    \UNTIL $\zeta$ is a primitive $\ell$-th root of unity;
    \STATE\label{algorithm:rains-cyclo:period} Compute $\eta(\zeta) \leftarrow \sum_{\sigma\in S}\zeta^\sigma$;
    \RETURN\label{algorithm:rains-cyclo:trace} $\alpha \leftarrow \trace_{k'/k}\eta(\zeta) = \sum_{i=0}^{s-1}\eta(\zeta)^{q^{ri}}$.
  \end{algorithmic}
\end{algorithm}

\begin{proposition}
  Algorithm~\ref{algorithm:rains-cyclo} is correct. On input
  $q,r,\ell,s$ it computes its output using
  $O(sr^{(\omega+1)/2}\log(sr) + \MM(sr)(\log(q) + (\ell/r)\log(\ell)))$ operations in $\F_q$ on average.
\end{proposition}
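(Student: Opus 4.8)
The plan is to verify correctness first, then account for the cost of each step of Algorithm~\ref{algorithm:rains-cyclo}. Correctness follows almost immediately from Lemma~\ref{th:gaussian}: since $\#\langle q\rangle = rs = [\F_q(\mu_\ell):\F_q]$, the field $k'=k[Z]/h(Z)$ of degree $rs$ over $\F_q$ contains a primitive $\ell$-th root of unity $\zeta$, and by Lemma~\ref{th:gaussian} the period $\eta(\zeta)$ is a normal generator of $\F_q(\mu_\ell)=k'$ whose Galois orbit is independent of the choice of $\zeta$. The trace $\trace_{k'/k}\eta(\zeta)$ then lands in $k$; because $\eta(\zeta)$ generates a normal basis, its trace down to $k$ is again a normal generator of $k$ over $\F_q$ (a partial trace of a normal element is normal in the subfield), and its Galois orbit depends only on $q,\ell,r,s$, not on the random choices. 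This gives the promised ``uniquely defined Galois orbit.'' One should also remark that a random $\theta\in k'$ yields a primitive $\ell$-th root of unity with probability $\euler(\ell)/\ell = \Omega(1/\log\log\ell)$, so $O(\log\log\ell) = \tildO(1)$ trials suffice on average (this is absorbed into the $\tildO$/average-case conventions, or can be folded into the $\log\ell$ factors already present).

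Next I would bound the cost step by step, working in $k' = \F_q[X]/f(X)[Z]/h(Z)$, an algebra of degree $rs$ over $\F_q$ where multiplication costs $O(\MM(rs))$ by Kronecker substitution. Step~\ref{algorithm:rains-cyclo:power} computes $\theta^{(\#k'-1)/\ell}$ by binary powering: the exponent has $O(\log(\#k'/\ell)) = O(rs\log q)$ bits, giving $O(\MM(rs)\,rs\log q)$ — but this is too much, so instead one computes it as an iterated-Frobenius-type expression. Writing $(\#k'-1)/\ell = (q^{rs}-1)/\ell$, and noting $\ell \mid q^{rs}-1$ with $\ell \geq$ (something like) $rs$, one applies the Frobenius-evaluation machinery of Proposition~\ref{prop:trace-like}: the relevant shape is $\mu_d = \alpha^{\lfloor q^d/c\rfloor}$ with $d = rs$ and $c = \ell$, or more directly one computes $\theta^{q^i}$ for the $O(\log(rs))$ values of $i$ appearing in an addition chain for $rs$ via modular composition, combined with a bounded number of small exponentiations modulo $\ell$. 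In Case~4 of Proposition~\ref{prop:trace-like} (with $m=r$, $s=s$, since $h$ may be taken in $\F_q[Z]$ or handled as a general extension), this costs $O(\MM(rs)\log q + (rs)^{(\omega+1)/2}\log(rs))$ plus $O(\MM(rs)\log\ell)$ for the final short exponentiation; the dominant pieces are $\MM(rs)\log q$ and $sr^{(\omega+1)/2}\log(sr)$ after simplifying the modular-composition cost.

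Then Step~\ref{algorithm:rains-cyclo:period} sums $\#S = \euler(\ell)/(rs) = O(\ell/(rs))$ powers $\zeta^\sigma$; computing all of them by, say, precomputing $\zeta, \zeta^2, \dots$ or a baby-step/giant-step style table costs $O((\ell/(rs))\,\MM(rs))$ multiplications, which is $O(\MM(sr)(\ell/r)/s \cdot 1)$ — matching the $\MM(sr)(\ell/r)\log\ell$ term once one accounts for the $O(\log\ell)$-bit exponents $\sigma$ requiring a few multiplications each, or more carefully by listing $S$ and doing $|S|$ powerings each of cost $O(\MM(rs)\log\ell)$, giving $O(\MM(rs)(\ell/(rs))\log\ell)$; since $s\geq 1$ this is at most $O(\MM(sr)(\ell/r)\log\ell)$. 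Finally Step~\ref{algorithm:rains-cyclo:trace} computes $\trace_{k'/k}$ as the partial-trace expression $\sum_{i=0}^{s-1}\eta(\zeta)^{q^{ri}}$, which is a $\tau$-type quantity from Proposition~\ref{prop:trace-like} with $d=s$, $c=r$, costing $O(\MM(rs)\log q + (rs)^{(\omega+1)/2}\log(rs)) = O(sr^{(\omega+1)/2}\log(sr) + \MM(sr)\log q)$, again subsumed by the earlier bounds. Summing the three contributions and discarding dominated terms yields the claimed $O(sr^{(\omega+1)/2}\log(sr) + \MM(sr)(\log q + (\ell/r)\log\ell))$.

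I expect the main obstacle to be bookkeeping the Frobenius/exponentiation cost in Step~\ref{algorithm:rains-cyclo:power} precisely: one must argue that $\theta^{(q^{rs}-1)/\ell}$ can be obtained without a naive binary powering over an exponent of $\Theta(rs\log q)$ bits, by splitting it as a Frobenius-type composite (cost $sr^{(\omega+1)/2}\log(sr)+\MM(sr)\log q$) followed by a short residual exponentiation of $O(\log\ell)$ bits (cost $\MM(sr)\log\ell$), and checking that the modular-composition exponent $(rs)^{(\omega+1)/2}$ is indeed dominated by $\MM(sr)(\ell/r)\log\ell$ whenever it is not already dominated by the stated $sr^{(\omega+1)/2}$ term — i.e.\ that the stated bound genuinely covers all regimes of $s,r,\ell$. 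The periodicity/primitivity test inside the REPEAT loop and the ``normal generator with uniquely defined orbit'' claim are comparatively routine given Lemma~\ref{th:gaussian}.
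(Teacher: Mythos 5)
Your correctness argument and overall plan match the paper's: $k'\cong\F_q(\mu_\ell)$, Lemma~\ref{th:gaussian} makes $\eta(\zeta)$ a normal generator of $k'$, and the trace of a normal element down to $k$ is a normal generator of $k$ (the paper cites~\cite[Prop.~5.2.3.1]{mullen2013handbook} for this step); your accounting of the expected number of loop iterations is also fine. The genuine gap is in the cost analysis of Step~\ref{algorithm:rains-cyclo:power} (and of your separate handling of Step~\ref{algorithm:rains-cyclo:trace}), and it is exactly the point you flag as ``the main obstacle'' without resolving it. You bound the modular-composition work by $(rs)^{(\omega+1)/2}\log(rs)$ and then ``simplify'' it to $sr^{(\omega+1)/2}\log(sr)$; but $(rs)^{(\omega+1)/2}=s^{(\omega+1)/2}r^{(\omega+1)/2}$ exceeds $sr^{(\omega+1)/2}$ by a factor $s^{(\omega-1)/2}$, and this loss is fatal: for $s\approx r$ and $\ell\approx r^2$ one gets $(rs)^{(\omega+1)/2}\approx r^{\omega+1}$, which is not covered by $sr^{(\omega+1)/2}\log(sr)+\MM(sr)(\ell/r)\log(\ell)\approx r^{(\omega+3)/2}+r^3$ (up to logarithmic factors). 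So the stated complexity does not follow from the intermediate bound you use, and no absorption argument can rescue it at that level of coarseness.

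The missing ingredient is the finer cost from Proposition~\ref{prop:trace-like}, Case~3 (the polynomial $h$ is given over $k$, not necessarily over $\F_q$): one Frobenius step in $k'$ costs $T_{Q,s}+T_S=O(sr^{(\omega+1)/2}+s^{(\omega+1)/2}\MM(r))$, not $(rs)^{(\omega+1)/2}$, so the $\mu_d$-computation with $d=rs$, $c=\ell$ costs $O\bigl((s^{(\omega+1)/2}\MM(r)+sr^{(\omega+1)/2}+\MM(sr)\log(\ell))\log(sr)+\MM(sr)\log(q)\bigr)$. The paper then absorbs the one problematic term via the observation $s^{(\omega-1)/2}\le s\le\euler(\ell)/r<\ell/r$, whence $s^{(\omega+1)/2}\MM(r)\log(sr)\in O(\MM(sr)(\ell/r)\log(\ell))$; this is the key inequality your write-up lacks. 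The paper also sidesteps your Frobenius-based trace computation altogether: it merges Steps~\ref{algorithm:rains-cyclo:period} and~\ref{algorithm:rains-cyclo:trace} into the single sum $\sum_{i=0}^{s-1}\sum_{\sigma\in S}\zeta^{q^{ri}\sigma}$ with exponents reduced modulo $\ell$, i.e.\ $\euler(\ell)/r$ exponentiations with exponent at most $\ell$, costing $O(\MM(sr)(\ell/r)\log(\ell))$ with no modular composition. With these two adjustments (Case~3 costs kept in their structured form, plus the $s^{(\omega-1)/2}\le\ell/r$ absorption) your argument goes through; as written, the decisive step is missing.
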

\begin{proof}
  By construction $k'$ is isomorphic to $\F_q(\mu_\ell)$. By
  Lemma~\ref{th:gaussian} $\eta(\zeta)$ is a normal generator of $k'$,
  and by~\cite[Prop.~5.2.3.1]{mullen2013handbook} $\alpha$ is a
  normal generator of $k$. This proves correctness.

  According to Proposition~\ref{prop:trace-like}, computing $\zeta$
  in Step~\ref{algorithm:rains-cyclo:power}
  costs
  $$O\bigl(\bigl(s^{(\omega+1)/2}\MM(r) + sr^{(\omega+1)/2}
  +\MM(sr)\log(\ell)\bigr)\log(sr)+\MM(sr)\log(q)\bigr),
  $$
  and the loop is executed $O(1)$ times on average. %
  By observing that $s^{(\omega-1)/2}\in O(\ell/r)$, this fits into
  the stated bound.
  
  Steps~\ref{algorithm:rains-cyclo:period}
  and~\ref{algorithm:rains-cyclo:trace} can be performed at once by
  observing that
  \[\alpha = \sum_{i=0}^{s-1}\eta(\zeta^{q^{ri}})= \sum_{i=0}^{s-1}\sum_{\sigma\in S}\zeta^{q^{ri}\sigma}.\]
  By reducing $q^{ri}\sigma$ modulo $\ell$, we can compute this sum at
  the cost of $\euler(\ell)/r$ exponentiations of degree at most
  $\ell$ in $k'$, for a total cost of
  $O((\MM(sr)(\ell/r)\log(\ell))$,
  using the techniques of Section~\ref{sec:preliminaries}.
  The final result is obtained as an element of $k$.
\end{proof}

The attentive reader will have noticed the irreducible polynomial $h$
of degree $s$ given as input to Rains' algorithm. Computing this
polynomial may be expensive. For a start, we may ask $s$ to be coprime
with $r$, so that $h$ can be taken with coefficients in $\F_q$. Then,
for small values of $s$ and $q$, one may use a table of irreducible
polynomials. For larger values, the
constructions~\cite{couveignes+lercier11,DeDoSc13,DeDoSc2014}
are reasonably efficient, and yield an irreducible polynomial in time
less than quadratic in $s$.
However negligible from an asymptotic point of view, the construction
of the polynomial $h$ and of the field $k'$ take a serious toll on the
practical performances of Rains' algorithm.%
\footnote{
A straightforward way to avoid these constructions consists in computing
a factor $h$ of the cyclotomic polynomial $\Phi_\ell$
over the extension $k$ following case~$5$ from Section~\ref{sec:fundamentalgo}.
Then, using Newton's identities, the period can be recovered from the
logarithmic derivative of the reciprocal of $h$. %
Nevertheless, the cost of factoring $\Phi_\ell$ renders this approach
unpractical.}

This concludes the presentation of Rains' algorithm. However, we are
still left with a problem: how to find $\ell$ satisfying the
conditions of the algorithm, and what bounds can be given on it. These
questions will be analyzed in Section~\ref{sec:selection}.

\begin{note}
  \label{note:rains-non-prime}
  Rains' algorithm is easily extended to a non-prime field $\F_q$, as
  long as $q=p^d$ with $\gcd(d,r)=1$. In this case, indeed, any
  generator of $\F_{p^r}$ over $\F_p$ is also a generator of
  $\F_{q^r}$ over $\F_q$. The algorithm is unchanged, except for the
  additional requirement that $\gcd(\euler(\ell),d)=1$, which ensures
  that the Gaussian periods indeed generate $\F_{p^r}$.

  However, when $\gcd(d,r)\ne 1$, it is impossible to have
  $(\Z/\ell\Z)^\times=\langle q\rangle\times S$, so Rains' algorithm
  simply cannot be applied to this case. In the next section we are
  going to present a variant that does not suffer from this problem.
\end{note}


\section{Elliptic Rains' algorithm}
\label{sec:rains-elliptic}

The Pinch/Rains' algorithm presented in the previous section relies on the use
of the multiplicative group of finite fields.
It is natural to try to extend it to other types of algebraic groups in
order to cover a wider range of parameters.
And indeed Pinch~\cite{Pinch} showed how to use torsion points of elliptic
curves in place of roots of unity.
Rains also considered this possibility, but did not investigate it thoroughly
as no theoretical gain was to be expected.
However, the situation in practice is quite different.
In particular, the need for auxiliary extensions in the cyclotomic method
is very costly, whereas the elliptic variant has naturally more chances
to work in the base fields, and to be therefore very competitive.

In the next sections, we first introduce \emph{elliptic periods}, a
straightforward generalization of Gaussian periods for torsion points
of elliptic curves,
then analyze the cost of their computation.
The main issue with this generalization is that, contrary to Gaussian periods,
elliptic periods do not yield normal bases of finite fields.
We still provide experimental data and heuristic arguments
to support the benefit of using them.
Whether they always yield an element generating the right field extension,
a weak counterpart to Lemma~\ref{th:gaussian}, is left as an open problem.

\subsection{Uniquely defined orbits from elliptic periods}
\label{sec:ellperiods}

An elliptic curve $E/L$ defined over a field $L$ is given by an
equation of the form
\begin{equation*}
  E\;:\; y^2 + a_1xy + a_3y = x^3 + a_2x^2 + a_4x + a_6
  \qquad\text{with $a_1,a_2,a_3,a_4,a_6\in L$.}
\end{equation*}
For any field extension $M/L$ the group of $M$-rational points of $E$
is the set
\begin{equation*}
  E(M) = \{(x,y)\in M^2 \mid E(x,y) = 0\} \cup \{\mathcal{O}\}
\end{equation*}
endowed with the usual group law, where $\mathcal{O}$ is the point at
infinity.

For an integer $\ell$, we denote by $E[\ell]$ the $\ell$-torsion
subgroup of $E(\bar{L})$, where $\bar{L}$ denotes the algebraic
closure of $L$. In this section we are going to consider integers
$\ell$ coprime with the characteristic of $L$, then $E[\ell]$ is a
group of rank $2$.

For an elliptic curve $E/\F_q$ defined over a finite field, we denote
by $\pi$ its \emph{Frobenius endomorphism}. It is well known that
$\pi$ satisfies a quadratic equation $\pi^2-t\pi+q=0$, where $t$ is
called the \emph{trace of $E$}, and that this equation determines the
cardinality of $E$ as $\#E(\F_q)=q+1-t$.

Like in the cyclotomic case, the Frobenius endomorphism partitions
$E[\ell]$ into orbits. Our goal is to take traces of points in
$E[\ell]$ so that a uniquely defined orbit arises. This task is made
more complex by the fact that $E[\ell]$ has rank 2, hence we are going
to restrict to a family of primes $\ell$ named \emph{Elkies primes}.

\begin{definition}[Elkies prime]
  Let $E/\F_q$ be an elliptic curve, let $\ell$ be a prime number not
  dividing $q$.  We say that $\ell$ is an Elkies prime for $E$ if the
  characteristic polynomial of the Frobenius endomorphism $\pi$ splits
  into two distinct factors over $\Z/\ell\Z$:
\begin{equation}
\pi^2-t\pi+q=(\pi-\lambda)(\pi-\mu)\bmod\ell
\qquad\text{with $\lambda\ne\mu$}.
\end{equation}
\end{definition}

Note that if $\ell$ is an Elkies prime for $E$, then $E[\ell]$ splits
into two eigenspaces for $\pi$ which are defined on extensions of
$\F_q$ of degrees $\order_\ell(\lambda)$ and $\order_\ell(\mu)$. We
are now ready to define the elliptic curve analogue of Gaussian
periods.

\begin{definition}
  \label{definition:ellperiod}
  Let $E/\F_q$ be an elliptic curve of $j$-invariant not $0$ or
  $1728$. %
  Let $\ell > 3$ be an Elkies prime for $E$, $\lambda$ an eigenvalue
  of $\pi$, and $P$ a point of order $\ell$ in the eigenspace
  corresponding to $\lambda$ (i.e., such that $\pi(P)=\lambda P$).
  Suppose that there is a subgroup $S$ of $(\Z/\ell\Z)^{\times}$ such
  that
  \begin{equation}
    (\Z/\ell\Z)^{\times} = \langle{\lambda}\rangle \times S.
  \end{equation}
  
  Then we define an elliptic period as
  \begin{equation}
    \eta_{\lambda,S}(P) =
    \begin{cases}
      \sum_{\sigma\in S/\{\pm1\}} {x \left([\sigma] P \right)} & \text{if $-1\in S$,}\\
      \sum_{\sigma\in S} {x \left([\sigma] P \right)} & \text{otherwise,}
    \end{cases}
  \end{equation}
  where $x(P)$ denotes the abscissa of $P$.
\end{definition}

For a generalization of this definition that also covers the cases
$j=0,1728$, see Appendix~\ref{app:elliptic-curves}.

\begin{lemma}
  \label{lemma:ellperiods-order}
  With the same notation as in Definition~\ref{definition:ellperiod},
  let
  \begin{equation*}
    \#\langle\lambda\rangle =
    \begin{cases}
      r & \text{if $-1\notin\langle\lambda\rangle$,}\\
      2r & \text{otherwise.}
    \end{cases}
  \end{equation*}
  Then, for any point $P$ in the eigenspace of $\lambda$, the period
  $\eta_{\lambda,S}(P)$ is in $\F_{q^r}$, and its minimal polynomial
  does not depend on the choice of $P$.
\end{lemma}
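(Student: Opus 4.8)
The plan is to show both claims — membership in $\F_{q^r}$ and independence of the minimal polynomial from $P$ — by tracking the action of the Frobenius endomorphism $\pi$ on the set of abscissas $\{x([\sigma]P) : \sigma\in S\}$ (or $S/\{\pm1\}$). First I would establish that the period lies in $\F_{q^r}$: since $P$ is in the $\lambda$-eigenspace, $\pi(P) = [\lambda]P$, hence $\pi([\sigma]P) = [\sigma][\lambda]P = [\lambda\sigma]P$ for every $\sigma$. Applying $\pi^r$ gives $\pi^r([\sigma]P) = [\lambda^r\sigma]P$, and here I use the definition of $r$: if $-1\notin\langle\lambda\rangle$ then $\lambda^r\equiv 1\pmod\ell$, while if $-1\in\langle\lambda\rangle$ then $\lambda^r\equiv -1\pmod\ell$, so $\pi^r([\sigma]P) = [\pm\sigma]P$. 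Since the abscissa satisfies $x([-1]Q) = x(Q)$, in both cases $\pi^r$ fixes each abscissa $x([\sigma]P)$, hence fixes their sum $\eta_{\lambda,S}(P)$; therefore $\eta_{\lambda,S}(P)\in\F_{q^r}$. (One should also note that in the case $-1\in S$ the indexing by $S/\{\pm1\}$ is well-defined precisely because $x([\sigma]P)=x([-\sigma]P)$, and the argument above still applies since $\pi$ acts on $S/\{\pm1\}$.)

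For the second claim, I would fix the eigenspace (a cyclic group of order $\ell$, since $\ell$ is an Elkies prime and the $\lambda$-eigenspace is one-dimensional over $\Z/\ell\Z$) and observe that any other point $P'$ of order $\ell$ in it is $P' = [c]P$ for some $c\in(\Z/\ell\Z)^\times$. Using the decomposition $(\Z/\ell\Z)^\times = \langle\lambda\rangle\times S$, write $c = \lambda^a\tau$ with $\tau\in S$. Then
\[
\eta_{\lambda,S}(P') = \sum_{\sigma} x([\sigma c]P) = \sum_{\sigma} x([\sigma\tau\lambda^a]P) = \sum_{\sigma} x(\pi^a([\sigma\tau]P)),
\]
and as $\sigma$ runs over $S$ (resp. $S/\{\pm1\}$), so does $\sigma\tau$; hence $\eta_{\lambda,S}(P') = \pi^a\bigl(\eta_{\lambda,S}(P)\bigr)$. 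Thus the periods attached to different points of the eigenspace are Galois conjugates of one another over $\F_q$ — in fact they all lie in the single $\gal(\F_{q^r}/\F_q)$-orbit of $\eta_{\lambda,S}(P)$ — so they share the same minimal polynomial over $\F_q$. This also shows the minimal polynomial does not depend on which eigenvalue $\lambda$ versus $\lambda^{-1}\cdot(\text{something})$ we chose, once $S$ is fixed up to the stated complement condition, though strictly we only need independence from $P$.

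The main obstacle — really a bookkeeping point rather than a deep difficulty — is the careful handling of the sign issue when $-1\in S$ versus $-1\in\langle\lambda\rangle$: one must check that the reindexing $\sigma\mapsto\sigma\tau$ is compatible with quotienting by $\{\pm1\}$, that $\pi^r$ acting as $[\pm1]$ is absorbed by the abscissa map, and that the Elkies-prime hypothesis genuinely forces the eigenspace to be cyclic of order exactly $\ell$ so that "$P' = [c]P$" exhausts all order-$\ell$ points. None of these require computation, but they are exactly the places where the analogy with Gaussian periods (Lemma~\ref{th:gaussian}) must be verified by hand, since here we work with $x$-coordinates of a rank-2 group rather than with roots of unity. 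Note that, unlike the cyclotomic case, this lemma does \emph{not} assert that $\eta_{\lambda,S}(P)$ generates $\F_{q^r}$ (its minimal polynomial could have smaller degree) — that stronger statement is exactly what the paper flags as open — so the proof stops at the two stated conclusions and does not attempt a normal-basis or primitivity argument.
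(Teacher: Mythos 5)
Your proof is correct and follows essentially the same route as the paper's: use that $\pi$ acts on the eigenspace as multiplication by $\lambda$ together with $x([-1]Q)=x(Q)$ to place the period in $\F_{q^r}$, then write $P'=[c]P$ with $c=\pm\lambda^i\sigma$ ($\sigma\in S$) and reindex by $S$ to see that $\eta_{\lambda,S}(P')$ is a Galois conjugate $\eta_{\lambda,S}(P)^{q^i}$, hence has the same minimal polynomial.
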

\begin{proof}
  By construction, the Frobenius endomorphism $\pi$ acts on
  $\langle P\rangle$ as multiplication by the scalar $\lambda$. It is
  well known that two points have the same abscissa if and only if
  they are opposite, hence the Galois orbit of $x(P)$ has size $r$,
  and we conclude that both $x(P)$ and $\eta_{\lambda,S}(P)$ are in
  $\F_{q^r}$.

  Now let $P'=[a]P$ be another point in the eigenspace of
  $\lambda$. By construction, $a=\pm \lambda^i\sigma$, for some
  $0\le i<r$ and some $\sigma\in S$. Hence
  $\eta_{\lambda,S}(P')=\eta_{\lambda,S}([\lambda^i]P)$, implying that
  $\eta_{\lambda,S}(P)$ and $\eta_{\lambda,S}(P')$ are conjugates in
  $\F_{q^r}$.
\end{proof}

We remark that the previous lemma only states that the elliptic
periods $\eta_{\lambda,S}([\lambda^i]P)$ uniquely define an orbit
inside $\F_{q^r}$, but gives no guarantee that they generate the whole
$\F_{q^r}$. %
At this point, one would like to have an equivalent of
Lemma~\ref{th:gaussian} for elliptic periods, i.e.\ that the elliptic
period $\eta_{\lambda,S}(P)$ is a normal generator of $\F_q(x(P))$.
However, it is easy to find non-normal elliptic periods, as the
following example shows.

\begin{example}
\label{ex:non-normal}
  Let $E/\F_7$ be defined by $y^2 = x^3 + 5 x + 4$, and consider the
  degree $3$ extension of $\F_7$ defined by
  $k=\F_7[X]/(X^3 + 6 X^2 + 4)$. Then
  \begin{itemize}
  \item $\ell = 31$ is an Elkies prime for $E$;
  \item the eigenvalues of the Frobenius modulo $\ell$ are
    $\lambda = 25$ of multiplicative order $3$ and $\mu = 4$ of
    multiplicative order $5$;
  \item $P = (5 a^2+2 a, 4)$ is a point of order $31$ of $E/k$;
  \item $\eta=\eta_{\lambda,S}(P) = 5 a^2 + 5 a + 4$ is not a normal
    element, indeed
    $\eta + 4 \eta^7 + 2 \eta^{49} = 0$.
  \end{itemize}
\end{example}

All well known proofs of Lemma~\ref{th:gaussian} rely on the fact that
the $\ell$-th cyclotomic polynomial is irreducible over $\Q$, and its
roots form a normal basis of $\Q(\zeta_\ell)$. %
This fails in the elliptic case: there is indeed no guarantee that the
eigenspace of $\lambda$ can be lifted to a normal basis over some
number field.

Note however that, even if the elliptic period is not normal, it is
enough for our purpose that it generates $\F_q(x(P))$ as a field, like
in the example above.  In Appendix~\ref{app:ellprdsdata} we gather
some experimental evidence suggesting that this might always be the
case. Thus, we state this as a conjecture.

\begin{conjecture}
\label{conj:ellperiods}
With the above notation, the elliptic period $\eta_{\lambda,S}(P)$
generates $\F_q(x(P))$ over $\F_q$.
\end{conjecture}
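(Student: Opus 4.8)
The plan is to imitate the proof of Lemma~\ref{th:gaussian} for Gaussian periods as closely as the elliptic setting allows: first establish a non-degeneracy statement in characteristic zero, where the tools of algebraic number theory are available, then transport it to characteristic $q$ by an integrality argument. Concretely, one lifts the pair $(E,P)$ --- together with the Frobenius-stable line $\langle P\rangle \subset E[\ell]$ --- to an elliptic curve $\mathcal{E}$ over a number field $F$ (or over the fraction field of a complete local ring) with good reduction at a prime $\mathfrak{p}$ above $q$, and a point $\tilde P$ of exact order $\ell$ reducing to $P$; since $\ell$ is prime to $q$ and $\mathcal{E}$ has good reduction, $\mathcal{E}[\ell]\to E[\ell]$ is a decomposition-group-equivariant isomorphism, so the decomposition group at $\mathfrak{p}$ acts on $\langle\tilde P\rangle$ through $\langle\lambda\rangle$. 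One would want the lift chosen so that the \emph{global} Galois action on $\langle\tilde P\rangle$ surjects onto $(\Z/\ell\Z)^\times$, which is a statement about the geometry of, and the Galois images attached to generic points of, the modular curve carrying the relevant $\Gamma_1(\ell)$-level structure.

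Granting such a lift, the question becomes purely algebraic. Writing $(\Z/\ell\Z)^\times=\langle\lambda\rangle\times S$, the period $\eta_{\lambda,S}(P)$ fails to generate $\F_{q^r}=\F_q(x(P))$ exactly when it coincides with one of its Frobenius conjugates $\eta_{\lambda,S}([\lambda^i]P)$ for $0<i<r$, i.e.\ when two of the $r$ coset-sums $\sum_{\sigma\in\lambda^iS}x([\sigma]P)$ agree. In characteristic zero, $\tilde\eta=\sum_{\sigma\in S}x([\sigma]\tilde P)$ is an algebraic number lying in the fixed field of $S$ inside $F(\langle\tilde P\rangle)$, and the char-zero target is that $\tilde\eta$ has the expected degree, equivalently that the coset-sums $\sum_{\sigma\in\lambda^iS}x([\sigma]\tilde P)$ are pairwise distinct. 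If this holds, the product of their pairwise differences is a nonzero element of the ring of integers; since all the $x([\sigma]\tilde P)$ are $\mathfrak{p}$-integral (the $\ell$-torsion specializes injectively), one needs this product to be a $\mathfrak{p}$-adic unit, after which the distinctness of the reductions --- hence the fact that $\eta_{\lambda,S}(P)$ has degree $r$ --- follows, just as in the cyclotomic case.

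The main obstacle, and the reason this remains a conjecture, is the characteristic-zero non-degeneracy: there is no known analogue of the normal-basis theorem for cyclotomic units that would control these coset-sums of $x$-coordinates of $\ell$-torsion points. Worse, Example~\ref{ex:non-normal} shows that the strongest form --- that $\eta_{\lambda,S}(P)$ is a \emph{normal} generator --- is outright false, so the clean ``circulant determinant is a unit'' argument used for Gaussian periods cannot go through; one can only aim for the weaker non-vanishing that keeps $\eta_{\lambda,S}(P)$ out of every proper subfield. Establishing even this seems to require genuinely new input on elliptic division values: geometrically, one would need to show that the natural function built from the coset-sum $\sum_{\sigma\in S}x([\sigma]P)$, viewed on (a suitable twist of) the modular curve with $\Gamma_1(\ell)$-structure, does not factor through a proper subvariety after reduction at $\mathfrak{p}$ --- an unlikely-intersections / Manin--Mumford flavoured assertion for which we presently have only the experimental evidence of Appendix~\ref{app:ellprdsdata}. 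Absent such a result, a conditional statement (under a suitable equidistribution or Galois-image hypothesis, or for $\ell$ outside a density-zero exceptional set depending on $E$) appears to be the most one can realistically prove.
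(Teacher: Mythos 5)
The statement you are addressing is not proved in the paper at all: it is stated as an open conjecture, and the paper itself offers only heuristics and experiments (Appendix~\ref{app:ellprdsdata}) in its support. Your text is consistent with that: it is a strategy sketch, not a proof, and you say so explicitly. The gap you identify is genuine and is essentially the same one the paper points out --- the cyclotomic argument for Lemma~\ref{th:gaussian} hinges on the irreducibility of $\Phi_\ell$ over $\Q$ and the normality of its roots, and there is no analogue for the eigenspace $\langle P\rangle$: nothing guarantees it lifts to a normal (or otherwise controlled) configuration over a number field, and Example~\ref{ex:non-normal} kills the strong normality statement outright. Note also that your reduction step needs \emph{two} unproven inputs, not one: even if the characteristic-zero coset-sums $\sum_{\sigma\in\lambda^i S}x([\sigma]\tilde P)$ were shown pairwise distinct, you would still need the product of their differences to be a $\mathfrak{p}$-adic unit (in the cyclotomic case this is exactly the integrality half of the argument in the reference behind Lemma~\ref{th:gaussian}), and distinctness in characteristic zero alone gives no control on that.

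Where your route differs from what the paper actually does: instead of attempting a global lift, the paper works entirely in characteristic $q$ inside the polynomially cyclic algebra $A_\lambda=\F_q[X]/f_\lambda(X)$, proving a sufficient criterion (Lemma~\ref{prop:xnormal}: if the image of $X$ is normal in $A_\lambda$, then the period is a normal generator of $\F_q(x(P))$) together with a count of normal elements (Proposition~\ref{prop:euleralgebra}), which yields the heuristic failure probability roughly $1/q$ for normality and roughly $1/q^r$ for generation. Your modular-curve / Galois-image framing is a reasonable alternative angle and correctly locates where new input would be needed, but as written it neither proves the conjecture nor recovers the paper's partial results; a fair assessment is that your proposal documents why the statement is hard rather than establishing it, which matches the paper's own position.
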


If the conjecture is false, the only arguments we can give are of a
heuristic nature. First and most simply, we can assume that the
elliptic period behaves like a random element of $\F_q(x(P))$. In this
case the chance of it not being a generator is approximately
$1/q^r$. %
Secondly, in Appendix~\ref{app:ellprdsdata} we give a sufficient
condition for the period to be a normal generator of
$\F_q(x(P))$. This is a weak counterpart to Lemma~\ref{th:gaussian},
based on the polynomially cyclic algebras setting
of~\cite{Mihailescu2010825}. Heuristically, it suggests that the
chance of the period being normal is approximately $1-1/q$. %
We are now ready to present the generalization of Rains' algorithm,
with the warning that the algorithm may fail, with low probability, if
Conjecture~\ref{conj:ellperiods} is false.

\subsection{Elliptic variant of Rains' algorithm}

Rain's cyclotomic algorithm needs auxiliary extensions to accommodate
for sufficiently small subgroups $\mu_\ell$ of the unit group. By
replacing unit groups with torsion groups of elliptic curves, we gain
more freedom on the choice of the size of the group, thus we are able
to work with smaller fields.  

The algorithm is very similar to
Algorithm~\ref{algorithm:rains-cyclo}, and follows immediately from
the previous section. For simplicity, we are going to state it only
for $r$ odd. Given $k$, $K$ and $r$,
\begin{enumerate}
\item find a prime $\ell$, an elliptic curve $E$, and an eigenvalue
  $\lambda$ of the Frobenius endomorphism, satisfying the conditions
  of Definition~\ref{definition:ellperiod}, and such that
  $\order_\ell(\lambda)=r$;
\item take random points $P\in E(k)[\ell]$ and $P'\in E(K)[\ell]$ in
  the eigenspace of $\lambda$;
\item return the elliptic periods $\alpha := \eta_{\lambda,S}(P)$ and
  $\beta:= \eta_{\lambda,S}(P')$.
\end{enumerate}

Here we are faced with a difficulty: given $E$ and $\lambda$ it is
easy to pick a random point in $E[\ell]$, but it is potentially much
more expensive to compute a point in the eigenspace of $\lambda$. We
will circumvent the problem by forcing $E(\F_{q^r})[\ell]$ to be of
rank $1$, and to coincide exactly with the eigenspace of $\lambda$.
If we write $\mu = q/\lambda$ for the other eigenvalue of $\pi$, this
is easily ensured by further asking that $\order_\ell(\mu) \nmid r$.

We defer the discussion on the search for the elliptic curve $E$ to
Section~\ref{sec:selection}. Here we suppose that we are already given
suitable parameters $\ell$, $E$ and $\lambda$, and analyze the last
two steps of the algorithm, summarized below.  We only give the
procedure for $k$, the procedure for the field $K$ being
identical.

\begin{algorithm}[Elliptic Rain's algorithm]
\label{algorithm:compell}
  \begin{algorithmic}[1]
    \REQUIRE A field extension $k/\F_q$ of odd degree $r$,
    an elliptic curve $E/\F_q$, its trace $t$, a prime $\ell$ not dividing $q$,
    an integer $\lambda$ such that:
    \begin{itemize}
    \item $X^2 - tX + q = (X-\lambda)(X-q/\lambda) \mod\ell$,
    \item $\order_\ell(\lambda)=r$, $\order_\ell(q/\lambda)\nmid r$,
    \item $(\Z/\ell\Z)^{\times} = \langle{\lambda}\rangle \times S$ for some $S$.
    \end{itemize}
    \ENSURE A generator of $k$ over $\F_q$, with a uniquely defined Galois orbit, or FAIL.
    \REPEAT
    \STATE Compute $P\leftarrow[\# E(k)/\ell]Q$ for a random $Q\in E(k)$;
    \UNTIL{$P\neq\mathcal{O}$;}
    \STATE Compute $\alpha\leftarrow\eta_{\lambda,S}(P)$;
    \RETURN $\alpha$ if $k=\F_q(\alpha)$, FAIL otherwise.
  \end{algorithmic}
\end{algorithm}

\begin{proposition}
  Algorithm~\ref{algorithm:compell} is correct. Assuming the
  heuristics developed in Appendix~\ref{app:ellprdsdata}, it fails
  with probability $\le 1/q^r$.  On input
  $r,q,E,t,\ell,\lambda$ it computes its output using
  $O(\MM(r)(r\log(q) + (\ell/r)\log(\ell))$ operations in $\F_q$ on average, or
  $\tildO(r^2\log(q))$ assuming $\ell\in o(r^2)$.
\end{proposition}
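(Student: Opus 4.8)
The plan is to establish correctness first — i.e., that when the algorithm succeeds it returns a generator with a uniquely defined Galois orbit — and then to bound the failure probability and the running time separately. For correctness: the point $P$ produced in the loop has order exactly $\ell$ (up to the negligible event that $[\#E(k)/\ell]Q = \mathcal{O}$, which happens with probability $\le 1/\ell$ for random $Q$, so the loop terminates in $O(1)$ iterations on average), and it lies in $E(k)[\ell] = E(\F_{q^r})[\ell]$. Because we have imposed $\order_\ell(q/\lambda)\nmid r$, the $\mu$-eigenspace is not defined over $\F_{q^r}$, so $E(\F_{q^r})[\ell]$ is cyclic of order $\ell$ and coincides exactly with the $\lambda$-eigenspace. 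Hence $P$ automatically lands in the eigenspace of $\lambda$ with no extra work, and Lemma~\ref{lemma:ellperiods-order} applies: $\eta_{\lambda,S}(P)\in\F_{q^r}$ and its minimal polynomial is independent of the choice of $P$. The explicit check $k=\F_q(\alpha)$ in the last line guarantees that whenever the algorithm does not return FAIL, $\alpha$ is a genuine generator with the uniquely-defined orbit; this check is what makes the output trustworthy regardless of Conjecture~\ref{conj:ellperiods}.

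For the failure probability, I would simply invoke the heuristic analysis of Appendix~\ref{app:ellprdsdata}: under those heuristics the elliptic period behaves enough like a random element of $\F_{q^r}$ that the probability it fails to generate the full field is at most $1/q^r$ (this is the same estimate quoted in the discussion following Conjecture~\ref{conj:ellperiods}). Since this is explicitly flagged as heuristic in the proposition statement, nothing more is needed here beyond citing the appendix.

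For the complexity, the work splits into three pieces. (i) Computing $\#E(k)$: we are given the trace $t$ over $\F_q$, so $\#E(\F_{q^r})$ is obtained from the recurrence on $\pi^r$ (equivalently, computing the $r$-th power of the Frobenius via the characteristic polynomial), costing $O(\MM(r)\log(r))$ or better, which is absorbed. (ii) The loop: each iteration computes a scalar multiple $[\#E(k)/\ell]Q$ on $E(k)$ by a double-and-add of a scalar of size $O(r\log q)$ bits, and each group operation in $E(k)=E(\F_{q^r})$ costs $O(\MM(r))$ operations in $\F_q$; this gives $O(\MM(r)\cdot r\log(q))$ per iteration, and $O(1)$ iterations on average. (iii) Computing the period $\eta_{\lambda,S}(P)=\sum_{\sigma\in S}x([\sigma]P)$ (or the $S/\{\pm1\}$ variant): here $\#S = (\ell-1)/r = \euler(\ell)/r$, so this is $\euler(\ell)/r$ scalar multiplications by integers of size $O(\log\ell)$, i.e.\ $O(\MM(r)\log(\ell))$ each, for a total of $O(\MM(r)(\ell/r)\log(\ell))$. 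Summing (ii) and (iii) yields the claimed $O(\MM(r)(r\log(q) + (\ell/r)\log(\ell)))$, and substituting $\ell\in o(r^2)$ together with $\MM(r)\in\tildO(r)$ collapses both terms into $\tildO(r^2\log q)$.

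The \textbf{main obstacle} is bookkeeping rather than a genuine difficulty: one must be careful that the eigenspace condition $\order_\ell(q/\lambda)\nmid r$ really does force $E(\F_{q^r})[\ell]$ to be cyclic (so that the sampled $P$ needs no projection onto the eigenspace — avoiding an otherwise expensive step), and one must track whether the scalar multiplications inside the period sum can be batched or should be bounded individually; the stated bound corresponds to the naive per-$\sigma$ estimate, which is already enough. The correctness of the final field-membership test is routine given Lemma~\ref{lemma:ellperiods-order}. No subtlety arises from Conjecture~\ref{conj:ellperiods} because the algorithm verifies its own output and returns FAIL otherwise.
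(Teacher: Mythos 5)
Your proof is correct and follows essentially the same route as the paper: correctness via Lemma~\ref{lemma:ellperiods-order} (with the eigenspace-cyclicity point you spell out already enforced by the algorithm's hypothesis $\order_\ell(q/\lambda)\nmid r$), failure probability from the heuristics of Appendix~\ref{app:ellprdsdata}, and the same cost breakdown into scalar multiplication for the point and the period sum. The only differences are bookkeeping: the paper gets $\#E(k)$ from the zeta function at no algebraic cost, samples $Q$ via a squareness test plus Montgomery $x$-only arithmetic, and explicitly charges $O(r^{(\omega+1)/2})$ for the final minimal-polynomial test of $k=\F_q(\alpha)$, a cost you omit but which is dominated by the stated bound, so no gap results.
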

\begin{proof}
  Correctness follows immediately from
  Lemma~\ref{lemma:ellperiods-order}. Success probability comes from
  the assumption that $\eta_{\lambda,S}(P)$ behaves like a random element of
  $\F_q(x(P))$, as discussed in Appendix~\ref{app:ellprdsdata}.

  From the knowledge of the trace $t$, we immediately determine the
  zeta function of $E$, and hence the cardinality $\# E(k)$, at
  no algebraic cost.

  To select the random point $Q\in E(k)$ we take a random
  element $x\in k$, then we verify that it is the abscissa of a
  point using a squareness test, at a costs of $O(r\MM(r)\log(q))$
  operations. Then, using Montgomery's formulas for scalar
  multiplication~\cite{montgomery}, we can compute the points $P$ and
  $[\ell]P$ without the knowledge of the ordinate of $Q$, at a cost of
  $O(r\MM(r)\log(q))$ operations. A valid point is obtained after
  $O(1)$ tries on average.

  The computation of the elliptic period $\alpha$ requires $O(\ell/r)$ scalar
  multiplications by an integer less than $\ell$, for a total cost of
  $O((\MM(r)(\ell/r)\log(\ell))$.

  Finally, testing that $\alpha$ generates $k$ is done by computing
  its minimal polynomial, at a cost of $O(r^{(\omega+1)/2})$ operations in
  $\F_q$ using~\cite{shoup93}.
\end{proof}

\section{Algorithm selection}
\label{sec:selection}

The algorithms presented in the previous sections have very similar
complexities, and no one stands out as absolute winner. The complexity
of all algorithms, besides the naive one, depends in a non-trivial way
on the parameters $q$ and $r$, and, for Rains' algorithms, on the
search for a parameter $\ell$ and an associated elliptic curve.

This section studies the complexity of the embedding description
problem from a global perspective. We give procedures to find
parameters for Rains' algorithms, criteria to choose the best among
the embedding algorithms, and asymptotic bounds on the embedding
description problem.

\subsection{Finding parameters for Rains' algorithms}

Given parameters $q=p^d$ and $r$, Rains' cyclotomic algorithm asks for
a \emph{small} parameter $\ell$ such that:
\begin{enumerate}
\item $(\Z/\ell\Z)^\times = \langle q\rangle \times S$ for some $S$,
\item $\langle q \rangle = rs$ for some integer $s$,
\item $\gcd(\euler(\ell),d)=1$ (see Note~\ref{note:rains-non-prime}).
\end{enumerate}

Since $r$ is a prime power, the second condition lets us take a prime
power for $\ell$ too. Indeed if
$\Z/\ell\Z\simeq\Z/\ell_1\Z\times\Z/\ell_2\Z$, then either
$q\bmod\ell_1$ or $q\bmod\ell_2$ has order a multiple of $r$.
Furthermore, if $\gcd(\ell,r)=1$, then we can take $\ell$ prime, since
higher powers would not help satisfy the conditions. On the other hand
if $\gcd(\ell,r)\ne1$, then the algorithms of Section~\ref{sec:kummer} have much
better complexity. Hence we shall take $\ell$ prime.

Given the above constraints, we can rewrite the conditions as:
\begin{enumerate}
\item $\ell = rsv + 1$ for some $s,u$ such that $\gcd(rs,v)=1$,
\item $\order_\ell(q) = rs$,
\item $\gcd(rsv,d)=1$.
\end{enumerate}

\begin{remark}
  Rains remarked that, when $q=2$ and $r$ is a power of $2$ greater
  than $4$, no $\ell$ can satisfy these constraints because $2$ is a
  quadratic residue modulo any prime of the form $8u+1$. This case,
  however, is covered by the Artin--Schreier technique in
  Section~\ref{sec:fast-artin-schreier}, we thus ignore it.
\end{remark}

In the elliptic algorithm we look for an integer $\ell$ and a curve
$E/\F_q$ that satisfy the preconditions of
Algorithm~\ref{algorithm:compell}, i.e., such that 
\begin{enumerate}
\item the Frobenius endomorphism $\pi$ satisfies a characteristic
  equation $(\pi-\lambda)(\pi-\mu) = 0 \mod \ell$,
\item $(\Z/\ell\Z)^\times = \langle\lambda\rangle\times S$ for some $S$,
\item $\#\langle\lambda\rangle=r$, and
\item $\mu^r\ne1\mod\ell$.
\end{enumerate}

As before, we only need to look at prime $\ell$. Because
$\mu=q/\lambda$, the last condition is equivalent to
$q^r\ne1\bmod\ell$. Hence, we can restate the conditions on $\ell$ as
\begin{enumerate}
\item $\ell = ru+1$ for some $u$ such that $\gcd(r,u)=1$,
\item $q^r\ne1\mod\ell$.
\end{enumerate}
Once $\ell$ is found, we compile a list of all integers of order $r$
in $(\Z/\ell\Z)^\times$, and look for a curve of trace
$t=\lambda+q/\lambda\bmod\ell$ for any $\lambda$ in the list. Note,
however, that for there to be such a curve, $t$ must have a
representative in the interval $[-2\sqrt{q},2\sqrt{q}]$. In order to
have a good chance of finding such curves, we are going to set an even
more stringent bound $\ell\in o(\sqrt[4]{q})$.

We propose a procedure to simultaneously find parameters for the
cyclotomic and the elliptic case in
Algorithm~\ref{algorithm:selectell}. The procedure is given bounds on
the size of the parameters sought, and outputs all suitable parameters
within those bounds.

\begin{algorithm}
    [Parameter selection for Rains' algorithms]
    \label{algorithm:selectell}
    \begin{algorithmic}[1]
      \REQUIRE Integers $q=p^d$ and $r$, bounds $\bar{u}$, $\bar{s}$, and $\bar{e}$;
      \ENSURE $\mathcal{C}$ and $\mathcal{E}$, lists of parameters for the cyclotomic and elliptic algorithm resp.
      \STATE $\mathcal{C}\leftarrow\{\}$, $\mathcal{E}\leftarrow\{\}$;
      \FOR{$u=1$ to $\bar{u}$}
      \IF{\label{alg:selectell:prime}$\ell=ur+1$ is prime}
      \IF{\label{alg:selectell:order}$\order_\ell(q)=rs$ with $s\le\bar{s}$, and $\gcd(rs,u/s)=1$, and $\gcd(ur,d)=1$}
      \STATE Add $\ell$ to $\mathcal{C}$.
      \ENDIF
      \IF{$\ell\le\bar{e}$ and $\gcd(u,r)=1$ and $q^r\ne1\bmod\ell$}
      \STATE\label{alg:selectell:ellorder} Compute $\mathcal{T} \leftarrow \{\lambda + q/\lambda \bmod\ell \;|\; \order_\ell(\lambda)=r\}$;
      \REPEAT\label{alg:selectell:ellloop}
      \STATE\label{alg:selectell:ellcount} Take random $E/\F_q$ and compute its trace $t$;
      \UNTIL{$(t\bmod\ell)\in\mathcal{T}$}
      \STATE Add $(\ell,E,t)$ to $\mathcal{E}$.
      \ENDIF
      \ENDIF
      \ENDFOR
      \RETURN $\mathcal{C}$ and $\mathcal{E}$.
    \end{algorithmic}
\end{algorithm}

\begin{proposition}
  On input $q$, $r$, $\bar{u}$ and $\bar{e}$,
  Algorithm~\ref{algorithm:selectell} computes its output using
  $\tildO\left(\sqrt{r\bar{u}^3} + \bar{u}^2\log^6(q)\right)$ binary
  operations, assuming $\bar{e}\in o(\sqrt[4]{q})$.
\end{proposition}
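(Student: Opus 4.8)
The plan is to bound the cost of one pass of the $u$-loop, for a fixed value of $u$, and then sum over $u=1,\dots,\bar u$. Write $\ell=ur+1$, so $\ell\le r\bar u+1$ and every arithmetic operation (and every modular exponentiation) modulo $\ell$ costs a number of bit operations polylogarithmic in $r$ and $\bar u$, hence $\tildO(1)$. The primality test of $\ell$ on Line~\ref{alg:selectell:prime} is therefore $\tildO(1)$ per pass and $\tildO(\bar u)$ overall; and when $\ell$ is prime, checking the conditions on Line~\ref{alg:selectell:order} reduces, apart from polylogarithmic $\gcd$ and divisibility tests, to computing $\order_\ell(q)$. I would do the latter by factoring $\ell-1=ur$ by trial division, at a cost of $\tildO(\sqrt\ell)=\tildO(\sqrt{ur})$ bit operations, and then read off $\order_\ell(q)$ and its cofactor $s$ (to be compared with $\bar s$) with $O(\log\ell)$ more modular exponentiations. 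Summing over $u$ gives $\sum_{u=1}^{\bar u}\tildO(\sqrt{ur})=\tildO\!\bigl(\sqrt r\,\bar u^{3/2}\bigr)=\tildO\!\bigl(\sqrt{r\bar u^3}\bigr)$, which is the first term of the bound; maintaining the list $\mathcal C$ (of size $O(\bar u)$) is free.

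I would then treat the elliptic branch, which is entered for at most $\bar u$ values of $u$, and only when $\ell=ur+1\le\bar e$, in which case $r<\bar e=o(\sqrt[4]q)$ is small. The entry tests $\gcd(u,r)=1$ and $q^r\ne1\bmod\ell$ are polylogarithmic. Forming the set $\mathcal T$ on Line~\ref{alg:selectell:ellorder} means listing the $\euler(r)$ elements of order $r$ in $(\Z/\ell\Z)^\times$: produce one as a random $((\ell-1)/r)$-th power (with $O(1)$ trials to ensure exact order $r$, using the factorization of $\ell-1$ already at hand) and take its powers with exponent coprime to $r$; this costs $\tildO(r)$ per pass, and since $r<\bar e$ the total $\tildO(\bar u r)$ is dominated by the two main terms (alternatively, one may avoid $\mathcal T$ and test the order of each sampled curve's Frobenius eigenvalue directly, reusing the known factorization of $\ell-1$, for $\tildO(1)$ per test). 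The dominant cost of the branch is the inner repeat--until loop starting at Line~\ref{alg:selectell:ellloop}: each pass draws a random curve $E/\F_q$ and computes its trace $t$ by Schoof's point-counting algorithm at a cost of $\tildO(\log^6 q)$ bit operations, then tests whether $t\bmod\ell\in\mathcal T$. Thus the second term $\bar u^2\log^6 q$ will follow once this loop is shown to run $\tildO(\bar u)$ times on average.

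The heart of the proof is this last estimate. I would show that one pass of the inner loop succeeds with probability $\gg\euler(r)/\ell$; since $\euler(r)=\Theta(r/\loglog r)$ and $\ell=ur+1$, this is $\Omega(1/(u\loglog r))$, so the loop runs $\tildO(u)\le\tildO(\bar u)$ times on average, and summing $\tildO(u\log^6 q)$ over the at most $\bar u$ passes of the $u$-loop that reach it yields $\tildO(\bar u^2\log^6 q)$. The probability bound is the one genuinely number-theoretic ingredient. By Deuring's correspondence the number of curves over the prime field $\F_q$ with a given trace $t$ is, up to a bounded automorphism factor, the Hurwitz class number $H(4q-t^2)$, and $\sum_{|t|<2\sqrt q}H(4q-t^2)=\Theta(q)$ counts all curves; the hypothesis $\bar e\in o(\sqrt[4]q)$ forces $\ell=o(\sqrt q)$, so each of the $\euler(r)$ residue classes modulo $\ell$ cut out by $\mathcal T$ is represented by $\Theta(\sqrt q/\ell)$ integers in the Hasse interval $(-2\sqrt q,2\sqrt q)$. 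What remains is to show that the class numbers $H(4q-t^2)$ do not systematically avoid those classes --- i.e.\ a near-equidistribution of Frobenius traces modulo the small modulus $\ell$ --- which then gives probability $\gg\euler(r)/\ell$ as claimed. I expect making this equidistribution precise enough to reach $\tildO(\bar u)$ passes (rather than the much weaker $O(\ell)$ bound obtained from counting a single good class) to be the main obstacle; everything else is routine accounting over the $\bar u$ iterations of the $u$-loop.
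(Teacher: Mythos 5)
Your accounting is essentially the paper's: per pass of the outer loop you spend $\tildO(\sqrt{ur})$ on primality/factoring $\ell-1$ (summing to $\tildO(\sqrt{r\bar u^3})$), the point count costs $\tildO(\log^6 q)$ per curve, and everything hinges on showing that the inner repeat--until loop at Step~\ref{alg:selectell:ellloop} terminates after an expected $O(\ell/r)\subset O(\bar u)$ draws, which multiplied by the $\bar u$ outer iterations gives the $\bar u^2\log^6 q$ term. Two cosmetic remarks: the $O(\log^6 q)$ figure with naive integer arithmetic is that of the Schoof--Elkies--Atkin algorithm, not plain Schoof (which is $\tildO(\log^8 q)$ naively), and the paper charges the primality test to the same naive $\tildO(\sqrt{r\bar u})$ trial division rather than a polylogarithmic test; neither affects the bound.

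The genuine gap is exactly the point you flag as ``the main obstacle'': the claim that a random curve over $\F_q$ lands in one of the $\euler(r)$ admissible trace classes modulo $\ell$ with probability $\gg\euler(r)/\ell$. Your Deuring/Hurwitz class number setup correctly reduces this to equidistribution of Frobenius traces modulo the small prime $\ell$, but you leave that equidistribution unproved, and it does not follow from the crude observation that each class contains $\Theta(\sqrt q/\ell)$ integers of the Hasse interval, since the weights $H(4q-t^2)$ could in principle concentrate away from $\mathcal T$ (and indeed traces are \emph{not} exactly equidistributed modulo primes, as already observed by Lenstra~\cite{lenstra87}). The paper closes this gap not by a new argument but by citing Castryck and Hubrechts~\cite[Th.~1]{castryck+hubrechts13}, who prove that the probability that a random curve over $\F_q$ has trace in a fixed residue class modulo $\ell$ tends to $1/\ell$ as $\ell,q\to\infty$ provided $\ell\in o(\sqrt[4]{q})$; this is precisely why the hypothesis $\bar e\in o(\sqrt[4]{q})$ appears in the statement, a role your sketch correctly anticipates but does not substantiate. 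So your proof becomes complete once that reference (or an equivalent effective equidistribution statement) is invoked; without it, the expected $\tildO(\bar u)$ iterations of the inner loop --- and hence the second term of the bound --- is unsupported.
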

\begin{proof}
  We only consider naive integer arithmetic, since it is unrealistic
  to apply embedding algorithms to very large sizes.

  In Step~\ref{alg:selectell:prime} we need to test for the primality
  of $\ell$, while Steps~\ref{alg:selectell:order}
  and~\ref{alg:selectell:ellorder} require the factorization of
  $\ell-1$. Both operations can be performed in
  $\tildO(\sqrt{r\bar{u}})$ operations using naive algorithms.

  In Step~\ref{alg:selectell:ellcount} we need to count the number
  of points of an elliptic curve over $\F_q$. This can be done in
  $O\left(\log^6(q)\right)$ binary operations using the
  Schoof--Elkies--Atkin algorithm with naive integer
  arithmetic~\cite{schoof95,lercier+sirvent08}.

  All other operations have negligible cost compared to these ones. We
  finally need to account for the two loops in the algorithm. The
  inner loop at Step~\ref{alg:selectell:ellloop} stops when a curve
  with $t\bmod\ell$ in $\mathcal{T}$ is found. The set $\mathcal{T}$
  has size $\euler(r)$, hence, assuming that traces are evenly
  distributed modulo $\ell$, we expect to find a suitable curve after
  $O(\ell/r)\subset O(\bar{u})$ tries. Although it is well known that
  traces are not evenly distributed modulo prime
  numbers~\cite{lenstra87}, it is shown
  in~\cite[Th.~1]{castryck+hubrechts13} that the probability that a
  random curve has trace congruent to a fixed $t\bmod\ell$ approaches
  $1/\ell$, as $\ell$ and $q$ go to infinity, subject to $\ell\in
  o(\sqrt[4]{q})$. Hence, we shall assume that $\bar{e}\in
  o(\sqrt[4]{q})$ for the complexity analysis to hold.

  The outer loop multiplies the whole complexity by $\bar{u}$, we
  conclude that the overall complexity is in
  $\tildO\left(\sqrt{r\bar{u}^3} + \bar{u}^2\log^6(q)\right)$.
\end{proof}

\subsection{Selecting the best algorithm}

A natural question arises: what bounds $\bar{u},\bar{s},\bar{e}$ must
be taken to ensure that the lists $\mathcal{C}$, $\mathcal{E}$ in
Algorithm~\ref{algorithm:selectell} are non-empty?

It is not easy to give a precise answer: already the condition that
$\ell=ur+1$, in Step~\ref{alg:selectell:prime}, poses some
difficulties. Heuristically, we expect that about
$\bar{u}/\log(\bar{u}r)$ of those numbers are prime. However the best
lower bound on primes of the form $\ell=ur+1$, even under GRH, is
$\ell\in O(r^{2.4+\epsilon})$~\cite{heath1992zero}. Empirical data
show that the reality is much closer to the heuristic bound: in
Figure~\ref{fig:primes-arith-prog} we plot for all prime powers
$r<10^8$ the smallest $u$ such that $ur+1$ is prime. It appears that
$u$ is effectively bounded by $O(\log(r))$ for any practical purpose.

\begin{figure}
  \centering
  \includegraphics[width=\textwidth]{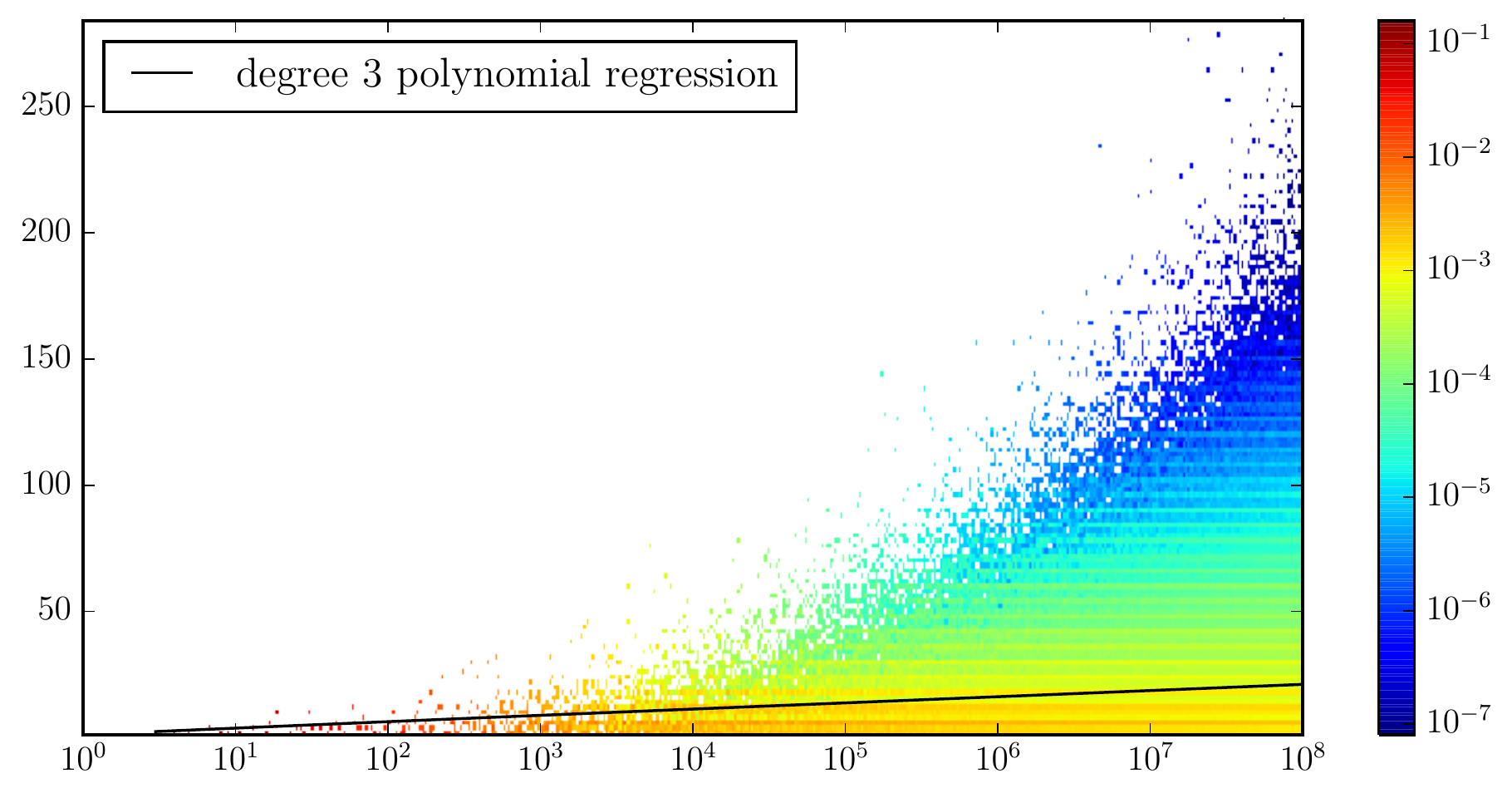}
  \caption{Prime powers $r$ (abscissa) versus smallest integer $u$
    (ordinate) such that $ur+1$ is prime. Abscissa in logarithmic
    scale, density normalized by $\log(x)/x$ and colored in
    logarithmic scale.}
  \label{fig:primes-arith-prog}
\end{figure}

For the cyclotomic algorithm we also require that $\order_\ell(q)$ is
a multiple of $r$. Assuming that $q$ is uniformly
distributed\footnote{This assumption is obviously false for any fixed
  $q$, but it is a good enough approximation in practice.} in
$(\Z/\ell\Z)^\times$, its order is exactly $\ell-1$ with probability
$(\ell-1)/\ell$, hence we can assume that asymptotically
$\order_\ell(q)\in O(\ell)=O(r\log(r))$. Similar considerations can be
made for the elliptic algorithm, assuming that $\ell\in
o(\sqrt[4]{q})$.
Finally, we must also take into account the possibility that the
elliptic algorithm fails. Under the heuristics of
Appendix~\ref{app:ellprdsdata}, this possibility only discards one in
$O(q^r)$ curves, and is thus negligible.

Summarizing, if we take $\bar{u},\bar{s}\in O(\log(r))$, we can expect
Algorithm~\ref{algorithm:selectell} to find suitable parameters for
the cyclotomic algorithm, leading to expected parameters $\ell\in
O(r\log(r))$, and to an expected running time of $\tildO(\sqrt{r})$
binary operations and $\tildO(r^{(\omega+1)/2}+\MM(r)\log(q))$ operations in
$\F_q$.  Similarly, if we also take $\bar{e}\in O(r\log(r))$,
assuming that $r\log(r)\in o(\sqrt[4]{q})$, we can expect
Algorithm~\ref{algorithm:selectell} to find suitable parameters for
the elliptic algorithm, leading to an expected running time of
$\tildO(\sqrt{r}+(\log(r))^2(\log(q))^6)$ binary operations and
$\tildO(r^2\log(q))$ operations in $\F_q$.

Although the complexity of the cyclotomic algorithm looks better, it
must not be neglected that the $\tildO$ notation
hides the cost of taking an auxiliary extension of degree
$O(\log(r))$; whereas the elliptic algorithm, when it applies, does
not incur such overhead. The impact of the hidden terms in the
complexity can be extremely important, as we will show in the next
section. 

The same considerations also apply when comparing Rains' algorithms to
Allombert's. Indeed, the latter performs extremely well when the
degree $s$ of the auxiliary extension is small, but becomes slower as
this degree increases.

In practice, it is hopeless to try and determine the appropriate
bounds for each algorithm from a purely theoretical point of view. The
best approach we can suggest, is to determine parameters at runtime,
and set bounds and thresholds experimentally.
To summarize, given parameters $q$ and $r$, we suggest
the following approach:
\begin{enumerate}
\item If $\gcd(q,r)\ne 1$, run the Artin--Schreier algorithm of
  Section~\ref{sec:fast-artin-schreier}.
\item If $r$ is a power of a small prime $v$, run the algorithm of
  Section~\ref{sec:fast-algor-large}.
\item Determine the order $s$ of $q$ in $(\Z/r\Z)^\times$. If it is
  small enough, run one of the variants of Allombert's algorithm
  presented in Section~\ref{sec:kummer}.
\item Run Algorithm~\ref{algorithm:selectell} with bounds $\bar{u}$,
  $\bar{s}$ and $\bar{e}$ determined according to $s$. Depending on
  the best parameters found by Algorithm~\ref{algorithm:selectell},
  run the best option among Rains' cyclotomic algorithm, Rains'
  elliptic algorithm, and Allombert's algorithm.
\end{enumerate}

In the next section we shall focus on the last two steps, by comparing
our implementations of the
algorithms involved, thus giving an estimate of the various thresholds
between them.  However we stress that these thresholds are bound to
vary depending on the implementation and the target platform, thus it
is the implementer responsibility to determine them at the moment of
configuring the system.

\begin{remark}
  Although our exposition focused on the case where $r$ is a prime
  power and $k\simeq K$, most of the algorithms presented here can be
  easily adapted to work more generally.

  In particular, it is a non-negligible practical improvement to work
  with composite $r$ by ``gluing'' together many prime powers. For
  example, let $r$ and $r'$ be two prime powers, and let $\ell$ and
  $\ell'$ be two primes selected for use in Rains' cyclotomic
  algorithm. If $\ell=\ell'$, then $\#\langle q\rangle=rr's$ for some
  $s$, and Rains' algorithm can be run only once for both $r$ and $r'$
  at the same time, with the added benefit of requiring a smaller
  auxiliary extension. Similarly, if $\ell\ne\ell'$, then we can run
  only once a straightforward generalization of Rains' algorithm using
  $(\ell\ell')$-th roots of unity; this is especially advantageous
  when $\order_\ell(q) = rs$ and $\order_{\ell'}(q) = r's$, so that
  the degree of the auxiliary extension is unchanged.

  In general, given the output of Algorithm~\ref{algorithm:selectell} for
  many prime powers, combining the results to obtain the best
  ``gluing'' requires solving an integer linear program (ILP). Given
  the availability of very fast ILP solvers, the practical speed-up
  can be significant.  Similar techniques also apply to Allombert's
  algorithm. On the other hand, it seems much more unlikely to apply
  them to the elliptic Rains' algorithm.
\end{remark}

\section{Experimental Results}
\label{sec:experimental-results}

To validate our results, we implemented the algorithms described in
the previous sections, and compared them to the implementation of
Allombert's algorithm available in PARI/GP~\cite{Pari}, and to that of
Rains' algorithm available in Magma~\cite{MAGMA}. %
The variants of Allombert's algorithm described in
Section~\ref{sec:fast-kummer} were implemented in C on top of the
Flint library~\cite{hart2010flint}. Rains' cyclotomic and elliptic
algorithms were implemented in Sage~\cite{Sage} (which itself uses
PARI and Flint to implement finite fields), with critical code
rewritten in C/Cython.  Our code only handles $q$ prime and $m,n$ odd.

We ran tests for a wide range of primes $q$ between $3$ and
$2^{60}+253$, and prime powers $r$ between $3$ and $2069$. All tests
were run on an
Intel(R) Xeon(R) CPU E5-4650 v2 clocked at 2.40GHz.
We report in Figure~\ref{fig:bench} statistics only on the runs for
$100<q<2^{20}$; other ranges show very similar trends. The source
code and the full datasets can be downloaded at
\url{https://github.com/defeo/ffisom}.

\begin{figure}
  \newlength{\mywidth}
  \setlength{\mywidth}{8cm}
  \centering

  \begin{subfigure}{.48\textwidth}
    \includegraphics[width=\mywidth]{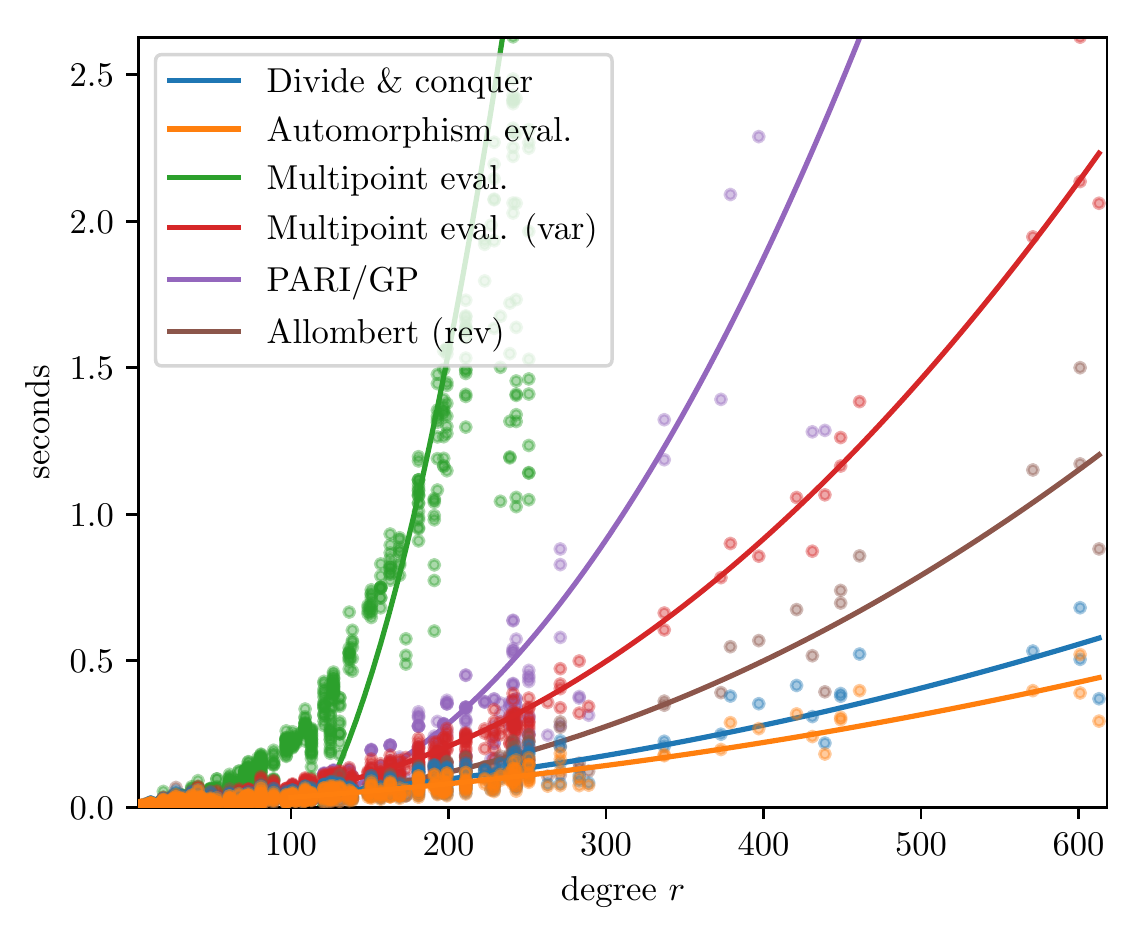}
    \caption{Comparison of various implementations of Allombert's
      algorithm, in the case where the auxiliary degree
      $s=\order_q(r)\le 10$.  Dots represent individual runs, lines
      represent degree 2 linear regressions.}
    \label{fig:bench:allombert-lowaux}
  \end{subfigure}
  \hfill
  \begin{subfigure}{.48\textwidth}
    \noindent
    \includegraphics[width=\mywidth]{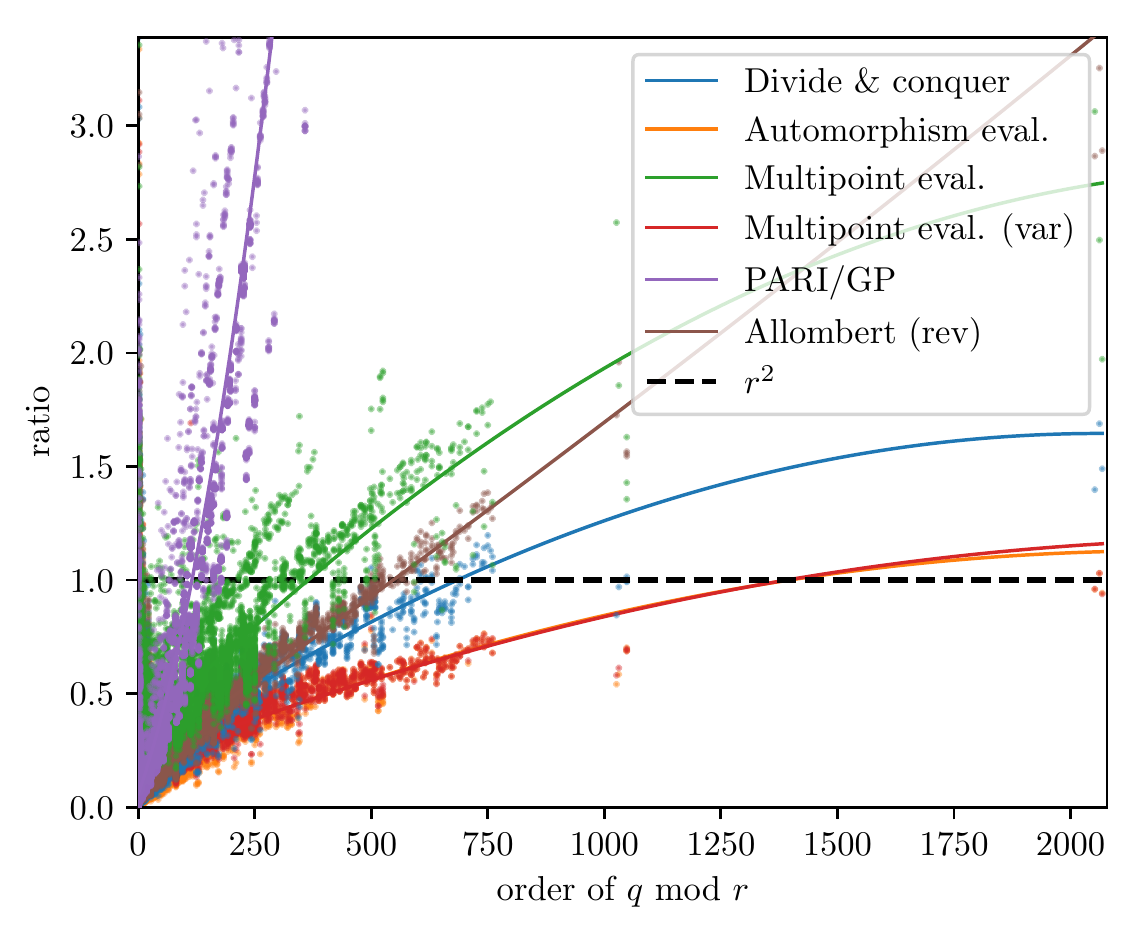}
    \caption{Comparison of various implementations of Allombert's
      algorithm, as a function of the auxiliary degree
      $s=\order_q(r)$.  Individual running times are scaled by down by
      $r^2$.  Dots represent individual runs, lines represent degree 2
      linear regressions.}
    \label{fig:bench:allombert-anyaux}
  \end{subfigure}

  \begin{subfigure}{.48\textwidth}
    \noindent
    \includegraphics[width=\mywidth]{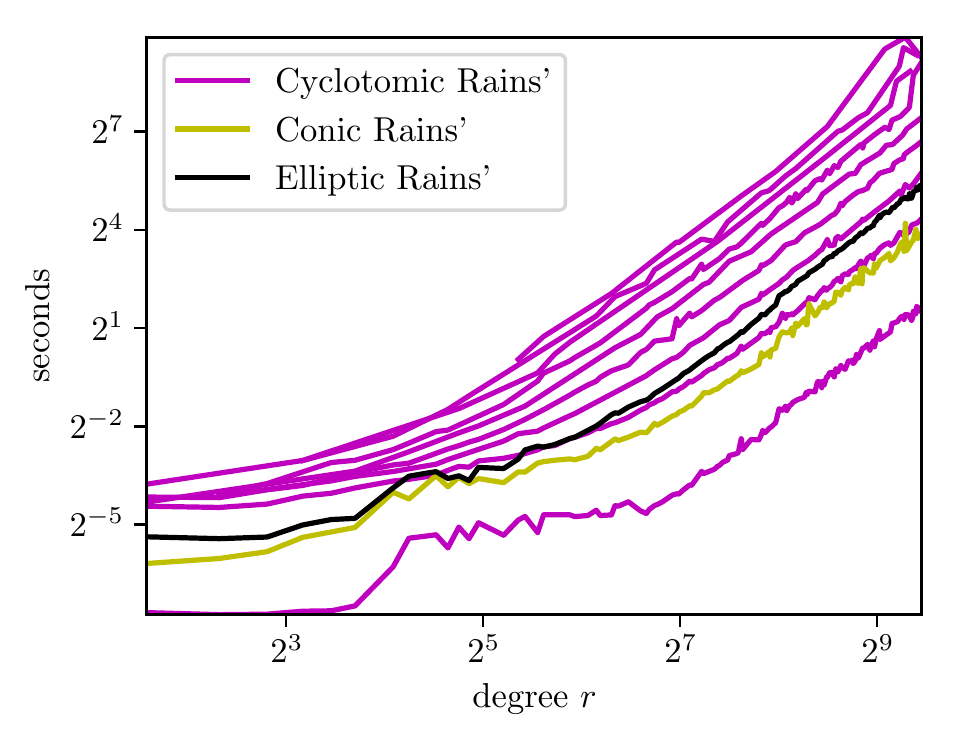}
    \caption{Cyclotomic, conic and elliptic variants of Rains'
      algorithm.  Auxiliary extension degrees $s$ for cyclotomic
      Rains' range between $1$ and $9$. Lines represent median times.}
    \label{fig:bench:rains}
  \end{subfigure}
  \hfill
  \begin{subfigure}{.48\textwidth}
    \noindent
    \includegraphics[width=\mywidth]{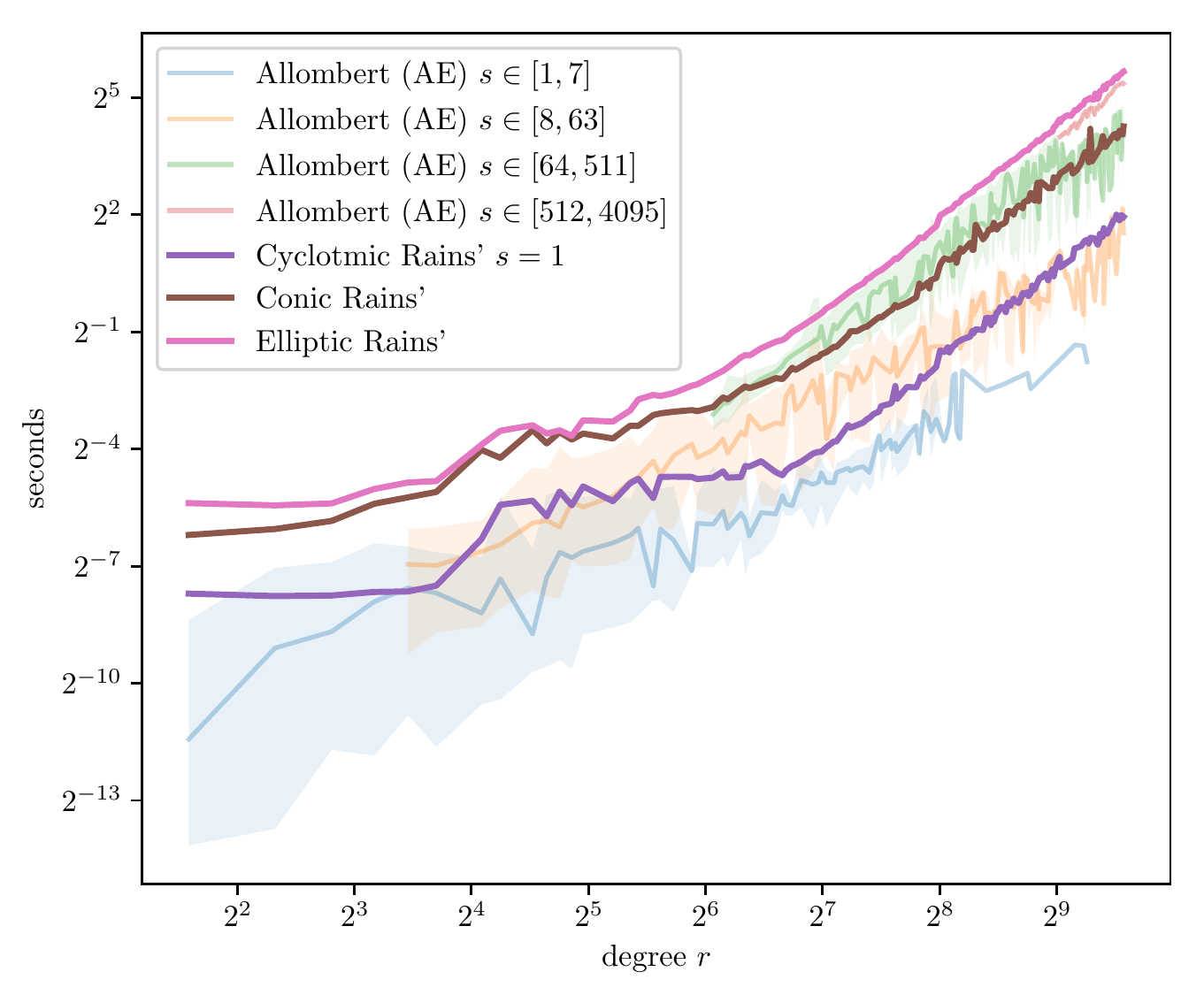}
    \caption{Comparison of Allombert's (Automorphism evaluation
      variant) and Rains' algorithms at some fixed auxiliary extension
      degrees $s$. Lines represent median times, shaded areas minimum
      and maximum times.}
    \label{fig:bench:all}
  \end{subfigure}

  \caption{Benchmarks for Rains' and Allombert's algorithms. $q$ is a
    prime between $100$ and $2^{20}$, $r$ is an odd prime power
    varying between $3$ and $2069$.  Plots~\subref{fig:bench:rains}
    and~\subref{fig:bench:all} are in doubly logarithmic scale. Full
    dataset available at \url{https://github.com/defeo/ffisom}.}
  \label{fig:bench}
\end{figure}

\begin{figure}
  \centering
  \includegraphics[width=8cm]{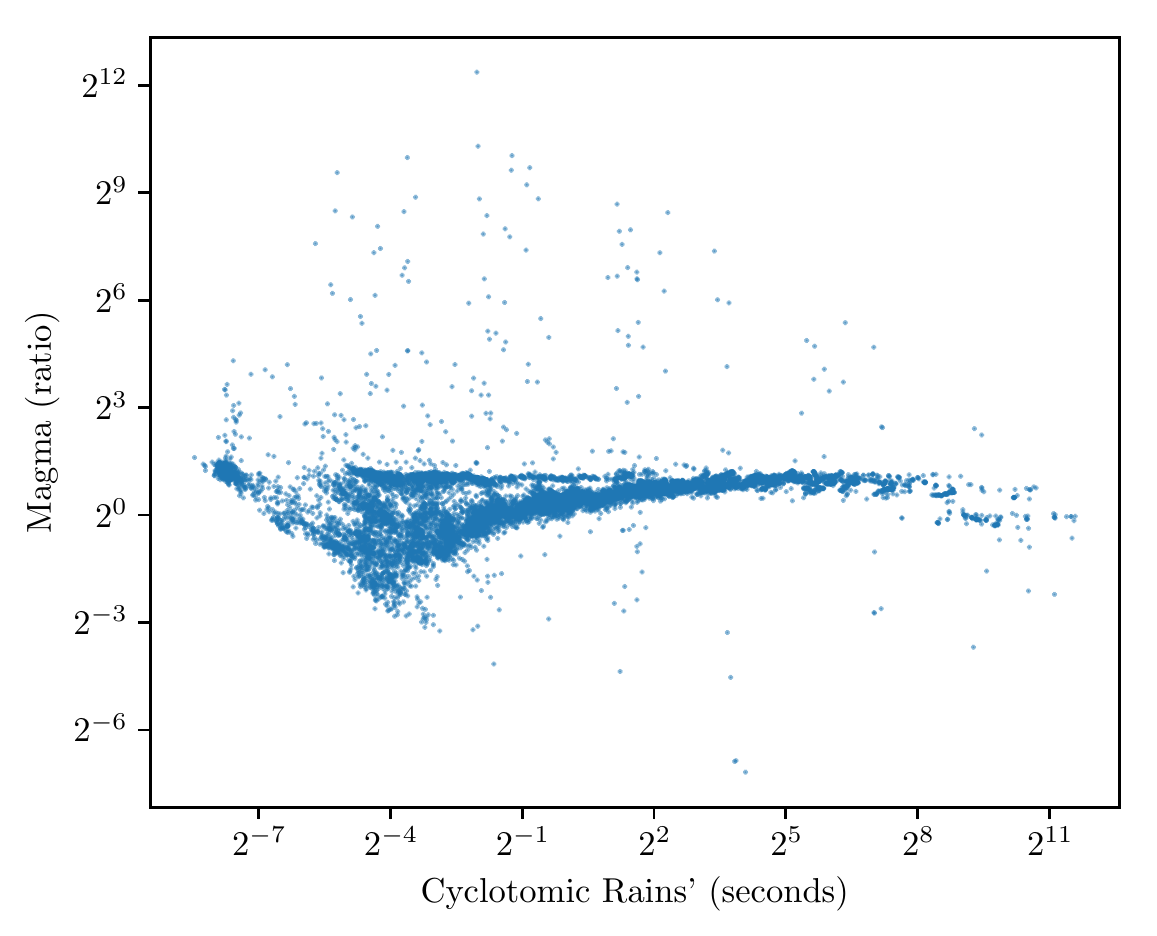}
  \caption{Comparison of our implementation of Rains' algorithm and
    Magma's. Running time of our implementation in seconds vs ratio of
    Magma running time over ours. Plot in doubly logarithmic scale.}
  \label{fig:bench:magma}
\end{figure}

We start by comparing our implementation of the three variants of
Allombert's algorithm presented in Section~\ref{sec:algor-rely-polyn}
with the original one in PARI. %
In Figure~\ref{fig:bench:allombert-lowaux} we plot running times
against the extension degree $r$, only for cases where the auxiliary
degree $s=\order_q(r)$ is at most $10$: dots represent individual
runs, continuous lines represent degree $2$ linear regressions. %
Analyzing the behavior for arbitrary auxiliary degree $s$ is more
challenging. %
Based on the observation that all variants have essentially quadratic
cost in $r$, in Figure~\ref{fig:bench:allombert-anyaux} we take
running times, we scale them down by $r^2$, and we plot them against
the auxiliary degree $s$. %

The first striking observation is the extremely poor performance of
PARI, especially as $s$ grows. %
To provide a fairer comparison, we re-implemented Allombert's revised
algorithm~\cite{Allombert02-rev}, as faithfully as possible, as
described in Section~\ref{sec:algor-rely-line}; this is the curve
labeled \emph{``Allombert (rev)''} in the graphs. %
For completeness we also implemented the Paterson-Stockmeyer variant
described previously; we do not plot it here, because it overlaps
almost perfectly with our \emph{``Divide \& conquer''} curve. %
Although our re-implementations are considerably faster than PARI, it
is apparent that Allombert's original algorithm does not behave as
well as our new variants.

Focusing now on our three new variants presented in
Section~\ref{sec:algor-rely-polyn}, one can't fail to notice that the
second one, named \emph{``Automorphism evaluation''}, beats the other
two by a great margin, both for small and large auxiliary degree. %
Although the \emph{``Multipoint evaluation''} approach is expected to
eventually beat the other variants as $s$ grows, the cross point seems
to be extremely far from the parameters we explored. %
However, we notice that the naive variant of \emph{``Multipoint
  evaluation''} not using the iterated Frobenius technique (labeled
\emph{``Multipoint evaluation (var)''} in the graphs), starts poorly,
then quickly catches \emph{``Automorphism evaluation''} as $s$ grows.

Now we shift to Rains' algorithm and its variants. %
In comparing our implementation with Magma's, discarding outliers, we
obtain a fairly consistent speed-up of about 30\% (see
Figure~\ref{fig:bench:magma}); hence we will compare these algorithms
only based on our timings. %
In Figure~\ref{fig:bench:rains} we group runs of the cyclotomic
algorithm by the degree $s$ of the auxiliary extension, and we plot
median times against the degree $r$; only the graphs for $s<10$ are
shown in the figure. %
We observe a very large gap between $s=1$ and larger $s$
($s=2$ is $8-16$ times slower). This is partly due to the fact that we
use generic Python code to construct auxiliary extensions, rather than
dedicated C; however, a large gap is unavoidable, due to the added cost
of computing in extension fields. %
We also plot median times for the elliptic variant and for the conic
variant (see Appendix~\ref{app:rains-vars}). %
It is apparent that the elliptic algorithm outperforms the cyclotomic
one as soon as $s\ge 3$, and that the conic algorithm conveniently
replaces the case $s=2$. %
Thus, at least for the parameter ranges we have tested, the cyclotomic
algorithm with auxiliary extensions seems of limited interest.

Finally, in Figure~\ref{fig:bench:all} we compare Rains' algorithms
against Allombert's. %
In light of the excellent performances of the \emph{``Automorphism
  evaluation''} variant of Allombert's algorithm, we only plot the
performances for this variant. %
We plot against the degree $r$ runs of Allombert's algorithm grouped
by ranges of the auxiliary degree $\order_r(q)$: we shade the area
between minimum and maximum running times, and trace the median
time. %
We also take from Figure~\ref{fig:bench:rains} the graphs for the
cyclotomic (only $s=1$), the conic and the elliptic variants of Rains'
algorithm. %
We notice that Allombert's algorithm, even with relatively large
auxiliary degrees, is extremely fast; the cyclotomic algorithm only
beats it when $\order_r(q)$ goes beyond 10 to 50, the conic algorithm
only beats extremely large $\order_r(q)$, and the elliptic algorithm
is never better. %
We also observe that Allombert's algorithm has a better asymptotic
behavior as the degree $r$ grows.

In light of these comparisons, it seems that the absolute winner is
our \emph{Automorphism evaluation} variant of Allombert's algorithm,
with Rains' cyclotomic algorithm being only occasionally more
interesting. %
Obviously, the comparisons are only relevant to our own code and test
conditions. Other implementations and benchmarks will likely find
slightly different cross-points for the algorithms.


\section{The Embedding Evaluation problem}
\label{sec:eval}

We end this work with a review of the known methods for
\emph{embedding evaluation}. %
We recall the problem statement: given two finite fields
$k=\F_q[X]/f(X)$ and $K=\F_q[Y]/g(Y)$, and given an embedding
$\phi:k\hookrightarrow K$ represented by two elements $\alpha\in k$
and $\beta\in K$ such that $k=\F_q(\alpha)$ and $\phi(\alpha)=\beta$,
answer the following questions:
\begin{itemize}
\item Given $\gamma\in k$, compute $\phi(\gamma)$;
\item Given $\delta\in K$, determine whether $\delta\in\phi(k)$;
\item Given $\delta\in\phi(k)$, compute
  $\phi^{-1}(\delta)$.\footnote{An additional natural question would
    be the following: given $\delta\in K$, compute an expression
    $\delta=\sum_{i=0}^{n/m-1} \phi(\gamma_i) Y^i$ with all
    $\gamma_i\in k$. The techniques presented here also apply to this
    more general problem, however we will skip them for conciseness.}
\end{itemize}
We are going to assume that elements of $k$ are represented on the
monomial basis $(1,X,\dots,X^{m-1})$, and elements of $K$ on the
monomial basis $(1,Y,\dots,Y^{n-1})$.

We review three solutions for this problem, each built on top
of the previous one. %
The first one uses basic linear algebra; it is simple and effective,
but has large space and time complexities. %
The second one improves the complexity by avoiding matrix inversion;
however it is still based on linear algebra, thus it has the same
storage requirements as the previous method. %
The last one replaces linear algebra with modular composition, thus
providing the best space and time complexities overall. %
Unlike the algorithms of the previous sections, all algorithms
presented in this section are deterministic, unless a randomized
algorithm is used for modular composition.

All the methods presented in this section are classic and well
understood, thus we will keep the presentation short, and will not
discuss implementation details.

\subsection{Linear algebra}
\label{sec:linear-algebra}

Since the map $\phi$ is $\F_q$-linear, one obvious solution is to
explicitly write its matrix on the monomial bases of $k$ and $K$. %
This is the solution employed by both the PARI/GP~\cite{Pari} and the
Magma~\cite{MAGMA,bosma+cannon+steel97} computer algebra systems. %
To stress the fact that $k$ is seen a vector space with a fixed basis
$(1,X,\dots,X^{m-1})$, we will write $V_X$ for it, and we will
similarly write $V_Y$ for $K$ with its monomial basis. %

The element $\alpha$ defines another basis
$(1,\alpha,\dots,\alpha^{m-1})$ of $k$, and we denote by $V_\alpha$
this vector space isomorphic to $V_X$. %
Similarly, $\beta$ defines a basis of a subspace $V_\beta\subset V_Y$,
also isomorphic to $V_\alpha$. %
Hence, we decompose the map $\phi:V_X\to V_Y$ as a composition of
three maps:
\[V_X \xrightarrow{\;\sim\;} V_\alpha \xrightarrow{\;\sim\;} V_\beta \lhook\joinrel\longrightarrow
  V_Y.\] %
The middle map from $V_\alpha$ to $V_\beta$ is trivially represented
by an identity matrix; the only maps that require actual computation
are the other two. %

For example, the map $V_\beta\hookrightarrow V_Y$ is represented by an
$n\times m$ matrix whose columns are the coefficients of the elements
$1,\beta,\dots,\beta^{m-1}$ written on the monomial basis of $V_Y$. %
The inverse map is then computed by solving a linear system; this
operation can be sped up by precomputing, e.g., an LU decomposition
for the matrix. %
The computation of the map $V_X\xrightarrow{\sim}V_\alpha$ is done
analogously.

Summarizing, both the full map $\phi:V_X\to V_Y$ and its inverse can
be computed by performing one matrix-vector product and solving one
linear system. %
The complexity is dominated by the cost of solving the linear system,
that is $O(m^{\omega-1}n)$ operations over $\F_q$. %
However this cost can be counted as a precomputation if we perform an
LU decomposition, and so can the computation of the powers $\alpha^i$
and $\beta^i$; in this case, the cost for evaluating $\phi$ or its
inverse drops down to $O(mn)$ operations. %
At any rate, the biggest drawback of this approach is the large memory
complexity: indeed storing the precomputed matrices requires $O(mn)$
elements of $\F_q$.

\subsection{Inverse maps and duality}

The first improvement to the linear algebraic method consists in
replacing the linear system solving with a much simpler matrix-vector
product, combined with an efficient change of basis. %
This technique is not
new~\cite{shoup94,shoup95,shoup99,bostan+salvy+schost03}, however it
is seldom found in the literature. %
We briefly recall it, following the presentation of~\cite{DeDoSc2014}.

Let $k=\F_q[X]/f(X)$ be a finite field with monomial basis
$(1,X,\dots,X^{m-1})$. %
The trace $\trace$ from $k$ to $\F_q$ defines a non-degenerate
bilinear form by
$\ang{\gamma,\delta}_k \equiv \trace(\gamma\delta)$,
which itself determines the \emph{dual basis}
$(X_0^*,X_1^*,\dots,X_{m-1}^*)$ to $(1,X,\dots,X^{m-1})$,
characterized by
\begin{equation*}
  \ang{X^j,X_i^*}_k = \begin{cases}
    1 &\text{if $i=j$,}\\
    0 &\text{otherwise.}
  \end{cases}
\end{equation*}
Given the polynomial $f$, conversions between the monomial and the
dual basis can be performed very efficiently at a cost of
$O(\MM(m)\log(m))$ operations in $\F_q$ (see~\cite[\S~3]{DeDoSc2014}).

Let now $K=\F_q[Y]/g(Y)$ be another finite field, let $\ang{,}_K$ be
the bilinear form defined by its trace to $\F_q$, and let
$\phi:k\hookrightarrow K$ be a field embedding. %
There exists a unique linear map $\phi^t:K\to k$, called the
\emph{dual map} of $\phi$, such that
\begin{equation*}
  \ang{\phi(\gamma),\delta}_K=\ang{\gamma,\phi^t(\delta)}_k
  \qquad\text{for any $\gamma\in k$ and $\delta\in K$.}
\end{equation*}
If $M=(m_{i,j})$ is the matrix of $\phi$ in the monomial bases of $k$
and $K$, then its transpose $M^t$ is the matrix of $\phi^t$ in their
\emph{dual bases}. %
Indeed,
\begin{equation}
  \label{eq:tellegen-matrix}
  m_{i,j} = \ang{\phi(X^j),Y_i^*} = \ang{X^j,\phi^t(Y_i^*)}.
\end{equation}

We are now going to show that $\phi^t$ is closely related to the
inverse map of $\phi$. %
If $\phi$ is an isomorphism of fields, then we immediately have
$\phi^t=\phi^{-1}$. %
Indeed, in this case $\phi$ preserves the bilinear forms:
\begin{equation*}
  \ang{\gamma,\delta}_k = \ang{\phi(\gamma),\phi(\delta)}_K
  \qquad\text{for any $\gamma,\delta\in k$.}
\end{equation*}
Hence, duality implies that
\begin{equation*}
  \ang{\gamma,\delta}_k = \ang{\gamma,\phi^t\circ\phi(\delta)}_k,
\end{equation*}
but then the non-degeneracy of the trace implies that
$\phi^t\circ\phi$ is the identity map.\footnote{More generally, in the
  category of finite-dimensional vector spaces with nondegenerate
  bilinear forms and morphisms that preserve bilinear forms, we have a
  natural isomorphism between the identity functor and the dual
  functor.}

The case where $\phi$ is a proper embedding calls for a more careful
handling. %
In this case, $\phi$ does not preserve traces, indeed, if $n=[K:\F_q]$,
\begin{equation*}
  \frac{n}{m}\ang{\gamma,\delta}_k = \ang{\phi(\gamma),\phi(\delta)}_K.
\end{equation*}
If $n/m$ is not a multiple of the characteristic, proceeding like
before we can show that $(m/n)\phi^t$ is the inverse of $\phi$. %
To handle the general case, take an element $\eta\in K$ such that
$\trace_{K/k}(\eta)=1$, and let $H$ be the map defined by
$\gamma\mapsto\eta\gamma$. %
Then, by composition of traces, we prove that
\begin{equation*}
  \ang{\gamma,\delta}_k = \trace_{k/\F_q}(\gamma\delta) =
  \trace_{K/\F_q}(\gamma\eta\delta) = \ang{\phi(\gamma),H\circ\phi(\delta)}_K =
  \ang{H\circ\phi(\gamma),\phi(\delta)}_K 
\end{equation*}
We have thus shown that both $\phi^t\circ H\circ\phi$ and
$\phi^t\circ H^t\circ\phi$ are the identity map.

Let us apply these findings to the linear algebraic approach of the
previous subsection. %
We had an embedding of fields, decomposed as three maps
\[V_X \xrightarrow{\;\sim\;} V_\alpha \xrightarrow{\;\sim\;} V_\beta
  \lhook\joinrel\longrightarrow V_Y.\] %
The maps $V_\alpha\xrightarrow{\sim} V_X$ and
$V_\beta\hookrightarrow V_Y$ are both field embeddings; the first one
is represented by its matrix on the monomial bases generated by
$\alpha$ and $X$; the second one on those generated by $\beta$ and
$Y$. %
For both maps, we are interested in computing their inverse; instead
of solving a linear system, we switch to dual bases and apply the
discussion above. %
Thus, the inverse of $V_\beta\hookrightarrow V_Y$ is evaluated as a
multiplication by a fixed element $\eta$, followed by a conversion to
the dual basis of $V_Y$, then a matrix-vector product with the
transposed matrix, and finally a conversion back to the monomial basis
of $V_\beta$. %
The inverse of $V_\alpha\xrightarrow{\sim}V_X$ is computed similarly. %

Note that the inverse map to $V_\beta\hookrightarrow V_Y$ is not
everywhere defined. %
Interestingly, while linear system solving could immediately recognize
elements of $V_Y$ that are not in the image of $V_\beta$, the new
solution will just project them onto an arbitrary element of
$V_\beta$. %
Indeed, any $\delta\in V_Y$ can be rewritten as
\begin{equation*}
  \delta = (\delta - \trace_{K/k}(\eta\delta)) + \trace_{K/k}(\eta\delta)
  = \delta' + \trace_{K/k}(\eta\delta),
\end{equation*}
for the same $\eta$ chosen above. %
Direct calculation shows that
\begin{equation}
  \label{eq:projection}
  \ang{\phi(\gamma),\eta\delta}_K =
  \ang{\phi(\gamma),\eta\delta'}_K + \ang{\phi(\gamma),\eta\trace_{K/k}(\eta\delta)}_K =
  \ang{\gamma, \trace_{K/k}(\eta\delta)}_k,
\end{equation}
for any $\gamma\in k$. %
Hence, applying the above algorithm to an arbitrary element
$\delta\in V_Y$ yields the element $\trace_{K/k}(\eta\delta)$ of
$V_\beta$, which coincides with $\delta$ whenever
$\delta\in\phi(k)$. %
Hence, the best way to test that an element $\delta\in K$ is in the
image of $\phi$ would be to project it to
$\gamma=\trace_{K/k}(\eta\delta)$ using this algorithm, and then test
that $\phi(\gamma)=\delta$.

It is easily seen that the complexity is dominated by the transposed
matrix-vector product, which costs $O(mn)$ operations (plus a cost of
$O(m\MM(n))$ operations for precomputing the matrices). %
Hence, by paying a little overhead in the changes of basis, we have
completely removed the cost of solving a linear system. %
We have not reduced yet the large storage cost, however.

\subsection{Modular composition}

Our final improvement consists in replacing the matrix computations
with modular composition. %
The technique originates in Shoup's
work~\cite{shoup94,shoup95,shoup99}.

Considering again the map $V_\beta\hookrightarrow V_Y$, we observe
that its evaluation is precisely a modular composition problem: given
polynomials $\gamma=\sum \gamma_i\beta^i$ and $\beta=\sum \beta_iY^i$
with coefficients in $\F_q$, of degree bounded by $n$, compute
$\gamma(\beta)\bmod g$. %
As seen previously this computation can be done more efficiently by a
dedicated algorithm, than by a naive Horner rule.

However we also need to compute the inverse map to
$V_\beta\hookrightarrow V_Y$, and this problem is clearly not a
modular composition one. %
In the previous subsection we have reduced the computation of this
inverse map to a change of bases, combined with a transposed
matrix-vector product. %
A very powerful generalization of Eq.~\eqref{eq:tellegen-matrix},
called \emph{transposition principle}, allows us to \emph{transpose}
any modular composition algorithm, much like one would transpose a
matrix. %
This technique was also introduced by Shoup, and then refined by many
authors~\cite{bostan+lecerf+schost:tellegen,df+schost10,df+thesis}. %
The \emph{dual problem} to modular composition was named \emph{power
  projection} by Shoup; its inputs are the polynomials $\beta,g$, and
an element $\gamma^*$ in the dual space of $\F_q[X]$ (i.e., the linear
forms on $\F_q[X]$); its output is the list of elements
$\gamma^*(\beta^0),\dots,\gamma^*(\beta^{n-1})$. %
Thanks to the transposition principle, the power projection problem
can be solved within the same complexity bound as modular
composition~\cite{shoup95,kedlaya+umans08}.

Summarizing, the inverse map to $V_\beta\hookrightarrow V_Y$, is
computed by Algorithm~\ref{alg:inv-mod-comp}.

\begin{algorithm}
    [Inverse embedding]
    \label{alg:inv-mod-comp}
    \begin{algorithmic}[1]
      \REQUIRE An element $\delta\in K$, and precomputed values:\\
      $\bullet\;\beta\in K$ generating a subfield isomorphic to $k$,\\
      $\bullet\;\eta\in K$ such that $\trace_{K/k}\eta=1$.
      \ENSURE $\trace_{K/k}(\eta\delta)$ written in the basis $(1,\beta,\dots,\beta^{m-1})$.
      \STATE Compute the minimal polynomial of $\beta$ over $\F_q$;
      \STATE Compute $\delta'=\eta\delta$;
      \STATE Convert $\delta'$ to the dual basis $(Y_0^*,\dots,Y_{n-1}^*)$;
      \STATE Compute $\gamma=\trace_{K/k}\delta'$ using \emph{power projection};
      \STATE Convert $\gamma$ to the monomial basis $(1,\beta,\dots,\beta^{m-1})$;
      \RETURN $\gamma$.
    \end{algorithmic}
\end{algorithm}

\begin{theorem}
  Algorithm~\ref{alg:inv-mod-comp} is correct. %
  When the input $\delta$ is in the image of $k$, it returns $\delta$
  itself written on the basis $(1,\beta,\dots,\beta^{m-1})$. %
  It computes its output using $O(n^{(\omega+1)/2})$ operations in $\F_q$
  in the worst case.
\end{theorem}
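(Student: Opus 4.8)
The plan is to read off correctness from the duality computation carried out just before the algorithm, and then to price each of its five steps using the cited subroutines.

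First I would observe that the $\F_q$-linear map the algorithm is meant to compute is $\delta\mapsto\trace_{K/k}(\eta\delta)$: indeed, the relation $\ang{\phi(\gamma),\eta\delta}_K=\ang{\gamma,\trace_{K/k}(\eta\delta)}_k$ established above, together with non-degeneracy of $\ang{\cdot,\cdot}_k$, shows that $\phi^t\circ H$ is the map $\delta\mapsto\trace_{K/k}(\eta\delta)$, where $H(\delta)=\eta\delta$. Then I would check that the steps realise this map. Forming $\delta'=\eta\delta$ is $H$. The embedding $V_\beta\hookrightarrow V_Y$ is a modular composition whose matrix has the $\beta^i$ (on the monomial basis of $V_Y$) as columns; by Eq.~\eqref{eq:tellegen-matrix} its transpose carries the dual-basis coordinates $\bigl(\trace_{K/\F_q}(Y^j\delta')\bigr)_j$ of $\delta'$ to the numbers $\bigl(\trace_{K/\F_q}(\beta^j\delta')\bigr)_j=\bigl(\ang{\beta^j,\delta'}_K\bigr)_j=\bigl(\ang{\alpha^j,\phi^t(\delta')}_k\bigr)_j$, which are exactly the coordinates of $\phi^t(\delta')=\trace_{K/k}(\eta\delta)$ in the basis dual to $(1,\beta,\dots,\beta^{m-1})$. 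The step ``convert $\delta'$ to the dual basis'' produces the first vector, the power-projection step performs the transposed application (by the transposition principle, within the cost of one modular composition), and the last step converts back to the monomial basis $(1,\beta,\dots,\beta^{m-1})$ using the minimal polynomial of $\beta$ obtained in the first step. For the remaining claim, when $\delta=\phi(\gamma)$ lies in $\phi(k)$, $k$-linearity of $\trace_{K/k}$ and $\trace_{K/k}(\eta)=1$ give $\trace_{K/k}(\eta\delta)=\delta$, so the output is $\delta$ itself, now written on $(1,\beta,\dots,\beta^{m-1})$.

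For the running time I would bound the steps separately: the minimal polynomial of $\beta\in K$ costs $O(n^{(\omega+1)/2})$ by~\cite{shoup93}; $\delta'=\eta\delta$ is one multiplication in $K$, so $O(\MM(n))$; the two conversions between the monomial and dual bases (in degrees $n$ and $m$, with $m\mid n$) cost $O(\MM(n)\log(n))$ by~\cite[\S~3]{DeDoSc2014}; and the power projection for $\beta,g$ costs $O(n^{(\omega+1)/2})$, being the transpose of a modular composition in $\F_q[Y]/g(Y)$. Since $\MM(n)\log(n)=\tildO(n)$ is negligible against $n^{(\omega+1)/2}$ for $\omega>2$, and $m\le n$, the total is $O(n^{(\omega+1)/2})$, as stated.

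The part I expect to require the most care is the basis bookkeeping in the correctness argument: along $V_X\xrightarrow{\sim}V_\alpha\xrightarrow{\sim}V_\beta\hookrightarrow V_Y$ and on the dual spaces one must keep track of which of the monomial or the trace-dual basis is in play at each arrow, and then verify that dualising the modular-composition description of $V_\beta\hookrightarrow V_Y$ yields precisely the power-projection instance whose input is the linear form $x\mapsto\trace_{K/\F_q}(x\eta\delta)$, that is, the vector produced by the first conversion. Everything else is routine given the subroutines invoked above.
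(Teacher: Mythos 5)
Your proof is correct and follows the same route as the paper: correctness is read off from the duality discussion preceding the algorithm (Eq.~\eqref{eq:projection} plus $k$-linearity of $\trace_{K/k}$ and $\trace_{K/k}(\eta)=1$), and the cost is obtained by pricing each step, with the power projection dominating at $O(n^{(\omega+1)/2})$. The only cosmetic difference is that you charge the minimal polynomial of $\beta$ to Shoup's $O(n^{(\omega+1)/2})$ algorithm while the paper uses Berlekamp--Massey in $O(\MM(n)\log(n))$; both fit within the stated bound.
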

\begin{proof}
  Correctness follows from the discussion above, and the obvious fact
  that $\delta=\trace_{K/k}(\eta\delta)$ whenever $\delta\in k$.

  The minimal polynomial of $\beta$, required to compute conversions
  between the monomial and the dual bases generated by $\beta$, can be
  computed in $O(\MM(n)\log(n))$ operations using the Berlekamp--Massey
  algorithm.\footnote{The minimal polynomial could be precomputed
    along with $\beta$, however we include this computation in the
    algorithm, as it does not change the total complexity.} %
  Conversions between monomial and dual bases are also done in
  $O(\MM(n)\log(n))$ as shown in~\cite[\S~3]{DeDoSc2014}. %
  Finally, the power projection costs $O(n^{(\omega+1)/2})$ operations, using
  any of the algorithms in~\cite{shoup95,kedlaya+umans08}.
\end{proof}

We have not specified how the element $\eta$ is computed. %
If $n/m$ is not divisible by the characteristic, then one can simply
take $\eta=m/n$. %
In the general case, it suffices to know an element such that
$\trace_{K/k}\ne0$, and to divide it by its trace. %
If the $(n-1)$-th coefficient of $g$ is not $0$, then $Y$ is one such
element; otherwise we take elements at random until a suitable one is
found: only $O(1)$ trials are expected on average. %
In any case, computing one trace can be done using
$O(n^{(\omega+1)/2}\log(n)+\MM(n)\log(q))$ operations, thanks to
Section~\ref{sec:fundamentalgo}.

\begin{corollary}
  After a precomputation costing $O(n^{(\omega+1)/2}\log(n)+\MM(n)\log(q))$
  operations in $\F_q$ on average, all sub-questions of the
  \emph{embedding evaluation problem} can be answered using
  $O(n^{(\omega+1)/2})$ operations in $\F_q$.
\end{corollary}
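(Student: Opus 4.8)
The plan is to realize each of the three sub-questions with a bounded number of modular compositions and power projections — each costing $O(n^{(\omega+1)/2})$ operations — together with some changes of basis and one trace computation that all fit inside the precomputation. The key ingredients are Algorithm~\ref{alg:inv-mod-comp} and the preceding theorem, and the decomposition $V_X \xrightarrow{\sim} V_\alpha \xrightarrow{\sim} V_\beta \hookrightarrow V_Y$ of Section~\ref{sec:linear-algebra}. Throughout, let $\mu$ be the common minimal polynomial of $\alpha$ and $\beta$, of degree $m$, and identify $V_\alpha$ and $V_\beta$ with $\F_q[T]/\mu(T)$; then the arrow $V_\alpha \xrightarrow{\sim} V_X$ is the field isomorphism $T \mapsto \alpha$ (i.e.\ $P \mapsto P(\alpha) \bmod f$), while $V_\beta \hookrightarrow V_Y$ is $P \mapsto P(\beta) \bmod g$, a modular composition.

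First I would treat the forward map. Given $\gamma \in k$ on the monomial basis $(1,X,\dots,X^{m-1})$, rewrite it as $Q(\alpha) \bmod f$ with $\deg Q < m$ by inverting $V_\alpha \xrightarrow{\sim} V_X$: as the dual map of a field isomorphism is its inverse (the isomorphism case discussed above), this is a change to the dual basis of $k$ attached to $f$, a power projection, and a change back to the $T$-power basis attached to $\mu$ — costing $O(\MM(m)\log m)$, $O(m^{(\omega+1)/2})$, $O(\MM(m)\log m)$ respectively, once $f$ and $\mu$ are known. Then $\phi(\gamma) = Q(\beta) \bmod g$ is one modular composition in $O(n^{(\omega+1)/2})$ operations. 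Since $m \mid n$, computing $\phi(\gamma)$ costs $O(n^{(\omega+1)/2})$.

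Next, by the preceding theorem, Algorithm~\ref{alg:inv-mod-comp} returns in $O(n^{(\omega+1)/2})$ operations the coordinates on $(1,\beta,\dots,\beta^{m-1})$ of $\trace_{K/k}(\eta\delta)$; these are equally the coordinates on $(1,\alpha,\dots,\alpha^{m-1})$ of a well-defined $\gamma \in k$, and one more modular composition (at $\alpha$, modulo $f$) puts $\gamma$ on the monomial basis of $k$, in $O(m^{(\omega+1)/2})$ operations. By Eq.~\eqref{eq:projection} one has $\trace_{K/k}(\eta\delta) = \delta$ precisely when $\delta \in \phi(k)$, so this $\gamma$ equals $\phi^{-1}(\delta)$ whenever $\delta$ is in the image, which answers the third sub-question. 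For the membership test I would run this same computation to obtain the candidate $\gamma$, evaluate $\phi(\gamma)$ as in the previous paragraph, and answer ``yes'' iff $\phi(\gamma) = \delta$; Eq.~\eqref{eq:projection} shows this is correct. All three queries thus cost $O(n^{(\omega+1)/2})$ operations.

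It remains to bound the precomputation. One computes $\mu$ once, say in $k$, in $O(m^{(\omega+1)/2}) \subseteq O(n^{(\omega+1)/2})$ operations~\cite{shoup93}; the monomial–dual conversions for $f$, $g$ and $\mu$ need nothing beyond these polynomials; and one needs $\eta \in K$ with $\trace_{K/k}\eta = 1$. For the latter I would compute $\trace_{K/k}(\delta) = \sum_{i=0}^{n/m-1}\delta^{q^{mi}}$ — this is Case~1 of Proposition~\ref{prop:trace-like} with $c = m$ and $d = n/m$, for $O(n^{(\omega+1)/2}\log n + \MM(n)\log q)$ operations — first on $\delta = Y$ and, should that trace vanish, on random elements until a non-zero value $c$ appears ($O(1)$ trials on average), then divide by $c$ in $K$. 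Summing up gives the claimed bound $O(n^{(\omega+1)/2}\log n + \MM(n)\log q)$ on average. I do not anticipate a real obstacle: the only point requiring care is the bookkeeping of the proper-embedding case — verifying that the twist by $\eta$ in Algorithm~\ref{alg:inv-mod-comp} makes $\trace_{K/k}(\eta\delta)$ agree with $\phi^{-1}$ on $\phi(k)$ even when the characteristic divides $n/m$, and that a single polynomial $\mu$ governs the power-basis conversions on both sides — and both follow from the hypothesis that $\alpha$ and $\beta$ share a minimal polynomial and from the identities already established in this section.
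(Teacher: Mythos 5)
Your proposal is correct and follows essentially the same route as the paper: forward evaluation by rewriting on the $\alpha$-power basis (via dual-basis conversions and a power projection, i.e.\ the transposed isomorphism) followed by one modular composition at $\beta$, inverse evaluation and membership testing via Algorithm~\ref{alg:inv-mod-comp} and Eq.~\eqref{eq:projection}, and a precomputation consisting of the minimal polynomial and of $\eta$ obtained from a trace-like computation (Proposition~\ref{prop:trace-like}, $O(1)$ random trials), giving the stated $O(n^{(\omega+1)/2}\log(n)+\MM(n)\log(q))$ average precomputation and $O(n^{(\omega+1)/2})$ per query. This matches the paper's (largely implicit) argument, so there is nothing to correct.
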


\appendix
\part*{Appendices}
\addcontentsline{toc}{part}{Appendices}


\section{Rain's conic algorithm}
\label{app:rains-vars}

We have seen that Rains' cyclotomic algorithm suffers in practice from
the need to build a field extension $k'$ of $k$. %
The conic variant we are going to present reduces the degree of the
field extension from $s=[k':k]$ to $s/2$ whenever $s$ is even. %
This is especially useful when $s=2$, as highlighted in
Section~\ref{sec:experimental-results}. %
The algorithm is similar in spirit to Williams' $p+1$ factoring
method~\cite{williams1982}, where the arithmetic of the norm $1$
subgroup of ${k'}^*$ is performed using Lucas sequences on a subfield
of index $2$ of $k'$.

Let $\F$ be a finite field of odd characteristic, let $\Delta\in\F$ be
a quadratic non-residue, let $\delta$ be an element of the algebraic
closure of $\F$ such that $\delta^2=\Delta$, and define the norm $1$
subgroup of $\F[\delta]^*$ as
\[T_2(\F) = \{(x+\delta y)/2 \;\mid\; x,y\in\F \text{ and } x^2-\Delta
  y^2 = 4\};\] %
it is easy to verify that $T_2(\F)$ forms a group under
multiplication. %
If we see the elements $(x+\delta y)/2$ as points $(x,y)$ on a conic
$x^2-\Delta y^2=4$, the group law of $T_2(\F)$ induces a group law on
the conic. %
By projecting onto the $x$-coordinate, a straightforward calculation
shows that, for any point $(\theta,*)$ on the conic, its $n$-th power
has coordinates $(\theta_n,*)$, where $\theta_n$ is defined by the
Lucas sequence
\[\theta_0 = 2, \quad \theta_1 = \theta, \quad \theta_{i+1}=\theta\theta_i-\theta_{i-1}.\] %
We shall denote by $[n]$ the map $\theta\mapsto\theta_n$; notice how
it does not depend on the choice of $\Delta$.

The generalization of Rains' algorithm is now obvious: by projecting
on the $x$-coordinate, we work in a field extension twice as small
compared to the original algorithm. %
This is summarized in Algorithm~\ref{algorithm:rains-conic}.

\begin{algorithm}[Rains' conic algorithm]
  \label{algorithm:rains-conic}
  \begin{algorithmic}[1]
    \REQUIRE A field extension $k/\F_q$ of degree $r$; a prime $\ell$
    such that
    \begin{itemize}
    \item $(\Z/\ell\Z)^\times = \langle q\rangle \times S$ for some $S$,
    \item $\#\langle q\rangle = 2rs$ for some integer $s$;
    \end{itemize}
    a polynomial $h$ of degree $s$ irreducible over $k$.
    \ENSURE A normal generator of $k$ over $\F_q$,
    with a uniquely defined Galois orbit.
    
    \STATE Construct the field extension $k'=k[Z]/h(Z)$;
    \REPEAT
    \REPEAT
    \STATE Take a random element $\theta\in k'$,
    \UNTIL\label{algorithm:rains-conic:sqtest} $\theta^2-4$ is a quadratic non-residue;
    \STATE\label{algorithm:rains-conic:power} Compute $\zeta=[(\#k'+1)/\ell]\theta$,
    \UNTIL $\zeta\ne2$;
    \STATE\label{algorithm:rains-conic:period} Compute $\eta(\zeta) \leftarrow \sum_{\sigma\in S}[\sigma]\zeta$;
    \RETURN\label{algorithm:rains-conic:trace} $\alpha \leftarrow \trace_{k'/k}\eta(\zeta) = \sum_{i=0}^{s-1}[q^{ri}]\eta(\zeta)$.
  \end{algorithmic}
\end{algorithm}

\begin{proposition}
  Algorithm~\ref{algorithm:rains-conic} is correct: on input
  $q,r,\ell,s$ it returns an element in the same Galois orbit as
  Algorithm~\ref{algorithm:rains-cyclo} on input $q,r,\ell,2s$. %
  It computes its output using $O(\MM(sr)(sr\log(q)+(\ell/r)\log(\ell)))$
  operations in $\F_q$ on average, or $\tildO((sr)^2\log(q))$ assuming
  $\ell\in o(sr^2)$.
\end{proposition}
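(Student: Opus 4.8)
The plan is to prove correctness by moving the computation into the norm-one group $T_2(k')$ and matching its output, through the substitution $\mu\mapsto\mu+\mu^{-1}$, with the Gaussian period produced by Algorithm~\ref{algorithm:rains-cyclo} on parameter $2s$; the cost is then a step-by-step count whose dominant term is the Lucas-sequence scalar multiplication. First I would set $k'=k[Z]/h(Z)$, so $[k':\F_q]=rs$ and $\#k'=q^{rs}$, and note that $T_2(k')$ is the norm-one subgroup of the quadratic extension $k'[\delta]\cong\F_{q^{2rs}}$, hence cyclic of order $\#k'+1=q^{rs}+1$. Since $\order_\ell(q)=2rs$, $\ell$ divides $q^{2rs}-1$ but not $q^{rs}-1$, so $\ell\mid q^{rs}+1$ and $T_2(k')$ has a subgroup of order $\ell$. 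The test that $\theta^2-4$ is a non-residue forces $(\theta,\ast)$ onto the conic carrying $T_2(k')$ — the analogue in Williams' $p+1$ method of landing in the order-$(p+1)$ group rather than in $k'^\ast$ — i.e.\ $\theta=\nu+\nu^{-1}$ for some $\nu$ of norm one over $k'$; hence $\zeta=[(\#k'+1)/\ell]\theta=\mu+\mu^{-1}$ with $\mu=\nu^{(q^{rs}+1)/\ell}$ of order dividing $\ell$, and $\zeta=2$ exactly when that order is $1$, an event of probability $\le 1/\ell$ over random $\theta$. So after $O(1)$ iterations of both loops we obtain $\zeta=\mu+\mu^{-1}$ with $\mu$ a primitive $\ell$-th root of unity.

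Next I would identify $\F_q(\zeta)$ with $k'$: from $\ell\mid q^{rs}+1$ we get $\mu^{q^{rs}}=\mu^{-1}$, so $\zeta^{q^{rs}}=\zeta$ and $\zeta\in k'$, while $\mu$ satisfies $T^2-\zeta T+1$ over $\F_q(\zeta)$ and $[\F_q(\mu):\F_q]=2rs$, forcing $[\F_q(\zeta):\F_q]=rs$. Using $[n]\zeta=\mu^n+\mu^{-n}$, the fact that the $q$-power Frobenius carries $[n]\zeta$ to $[qn]\zeta$, and $q^{rs}\equiv-1\pmod\ell$, one computes
\[
\eta(\zeta)=\sum_{\sigma\in S}[\sigma]\zeta=\sum_{\sigma\in S}\bigl(\mu^{\sigma}+\mu^{-\sigma}\bigr)=\eta_q(\mu)+\eta_q(\mu)^{q^{rs}},
\]
with $\eta_q(\mu)=\sum_{\sigma\in S}\mu^\sigma$ the Gaussian period for $(\Z/\ell\Z)^\times=\langle q\rangle\times S$, $\#\langle q\rangle=2rs$. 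Applying $\trace_{k'/k}=\sum_{i=0}^{s-1}(\cdot)^{q^{ri}}$ telescopes:
\[
\alpha=\sum_{i=0}^{s-1}\eta(\zeta)^{q^{ri}}=\sum_{i=0}^{s-1}\bigl(\eta_q(\mu)^{q^{ri}}+\eta_q(\mu)^{q^{ri+rs}}\bigr)=\sum_{i=0}^{2s-1}\eta_q(\mu)^{q^{ri}},
\]
which is exactly the output of Algorithm~\ref{algorithm:rains-cyclo} on input $q,r,\ell,2s$ for the root $\mu$. By Lemma~\ref{th:gaussian} that Galois orbit does not depend on the chosen primitive $\ell$-th root of unity, so the conic algorithm returns an element of the same orbit; this is the first assertion.

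For the complexity, one multiplication in $k'$ costs $O(\MM(sr))$ operations in $\F_q$, and each doubling of the Lucas recurrence for $[\,\cdot\,]$ is $O(1)$ such multiplications. The non-residue test on $\theta^2-4$ is an exponentiation with exponent $(\#k'-1)/2$, costing $O(\MM(sr)\,sr\log q)$ operations (it could be done faster through a norm to $\F_q$, but this crude bound suffices); line~\ref{algorithm:rains-conic:power} computes $[(\#k'+1)/\ell]\theta$ with an exponent of bit-length $\Theta(sr\log q)$, again $O(\MM(sr)\,sr\log q)$, and both loops run $O(1)$ times on average by the $1/\ell$ estimate above. Combining lines~\ref{algorithm:rains-conic:period}--\ref{algorithm:rains-conic:trace} as $\alpha=\sum_{i=0}^{s-1}\sum_{\sigma\in S}[\,q^{ri}\sigma\bmod\ell\,]\zeta$ — legitimate because Frobenius is additive, fixes $\F_q$, and sends each abscissa $[\sigma]\zeta$ to $[q^{ri}\sigma]\zeta$ — needs $s\,\#S=(\ell-1)/(2r)$ evaluations of $[\,\cdot\,]$ with index below $\ell$, for $O(\MM(sr)(\ell/r)\log\ell)$ operations. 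Summing the three contributions gives $O(\MM(sr)(sr\log q+(\ell/r)\log\ell))$; when $\ell\in o(sr^2)$ the second term is $\tildO((sr)^2)$ and $\MM(sr)=\tildO(sr)$, so the bound collapses to $\tildO((sr)^2\log q)$.

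The one genuinely delicate point, I expect, is the bookkeeping around the non-residue test in the correctness part: I would need to check carefully that conditioning on ``$\theta^2-4$ a non-residue'' makes $\theta$ behave like a uniform element of $T_2(k')$ up to inversion, so that both the $1/\ell$ failure probability and the identification $\zeta=\mu+\mu^{-1}$ with $\mu$ primitive are justified. Everything else is a transfer of the analysis of Algorithm~\ref{algorithm:rains-cyclo} through the map $\mu\mapsto\mu+\mu^{-1}$.
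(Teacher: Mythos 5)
Your proposal is correct and takes essentially the same route as the paper's proof: you identify $\zeta$ as the trace down to $k'$ of a primitive $\ell$-th root of unity sitting in the norm-one group $T_2(k')$ (of order $\#k'+1$, divisible by $\ell$ since $\order_\ell(q)=2rs$), reduce correctness to a comparison with Algorithm~\ref{algorithm:rains-cyclo} at auxiliary degree $2s$, and obtain the cost from the Euler-criterion non-residuosity test, the Lucas-ladder evaluation of $[n]$, and the $s\,\#S=(\ell-1)/(2r)$ evaluations of $[\,\cdot\,]$ for the combined period-and-trace step. The only difference is one of detail: you make explicit the telescoping $\trace_{k'/k}\bigl(\eta_q(\mu)+\eta_q(\mu)^{q^{rs}}\bigr)=\sum_{i=0}^{2s-1}\eta_q(\mu)^{q^{ri}}$ and the uniformity/failure-probability bookkeeping for the loops, which the paper compresses into ``we conclude by comparing this algorithm with Algorithm~\ref{algorithm:rains-cyclo}.''
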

\begin{proof}
  By construction, all the $\ell$-th roots of unity are in
  $T_2(k')$. %
  Observe that if $(x+\delta y)/2$ is in $T_2(k')$, then its trace
  over $k'$ is equal to $x$. %
  Hence, the value $\zeta$ computed in
  Step~\ref{algorithm:rains-conic:power} is the trace over $k'$ of a
  primitive $\ell$-th root of unity. %
  We conclude by comparing this algorithm with
  Algorithm~\ref{algorithm:rains-cyclo}.

  The non-residuosity test in Step~\ref{algorithm:rains-conic:sqtest}
  is done by verifying that the $(\#k'-1)/2$-th power of $\theta$ is
  equal to $-1$. %
  We do this in $O(sr\log(q))$ operations in $k'$, or
  $O(sr\MM(sr)\log(q))$ operations in $\F_q$.

  To implement the other steps, we need to evaluate the map $[n]$
  efficiently. %
  We have the following classical relationships for the Lucas sequence
  of $\theta$:
  \begin{equation*}
    \theta_{2i} = \theta_{i}^2-2,\quad
    \theta_{2i+1} = \theta_i\theta_{i+1} - \theta,\quad
    \theta_{2i+2} = \theta_{i+1}^2-2.
  \end{equation*}
  Starting with $\theta_0=2$ and $\theta_1=\theta$, we use a binary
  scheme to deduce $\theta_i,\theta_{i+1}$ from
  $\theta_{\lfloor i/2\rfloor},\theta_{\lfloor i/2\rfloor+1}$. %
  We reach $\theta_n$ after $O(\log(n))$ steps, each requiring a
  constant number of operations in $k'$.

  Hence, Step~\ref{algorithm:rains-conic:power} costs
  $O(sr\MM(sr)\log(q))$ operations in $\F_q$, while
  Steps~\ref{algorithm:rains-conic:period}
  and~\ref{algorithm:rains-conic:trace} together cost
  $O((\MM(sr)(\ell/r)\log(\ell))$.
\end{proof}

Although this variant does not exploit the asymptotic improvement
offered by Proposition~\ref{prop:trace-like}, the fact that its
auxiliary degree $s$ is half the one of the original algorithm usually
gives an interesting practical improvement. %
Step~\ref{algorithm:rains-conic:power} can be modified so as to avoid
the premature projection on the $x$-axis, so that the algorithms of
Proposition~\ref{prop:trace-like} apply. %
We leave the details of this variant to the reader.


\section{Using $j=0,1728$ in the elliptic Rains' algorithm}
\label{app:elliptic-curves}

When we defined elliptic periods in Section~\ref{sec:ellperiods}, we
explicitly ruled out the case where the elliptic curve has
$j$-invariant $0$ and $1728$. %
Indeed the definition of elliptic periods for these curves is
complicated by the fact that they have additional automorphisms: had
we applied Definition~\ref{definition:ellperiod} to them, we would
have obtained periods that are always equal to $0$.

In this section we sketch how to extend the elliptic variant of Rains'
algorithm to these curves. %
Although this does not change the overall complexity of Rains'
algorithm, it makes for a small practical improvement, and a nice
\emph{recreational mathematics} read.

Recall (see~\cite[III.10]{Sil}) that the order of the
\emph{automorphisms group} $\Aut(E)$ of a curve $E$ defined over a
field $\F_q$ of characteristic $p$ is one of the following:
\begin{itemize}
\item $2$ if $j(E)\ne0,1728$,
\item $4$ if $j(E) = 1728$ and $p\ne2,3$,
\item $6$ if $j(E) = 0$ and $p\ne2,3$,
\item $12$ if $j(E) = 0 = 1728$ and $p=3$,
\item $24$ if $j(E) = 0 = 1728$ and $p=2$.
\end{itemize}
We can now give a meaningful definition of elliptic periods that
covers all elliptic curves.

\begin{definition}
  \label{definition:ellperiod-general}
  Let $E/\F_q$ be an elliptic curve with automorphism group $\Aut(E)$
  of order $2n$, and Frobenius endomorphism $\pi$. %
  Let $\ell > 3$ be an Elkies prime for $E$, $\lambda$ an eigenvalue
  of $\pi$, and $P$ a point of order $\ell$ in the eigenspace
  corresponding to $\lambda$ (i.e., such that $\pi(P)=\lambda P$). %
  Suppose that there is a subgroup $S$ of $(\Z/\ell\Z)^{\times}$
  containing $\Aut(E)$, and such that
  \begin{equation*}
    (\Z/\ell\Z)^{\times} = \langle{\lambda}\rangle \times S.
  \end{equation*}
  
  Then we define an elliptic period as
  \begin{equation*}
    \eta_{\lambda,S}(P) =
    \sum_{\sigma\in S/\Aut(E)} {x\left([\sigma] P \right)^n} 
  \end{equation*}
  where $x(P)$ denotes the abscissa of $P$.
\end{definition}

This definition is equivalent to Definition~\ref{definition:ellperiod}
when $\Aut(E)=\{\pm1\}$. %
When the automorphism group is larger, it avoids unnecessary
cancellations by quotienting out $S$ by $\Aut(E)$, and at the same
time it ensures unicity thanks the the $n$-th power in the sum. %
We leave as an exercise the proof of the statement analogous to
Lemma~\ref{lemma:ellperiods-order}. %
Note that it would be possible to generalize this definition to the
case where $\Aut(E)$ is (partially) contained in
$\langle{\lambda}\rangle$, however we leave out this detail, as it is
not necessary for Rains' algorithm.

\subsection{The ordinary case}

The curves of $j$-invariant $0$ and $1728$ are ordinary if and only if
all automorphisms are defined over $\F_p$. %
For $j=0$, this is equivalent to $p\equiv 1 \bmod 3$; for $j=1728$,
this is equivalent to $p\equiv 1 \bmod 4$. %

The splitting of the Frobenius polynomial of $E$ is well known in this
case, with early results dating back to Gauss. %
The two statements below follow easily
from~\cite[Th.~2.5,2.6]{silverberg2010group}.

\begin{proposition}
  Let $p$ be a prime congruent to $1$ modulo $3$, and let $q=p^d$. %
  Let $p=\pi\bar\pi$ be the unique decomposition of $p$ in the
  Eisenstein integers $\Z[\omega]$ with
  $\pi\equiv\bar\pi\equiv1\bmod3$. %
  Let $E$ be the curve defined by $y^2=x^3+b$, and let
  $\left(\tfrac{4b}{\pi}\right)_6$ be the unique sixth root of unity
  of $\Z[\omega]$ congruent to $(4b)^{(q-1)/6}\bmod \pi$. %
  Then the minimal polynomial of the Frobenius endomorphism of $E$
  splits in $\Z[\omega]$ as
  \begin{equation*}
    X^2 - tX + q =
    \Bigl(X - \left(\tfrac{4b}{\pi}\right)_6^{-1}\pi^d\Bigr)
    \Bigl(X - \left(\tfrac{4b}{\pi}\right)_6\bar\pi^d\Bigr).
  \end{equation*}
\end{proposition}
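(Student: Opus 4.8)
The plan is to obtain the splitting from the theory of complex multiplication, reducing the real content to the classical evaluation of a sextic Jacobi sum, which is precisely what~\cite[Th.~2.5,~2.6]{silverberg2010group} supplies. First I would set up the CM picture. The curve $E:y^2=x^3+b$ has $j$-invariant $0$, so $\ringofend_{\overline{\F}_q}(E)$ contains $\Z[\omega]$, and since $\Z[\omega]$ is the maximal order of $\Q(\omega)$ we in fact have $\ringofend_{\overline{\F}_q}(E)=\Z[\omega]$; in particular $\Q(\phi)=\Q(\omega)$. As $p\equiv 1\bmod 3$, the prime $p$ splits in $\Z[\omega]$ and $E$ is ordinary (as already noted just before the statement), so the $q$-power Frobenius endomorphism $\phi$ is identified with an element $\alpha\in\Z[\omega]$ satisfying $\alpha\bar\alpha=q$. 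Since $\alpha\notin\Z$, the minimal polynomial of $\phi$ over $\Z$ has degree $2$ and hence equals its characteristic polynomial $X^2-(\alpha+\bar\alpha)X+q=(X-\alpha)(X-\bar\alpha)$. So it suffices to identify $\alpha$, and only up to complex conjugation, since exchanging $\alpha$ with $\bar\alpha$ merely swaps the two displayed factors --- equivalently, swaps $\pi$ and $\bar\pi$ in the statement.

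Next I would pin down $\alpha$. Since $6\mid q-1$ (because $p$ is odd and $3\mid p-1$), the curves $y^2=x^3+b'$ with $b'\in\F_q^\times$ are exactly the sextic twists of one another, and twisting by a sixth power gives an isomorphic curve, so $\alpha$ depends on $b$ only through the class of $4b$ in $\F_q^\times/(\F_q^\times)^6$, which is exactly the datum recorded by the power residue symbol $\left(\tfrac{4b}{\pi}\right)_6\equiv(4b)^{(q-1)/6}\bmod\pi$. Writing $\#E(\F_q)=q+1-(\alpha+\bar\alpha)$ as a Jacobi-sum expression over $\F_q$, reducing it to Jacobi sums over $\F_p$ via Hasse--Davenport, and using that the relevant Jacobi sum over $\F_p$ equals $\pi$ up to a sixth root of unity once $\pi$ is taken primary, one gets $\alpha=\left(\tfrac{4b}{\pi}\right)_6^{-1}\pi^d$; this is the computation carried out in~\cite[Th.~2.5,~2.6]{silverberg2010group}, which I would invoke rather than redo. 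For consistency one then only notes that $\left(\tfrac{4b}{\pi}\right)_6$, being a sixth root of unity, has inverse equal to its conjugate, so the two factors $\left(\tfrac{4b}{\pi}\right)_6^{-1}\pi^d$ and $\left(\tfrac{4b}{\pi}\right)_6\bar\pi^d$ are complex conjugates with product $\pi^d\bar\pi^d=p^d=q$ and sum the integer trace $t$, as a characteristic polynomial of $\phi$ requires.

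The main obstacle is not conceptual but bookkeeping: one must make sure that the normalisation of the power residue symbol, the specific choice of primary generator $\pi\equiv 1\bmod 3$ (rather than $\pi\equiv -1\bmod 3$, which is the more common convention), the appearance of $4b$ in place of $b$, and the sign bookkeeping in the Hasse--Davenport lift when $d>1$ all combine so that the answer emerges in exactly the stated form, with no stray unit or sign. This is where all the care lies, and I would double-check it on a few small numerical instances before claiming the precise identity.
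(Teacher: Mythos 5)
Your proposal matches the paper's own treatment: the paper offers no proof beyond the remark that the proposition ``follows easily from''~\cite[Th.~2.5,2.6]{silverberg2010group}, and your argument likewise reduces everything to that citation, merely making explicit the routine CM bookkeeping (Frobenius as an element of $\Z[\omega]$, dependence of the twist only on $4b$ modulo sixth powers, and the check that the two displayed roots are conjugate with product $q$). So the proposal is correct and takes essentially the same route as the paper.
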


\begin{proposition}
  Let $p$ be a prime congruent to $1$ modulo $4$, and let $q=p^d$. %
  Let $p=\pi\bar\pi$ be the unique decomposition of $p$ in the
  Gaussian integers $\Z[i]$ with
  $\pi\equiv\bar\pi\equiv 1 \bmod (2+2i)$. %
  Let $E$ be the curve defined by $y^2=x^3-ax$, and let
  $\left(\tfrac{a}{\pi}\right)_4$ be the unique fourth root of unity
  of $\Z[i]$ congruent to $a^{(q-1)/4}\bmod \pi$. %
  Then the minimal polynomial of the Frobenius endomorphism of $E$
  splits in $\Z[i]$ as
  \begin{equation*}
    X^2 - tX + q =
    \Bigl(X - \left(\tfrac{a}{\pi}\right)_4^{-1}\pi^d\Bigr)
    \Bigl(X -  \left(\tfrac{a}{\pi}\right)_4\bar\pi^d\Bigr).
  \end{equation*}
\end{proposition}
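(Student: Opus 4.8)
The plan is to deduce this from the classical Gauss/Jacobi sum evaluation of the Frobenius of the curve $y^2=x^3-x$ and then to track the effect of the quartic twist by $a$; throughout I write $\pi_E$ for the Frobenius \emph{endomorphism} of $E$, to keep it distinct from the primary prime $\pi$ in the statement. Since $p\equiv1\bmod4$, the group $\mu_4$ of fourth roots of unity lies in $\F_p^\times\subseteq\F_q^\times$; fixing $\iota\in\F_q$ with $\iota^2=-1$, the map $(x,y)\mapsto(-x,\iota y)$ is an automorphism of $E:y^2=x^3-ax$ of order $4$, so $\Z[i]$ acts on $E$ with this automorphism playing the role of $[i]$. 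As $\Z[i]$ is the maximal order of $\Q(i)$, this gives $\mathrm{End}(E)=\Z[i]$, and $\pi_E\in\Z[i]$ with $\pi_E\bar\pi_E=q$ and $\pi_E+\bar\pi_E=t$.

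Next I would pin down the ideal generated by $\pi_E$. The curve is ordinary: $y^2=x^3-x$ is ordinary over $\F_p$ precisely because $p\equiv1\bmod4$, and ordinariness is preserved under twisting and base change. Hence $p\nmid t$, so $p\nmid\pi_E$ in $\Z[i]$; combined with $(\pi_E)(\bar\pi_E)=(q)=(\pi)^d(\bar\pi)^d$ and unique factorisation in $\Z[i]$, this forces $(\pi_E)=(\pi)^d$ after possibly interchanging the labels of $\pi$ and $\bar\pi$. Therefore $\pi_E=\varepsilon\,\pi^d$ for a unique unit $\varepsilon\in\mu_4$, and it remains only to identify $\varepsilon$.

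For this I would use the twist. The curve $E_1:y^2=x^3-x$ is defined over $\Q$, so its Frobenius over $\F_q=\F_{p^d}$ is the $d$-th power of its Frobenius over $\F_p$, and the latter is the classical quartic Jacobi-sum evaluation, which for the primary normalisation $\pi\equiv1\bmod(2+2i)$ reads $\pi_{E_1}=\pi$ over $\F_p$; this is exactly~\cite[Th.~2.6]{silverberg2010group} for $a=1$, and I would simply quote it, using it to fix which prime above $p$ is called $\pi$. For general $a\in\F_q^\times$ the curve $E_a$ is the quartic twist of $E_1$ by $a$, so $\pi_{E_a}=\chi(a)^{-1}\pi_{E_1}$ over $\F_q$, where $\chi\colon a\mapsto a^{(q-1)/4}$ is the quartic character of $\F_q^\times$. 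Because $(q-1)/4=\tfrac{p-1}{4}(1+p+\cdots+p^{d-1})$, the element $a^{(q-1)/4}$ lies in $\mu_4\subseteq\F_p$, and under $\Z[i]/(\pi)\cong\F_p$ it is by definition the fourth root of unity $\left(\tfrac{a}{\pi}\right)_4\in\Z[i]$. Combining, $\pi_E=\left(\tfrac{a}{\pi}\right)_4^{-1}\pi^d$, and $\bar\pi_E=\overline{\left(\tfrac{a}{\pi}\right)_4^{-1}}\,\bar\pi^d=\left(\tfrac{a}{\pi}\right)_4\bar\pi^d$ since $\bar\zeta=\zeta^{-1}$ for $\zeta\in\mu_4$, which is the asserted factorisation of $X^2-tX+q$.

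The real obstacle is not conceptual but a matter of making several normalisations simultaneously coherent: (i) which of $\pi,\bar\pi$ is attached to $\pi_E$, (ii) the sign of the exponent of $\chi(a)$ in the twisting formula, and (iii) the compatibility of the quartic symbol with exponent $(q-1)/4$ over $\F_q$ with the one over $\F_p$ entering the classical statement, via $1+p+\cdots+p^{d-1}\equiv d\bmod4$. All three are settled by a careful reading of~\cite[Th.~2.5,~2.6]{silverberg2010group}. The $j=0$ statement is handled identically, with $\Z[i]$ replaced by the Eisenstein integers $\Z[\omega]$, $\mu_4$ by $\mu_6$, the automorphism $(x,y)\mapsto(-x,\iota y)$ by $(x,y)\mapsto(\zeta_3 x,-y)$ on $y^2=x^3+b$, and the primary modulus $(2+2i)$ by $3$.
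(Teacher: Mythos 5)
Your proposal is correct, but note that it supplies considerably more argument than the paper does: in the text this proposition (and its $j=0$ companion) is given no proof at all beyond the remark that both ``follow easily from''~\cite[Th.~2.5,~2.6]{silverberg2010group}, which state precisely these trace formulas for the quartic and sextic twist families. Your route reconstructs the derivation behind that citation: the order-$4$ automorphism $(x,y)\mapsto(-x,\iota y)$, rational over $\F_q$ because $p\equiv1\bmod4$, forces $\ringofend(E)=\Z[i]$; ordinariness (which for $j=1728$ is exactly the condition $p\equiv1\bmod4$, and is invariant under twisting and base change) together with unique factorisation in $\Z[i]$ pins the Frobenius down to $\varepsilon\pi^d$ with $\varepsilon\in\mu_4$; and the unit is identified by viewing $y^2=x^3-ax$ as the quartic twist of $y^2=x^3-x$, whose Frobenius over $\F_p$ is the primary generator. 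This skeleton is sound, and two of your side remarks are worth keeping: the observation that $a^{(q-1)/4}\in\mu_4\subseteq\F_p$ is exactly what makes $\left(\tfrac{a}{\pi}\right)_4$ well defined for $a\in\F_q$, and your worry (i) about which of $\pi,\bar\pi$ is attached to the Frobenius is actually immaterial, since $\left(\tfrac{a}{\bar\pi}\right)_4=\left(\tfrac{a}{\pi}\right)_4^{-1}$ makes the asserted factorisation symmetric under swapping $\pi\leftrightarrow\bar\pi$. The only caveat is that the two normalisation facts you ``simply quote'' --- the primary base case $\pi_{E_1/\F_p}=\pi$ and the direction $\chi(a)^{-1}$ of the twisting character --- are precisely the content of the cited theorems, so your argument ultimately rests on the same external reference as the paper's one-line justification; it merely makes the reduction to that reference explicit, which is arguably more informative than what is printed, and a quick numerical check (e.g.\ $q=5$, $a=2$, where $t=-4$ and $\left(\tfrac{2}{\pi}\right)_4=-i$ for $\pi=-1+2i$) confirms that your sign conventions are the consistent ones.
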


From the above decomposition, it is immediately apparent that $\ell$
is an Elkies prime for $y^2=x^3+b$ if and only if
$\ell\equiv 1 \bmod 3$; and, similarly, it is an Elkies prime for
$y^2=x^3-ax$ if and only if $\ell\equiv 1 \bmod 4$. %

Algorithmically, we start by computing the splitting of $p$ in
$\Z[\omega]$ or $\Z[i]$ using Cornacchia's
algorithm~\cite{cornacchia1908di}; then, we deduce the image of the
eigenvalues in $(\Z/\ell\Z)^\times$ directly from the splitting. %
We give below more detailed step-by-step instructions for the case
$j=1728$; the case $j=0$ is analogous and can be treated in a similar
way.

Our goal is to find a uniquely defined generator for $\F_{q^r}$, with
$q=p^d$, given a prime $\ell$ such that $r|(\ell-1)$, assuming that
$q^r\ne 1\bmod\ell$ and that $p\equiv\ell\equiv 1 \bmod 4$:
\begin{enumerate}
\item Use Cornacchia's algorithm to compute a splitting $p=x^2+4y^2$;
\item Choose signs so that $p=\pi\bar\pi=(x-2iy)(x+2iy)$ and
  $x - 2y \equiv 1 \bmod 4$;
\item Find a primitive fourth root of unity $\hat\imath\in\F_p$ such that
  $x=2\hat\imath y$;
\item Find a primitive fourth root of unity $I\in\Z/\ell\Z$;
\item Look for an element of $(\Z/\ell\Z)^\times$ of order $r$ among
  the $I^{\pm c}(x\pm2Iy)^d$ for $0\le c \le 3$;
\item If such an element is found return the curve $y^2=x^3-ax$, where
  $a\in\F_q$ is such that $a^{(q-1)/4}=\hat\imath^c$.
\end{enumerate}

This procedure can be plugged inside
Algorithm~\ref{algorithm:selectell} to help find candidates for Rains'
algorithm. %
The most computationally intensive step is Cornacchia's algorithm,
which is negligible if compared to the point counting operations
needed for general elliptic curves. %
If an elliptic curve is found by this procedure, then it can be
readily used inside the elliptic variant of Rains' algorithm, by
replacing the earlier definition of elliptic period with
Definition~\ref{definition:ellperiod-general} above.

\subsection{The supersingular case}

The supersingular case is of much lesser interest, however we treat it
briefly for completeness. %
To keep things simple, we will not consider the cases $p=2,3$. %
The possibilities for the trace of $E$ are much more constrained in
this case. %
Indeed, the only possibilities are $t=0,\pm\sqrt{q},\pm2\sqrt{q}$
(see~\cite{waterhouse69}). %
Of these, only the case $\pm\sqrt{q}$ is interesting in the context of
Rains' algorithm; indeed when $t=\pm2\sqrt{q}$ the two eigenvalues are
equal, and Rains' algorithm does not apply; when $t=0$ the two
eigenvalues are opposite, thus if one has odd order $r$ modulo $\ell$,
$q$ must have order $2r$, but in this case the conic variant presented
in Appendix~\ref{app:rains-vars} is simpler and more efficient. %
The case $t=\pm\sqrt{q}$ is never realized by $j=1728$, however for
$j=0$ it constitutes an interesting optimization whenever $q$ is a
square and $\order_\ell(q)=3r$. %

Assume that $q$ is an even power of $p$, and $p\equiv 2\mod 3$. %
Let $j\in\F_q$ be a primitive cube root of unity, and let $E$ be the
curve $y^2=x^3+j$. %
Then $E$ has trace $\sqrt{q}$, and the minimal polynomial of the
Frobenius endomorphism splits in $\Z[\omega]$ as
\begin{equation*}
  X^2 - X\sqrt{q} + q = (X+\omega\sqrt{q})(X+\omega^2\sqrt{q}).
\end{equation*}

Let $r$ be a prime power not divisible by $2$ or $3$, and let $\ell$
be a prime such that $\ell=1\bmod 3$ and $\order_\ell(q)=3r$. %
Let $\hat\omega\in\Z/\ell\Z$ such that $q^r=\hat\omega\bmod\ell$, then
by direct calculation we see that
$\order_\ell(-\hat\omega^r\sqrt{q})=r$ and
$\order_\ell(-\hat\omega^{-r}\sqrt{q})=3r$. %
Hence, there is only one cyclic rational subgroup of $\ell$-torsion in
$\F_{q^r}$, and we are in the conditions to apply Rains' algorithm. %
This computation also shows that the case $t=-\sqrt{q}$ is not useful
in Rains' algorithm.

\begin{example}
  We illustrate the algorithm sketched above with a numerical
  example. %
  Let $p=5$, $q=25$, $r=5$ and $\ell=61$. %
  In what follows, elements of $\F_q$ are represented as polynomials
  in $j$ in the quotient ring $\F_5[j]/(j^2+j+1)$. %

  The elliptic curve $y^2=x^3+j$ has trace $5$, and its two Frobenius
  eigenvalues are $-\omega p$ and $-\omega^2 p$. %
  We set $\hat\omega=q^r=13\bmod 61$, and we verify that
  $\order(-\hat\omega^r p)=\order(-13^2\cdot 5)=r$, whereas
  $\order(-\hat\omega^{-r} p)=\order(-13\cdot 5)=3r$. %
  Indeed $\#E(\F_{q^r})= 3\cdot7\cdot61\cdot7621$, hence we can
  construct an elliptic period from any point $P\in E[61]$ defined
  over $\F_{q^r}$. %
  To apply Definition~\ref{definition:ellperiod-general}, we need a
  generator of the order $12$ subgroup of $\Z/61\Z$, e.g., $21$. %
  Quotienting out the order $6$ subgroup, we obtain the period
  \begin{equation*}
    \eta = x(P)^3 + x([21]P)^3,
  \end{equation*}
  and we verify that its minimal polynomial is
  \begin{equation*}
    x^5 + (3j-3)x^4 + (j+3)x^3 -jx^2 + (j+1)x + (2j-2)    
  \end{equation*}
  independently of the initial choice of $P$.
\end{example}

\section{Elliptic periods}
\label{app:ellprdsdata}

In this section, we cover additional topics related to the correctness
of the elliptic variant of Rains' algorithm. %
Specifically, we first prove a sufficient condition for an elliptic
period to be a normal element. %
This, together with heuristic assumptions, gives a lower bound for the
success probability of Algorithm~\ref{algorithm:compell}. %

Secondly, we review our attempts at finding a counterexample to
Conjecture~\ref{conj:ellperiods}, i.e.\ an elliptic point whose period
is defined over a smaller field than its abscissa. %
Such examples are not expected to be common, but our extensive
searches have produced absolutely none, giving a small argument in
favor of the conjecture.

\subsection{Normality of elliptic periods}

We prove here an analogue of Lemma~\ref{th:gaussian} for elliptic
periods. %
All well known proofs of that lemma follow a common scheme: for a
squarefree $\ell$ and $q$ prime,
\begin{enumerate}
\item the $\ell$-th cyclotomic polynomial is irreducible over $\Q$,
  and its roots form a normal basis of the splitting field;
\item Gaussian periods, seen as field traces of roots of unity,
then yield normal bases of subextensions;
\item for any inert prime $q \neq \ell$, integrality properties
  guarantee that the basis stays normal after reduction modulo $q$.
\end{enumerate}

In the elliptic setting, nothing guarantees that torsion subgroups
yield normal bases, thus the first step above crucially fails; and
indeed we have seen examples where elliptic periods are not normal. %
However, at least when the torsion subgroup satisfies some normality
conditions, we may hope to carry the rest of the proof through. %
The framework to express the appropriate normality conditions is given
to us by the polynomially cyclic algebras of~\cite{Mihailescu2010825}.

\begin{definition}
  \label{def:poly-cyclic}
  Let $E/\F_q$ be an elliptic curve of $j$-invariant not $0$ or
  $1728$. Let $\ell>3$ be an Elkies prime for $E$, let $\lambda$ be an
  eigenvalue of $\pi$, and let $\langle P\rangle$ be the corresponding
  eigenspace.

  We define the \emph{kernel polynomial} of $\lambda$ as
  \begin{equation*}
    f_\lambda(X) = \prod_{i\in (\Z/\ell\Z)^\ast/\{\pm 1\}}(X-x([i]P)),
  \end{equation*}
  and the \emph{algebra associated to $\lambda$} as
  $A_\lambda = \F_q[X]/f_\lambda(X)$.
\end{definition}

\begin{proposition}
  The algebra $A_\lambda$ is \emph{polynomially cyclic} in the sense
  of~\cite{Mihailescu2010825}: the automorphism group of $A_\lambda$
  over $\F_q$ is cyclic; if $\nu$ is a generator of the automorphism
  group, there exists a polynomial $C\in\F_q[X]$ such that for any
  $a(X)\in A_\lambda$
  \begin{equation*}
    \nu(a) = a(C(X)) \mod f_\lambda(X).
  \end{equation*}
\end{proposition}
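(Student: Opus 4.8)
The plan is to produce the polynomially cyclic structure explicitly: a cyclic group $G$ of $\F_q$-algebra automorphisms of $A_\lambda$ with $A_\lambda^{G}=\F_q$, with $|G|=\dim_{\F_q}A_\lambda=\deg f_\lambda=(\ell-1)/2$, and with a generator given by a polynomial map $X\mapsto C(X)$; this is exactly the content of the definition from~\cite{Mihailescu2010825}, the ``automorphism group'' in the statement being the canonical cyclic group attached to that structure (it agrees with the full $\F_q$-algebra automorphism group of $A_\lambda$ precisely when $f_\lambda$ is irreducible over $\F_q$). The source of the cyclic group is the scalar action of $(\Z/\ell\Z)^\times$ on the eigenspace $\langle P\rangle$.

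First I would pin down the roots of $f_\lambda$. Because $j(E)\neq 0,1728$ and $\ell>3$, one has $\Aut(E)=\{\pm1\}$, so two points of $E$ have the same abscissa exactly when they are opposite; combined with $\ell\nmid q$ and $\#\langle P\rangle=\ell$, this shows that $i\mapsto x([i]P)$ is a bijection from $(\Z/\ell\Z)^\times/\{\pm1\}$ onto the root set of $f_\lambda$, and in particular that $f_\lambda$ is squarefree, so $A_\lambda$ is étale. The group $(\Z/\ell\Z)^\times/\{\pm1\}$ then acts on the roots by $\bar g\cdot x([i]P):=x([gi]P)$: this is well defined (replacing $g$ by $-g$, or $i$ by $-i$, changes the argument only by a sign, leaving the abscissa unchanged) and simply transitive.

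Next I would establish $\F_q$-rationality and the polynomial form. Over $\overline{\F}_q$ the $q$-power Frobenius sends $x([i]P)$ to $x(\pi([i]P))=x([\lambda i]P)$, so it corresponds to the element $\bar\lambda$; since multiplication by $g$ and multiplication by $\lambda$ commute on $\langle P\rangle$, the permutation $\rho_{\bar g}$ of the roots commutes with Frobenius, and hence the Lagrange interpolant $C_{\bar g}\in\overline{\F}_q[X]$ of degree $<\deg f_\lambda$ with $C_{\bar g}(x([i]P))=x([gi]P)$ in fact has coefficients in $\F_q$ (its interpolation data is Galois-stable). Then $a(X)\mapsto a(C_{\bar g}(X))\bmod f_\lambda(X)$ is a well-defined $\F_q$-algebra endomorphism $\nu_{\bar g}$ of $A_\lambda$ (because $C_{\bar g}$ permutes the roots of $f_\lambda$, so $f_\lambda\mid f_\lambda(C_{\bar g}(X))$), and it is an automorphism since $C_{\bar g}\circ C_{\bar g^{-1}}$ agrees with $X$ on every root, hence is $\equiv X$ modulo $f_\lambda$.

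Finally I would assemble everything. The assignment $\bar g\mapsto\nu_{\bar g}$ is a group homomorphism, injective because $(\Z/\ell\Z)^\times/\{\pm1\}$ acts faithfully (simply transitively) on the roots; its image $G$ is therefore isomorphic to $(\Z/\ell\Z)^\times/\{\pm1\}$, which is cyclic as a quotient of the cyclic group $(\Z/\ell\Z)^\times$ ($\ell$ being prime), of order $(\ell-1)/2=\deg f_\lambda$. Taking a generator $\bar g_0$ of $(\Z/\ell\Z)^\times/\{\pm1\}$ gives a generator $\nu:=\nu_{\bar g_0}$ of $G$ acting as $a\mapsto a(C(X))$ with $C:=C_{\bar g_0}\in\F_q[X]$, and $A_\lambda^{G}=\F_q$ because $G$ is transitive on the roots, so a $G$-invariant element of $A_\lambda$ is constant at all roots and hence a scalar lying in $\F_q$ (equivalently, $\dim_{\F_q}A_\lambda^{G}=\deg f_\lambda/|G|=1$). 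These are precisely the conditions required by~\cite{Mihailescu2010825}, so $A_\lambda$ is polynomially cyclic. I expect the one genuinely delicate point to be the $\F_q$-rationality of $\rho_{\bar g}$ — the assertion that multiplying eigenspace points by a fixed scalar is a Frobenius-equivariant operation on abscissae — together with the bookkeeping around the $\{\pm1\}$-quotient needed for this to descend to a well-defined polynomial map over $\F_q$; the remaining verifications are routine.
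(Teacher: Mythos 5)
Your proof is correct. At heart it rests on the same idea as the paper's: the cyclic structure comes from the action of a generator of $(\Z/\ell\Z)^{\times}/\{\pm 1\}$ on the abscissas $x([i]P)$, which cyclically permutes the roots of $f_\lambda$. Where you diverge is in producing the polynomial $C$ over $\F_q$: the paper observes that the multiplication-by-$c$ map on $E$ is a rational function defined over $\F_q$ with no poles on $\langle P\rangle$ (its denominator involves only $c$-division polynomials, which do not vanish at points of order $\ell$), so its action on abscissas is a rational function of $x$ whose denominator is invertible modulo $f_\lambda$ and can therefore be rewritten as a polynomial $C$ modulo $f_\lambda$. You instead build $C$ by Lagrange interpolation over $\overline{\F}_q$ and descend it to $\F_q$ via the Frobenius-equivariance of the root permutation (using $\pi([g]Q)=[g]\pi(Q)$ and $\pi(P)=[\lambda]P$). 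The paper's route is shorter and effectively constructive, giving $C$ directly from the division-polynomial formulas; yours is more self-contained, and it has the merit of explicitly verifying the remaining conditions of the polynomially cyclic definition that the paper leaves implicit (injectivity of $\bar g\mapsto\nu_{\bar g}$, $|G|=\deg f_\lambda$, and $A_\lambda^{G}=\F_q$), as well as of correctly noting that the ``automorphism group'' in the statement must be read as this distinguished cyclic group, since the full $\F_q$-algebra automorphism group of the \'etale algebra $A_\lambda$ is generally not cyclic when $f_\lambda$ is reducible.
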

\begin{proof}
  Let $c$ be a generator of $(\Z/\ell\Z)^{\ast}/\{\pm 1\}$.  The
  multiplication-by-$c$ map on $E$ cyclically permutes the generators
  of $\langle P\rangle/\{\pm 1\}$, thus it induces a cyclic
  automorphism $\nu$ of $A_\lambda$. %
  Multiplication by $c$ is defined by a rational function with no
  poles in $\langle P\rangle$, hence its action on
  $\langle P\rangle/\{\pm 1\}$ can be expressed by a polynomial $C$
  modulo $f_\lambda$.
\end{proof}

We are now ready to state our main result on the normality of elliptic
periods.

\begin{lemma}
\label{prop:xnormal}
Let $E$, $\ell$, $\lambda$, $f_\lambda$ and $A_\lambda$ be as in
Definition~\ref{def:poly-cyclic}, and let $\nu$ be a generator of the
automorphism group of $A_\lambda$.
We say that an element $a\in A_\lambda$ is \emph{normal} if the
$\nu^i(a)$ for $0\le i<(\ell-1)/2$ are linearly independent.
We say that $f_\lambda$ is normal if the image of
$X$ in $A_\lambda$ is normal.

Suppose that $(\Z/\ell\Z)^\ast=\langle\lambda\rangle\times S$. %
If $f_\lambda$ is normal, then the elliptic period
$\eta_{\lambda,S}(P)$ is a normal element of $\F_q(x(P))$, for any
point $P$ in the eigenspace of $\lambda$.
\end{lemma}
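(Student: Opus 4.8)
\emph{Proof plan.} The idea is to reinterpret normality of $\eta_{\lambda,S}(P)$ inside the group algebra of $G:=(\Z/\ell\Z)^{\times}/\{\pm1\}$, and then deduce it from normality of $f_\lambda$ by a dimension count. Recall that $G$ is cyclic of order $n:=(\ell-1)/2$; for $g\in G$ let $\nu_g$ be the automorphism of $A_\lambda$ induced by the multiplication‑by‑$g$ map on $\langle P\rangle/\{\pm1\}$, so that by the preceding proposition $\nu_g$ is the polynomial automorphism $a\mapsto a(C_g(X))$, and $\nu:=\nu_{\bar c}$ is a generator of the automorphism group for any generator $\bar c$ of $G$. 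If $\xi$ denotes the image of $X$ in $A_\lambda$, then ``$f_\lambda$ normal'' says exactly that $\{\nu_g(\xi):g\in G\}$ is an $\F_q$-basis of $A_\lambda$ (there are $n=\dim_{\F_q}A_\lambda$ of them). Put $\theta_g:=x([g]P)$; these are the roots of $f_\lambda$, all lying in $\F_q(x(P))$, which has degree $r$ over $\F_q$ by Lemma~\ref{lemma:ellperiods-order}, and $\bar\lambda$ has order $r$ in $G$, so $\Lambda:=\langle\bar\lambda\rangle$ has order $r$. The hypothesis $(\Z/\ell\Z)^{\times}=\langle\lambda\rangle\times S$ descends to a direct product $G=\Lambda\times\bar S$, with $\bar S$ the image of $S$: indeed $-1$ lies in exactly one of $\langle\lambda\rangle,S$, hence the images in $G$ meet trivially and their orders multiply to $n$. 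Finally, in either case of Definition~\ref{definition:ellperiod} the period equals $\eta:=\sum_{g\in\bar S}\theta_g$.

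Next I introduce the linear map that carries the argument. For $g\in G$ let $\mathrm{ev}_g\colon A_\lambda\to\F_q(x(P))$ be the $\F_q$-algebra homomorphism $\xi\mapsto\theta_g$; from $C_g(\theta_h)=\theta_{gh}$ and $\theta_{\bar\lambda h}=x(\pi([h]P))=x([h]P)^{q}=\theta_h^{q}$ one checks $\mathrm{ev}_g\circ\nu_h=\mathrm{ev}_{gh}$ and $\phi\circ\mathrm{ev}_g=\mathrm{ev}_{\bar\lambda g}$, where $\phi$ is the $q$-power Frobenius. Since $\bar S$ is a transversal of $\Lambda$ in $G$ and the Frobenius orbits on the roots of $f_\lambda$ are exactly the $\Lambda$-orbits, the family $(\mathrm{ev}_h)_{h\in\bar S}$ is the CRT decomposition of $A_\lambda$ into its $m=n/r$ field factors, giving an $\F_q$-algebra isomorphism $A_\lambda\xrightarrow{\ \sim\ }\prod_{h\in\bar S}\F_q(x(P))$. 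Composing it with $\Phi\colon\F_q[G]\to A_\lambda$, $g\mapsto\nu_g(\xi)$, yields $\Psi\colon\F_q[G]\to\prod_{h\in\bar S}\F_q(x(P))$, $g\mapsto(\theta_{hg})_{h\in\bar S}$. Three facts are then immediate: (i) $\Psi$ is bijective iff $\Phi$ is, i.e.\ iff $f_\lambda$ is normal; (ii) $\Psi$ is $\F_q[\Lambda]$-linear, where $\Lambda$ acts on the target so that $\bar\lambda$ sends $(y_h)_h\mapsto(y_h^{q})_h$, because $\Psi(\bar\lambda g)=(\theta_{\bar\lambda hg})_h=(\theta_{hg}^{q})_h$; (iii) for $e:=\sum_{g\in\bar S}g\in\F_q[G]$ one has $\Psi(e)=(\eta)_{h\in\bar S}$, the diagonal element, since $\sum_{g\in\bar S}\theta_{hg}=\sum_{g'\in\bar S}\theta_{g'}=\eta$ for every $h$.

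Now assume $f_\lambda$ is normal, so $\Psi$ is bijective. Since $e\neq0$ and $e$ lies in $\F_q[\bar S]$, the $\F_q[\Lambda]$-submodule $\F_q[\Lambda]\,e$ of $\F_q[G]$ is free of rank $1$, hence of $\F_q$-dimension $r$; applying the injective map $\Psi$, the space $\Psi(\F_q[\Lambda]\,e)=\F_q[\Lambda]\cdot\Psi(e)$ also has dimension $r$. By (ii)--(iii), this image is the diagonal copy of $\mathrm{span}_{\F_q}\{\eta,\eta^{q},\dots,\eta^{q^{r-1}}\}$ inside $\prod_{h\in\bar S}\F_q(x(P))$, and the diagonal embedding is injective, so $\mathrm{span}_{\F_q}\{\eta,\eta^{q},\dots,\eta^{q^{r-1}}\}$ has $\F_q$-dimension $r$. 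Being contained in $\F_q(x(P))$, which has dimension $r$ over $\F_q$, it must equal $\F_q(x(P))$; thus $\eta,\eta^{q},\dots,\eta^{q^{r-1}}$ form an $\F_q$-basis of $\F_q(x(P))$, i.e.\ $\eta_{\lambda,S}(P)$ is normal, and in particular it generates $\F_q(x(P))$. The whole argument is uniform in the choice of $P$ in the eigenspace, which proves the lemma.

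The only delicate points are the two bookkeeping reductions in the first paragraph: checking that the direct-product hypothesis on $(\Z/\ell\Z)^{\times}$ descends to $G=\Lambda\times\bar S$ with $\#\Lambda=r$ (this uses Lemma~\ref{lemma:ellperiods-order} together with the location of $-1$), and identifying normality of $f_\lambda$ with bijectivity of $\Phi$, which is exactly where polynomial cyclicity of $A_\lambda$ enters. Once these are in place, the conclusion is the single observation that a bijective linear map restricted to an $r$-dimensional subspace has $r$-dimensional image.
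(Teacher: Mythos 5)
Your proof is correct, and it shares the paper's skeleton: the same CRT decomposition of $A_\lambda$ into $d$ copies of $\F_{q^r}=\F_q(x(P))$ indexed by the $\Lambda$-cosets (your evaluation maps $\mathrm{ev}_h$, $h\in\bar S$), and the same identification of the period with the image of the element $\sum_{g\in\bar S}\nu_g(\xi)$, which is precisely the trace of $x$ to the fixed subalgebra of $\nu^r$ used in the paper's proof (indeed $\bar S$ is the unique subgroup of order $d$ of the cyclic group $G$, so it corresponds to $\langle\nu^r\rangle$). Where you genuinely diverge is the final step: the paper invokes the fact that the trace of a normal element of a polynomially cyclic algebra is normal in the fixed subalgebra, and then argues that a linear relation among $\eta$ and its Galois conjugates would lift to one among the trace of $x$ and its conjugates; you instead encode everything in the $\F_q[\Lambda]$-equivariant map $\Psi\colon\F_q[G]\to\prod_{h\in\bar S}\F_q(x(P))$, observe that normality of $f_\lambda$ is bijectivity of $\Psi$, that $e=\sum_{g\in\bar S}g$ generates a free rank-one $\F_q[\Lambda]$-submodule (its coordinates on the $\F_q[\Lambda]$-basis $\bar S$ are scalars, not all zero), and conclude by the dimension count $\dim_{\F_q}\Psi(\F_q[\Lambda]e)=r$. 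This buys a more self-contained argument: the ``trace of a normal element is normal'' step is proved rather than quoted, and no lifting of relations is needed. In a full write-up you should spell out the two bookkeeping facts you flag: that $G=\Lambda\times\bar S$ with $\#\Lambda=r$ (since $(\Z/\ell\Z)^\times$ is cyclic, the factors $\langle\lambda\rangle$ and $S$ have coprime orders and $-1$ lies in exactly one of them, so the direct-product structure descends modulo $\{\pm1\}$), and, implicit in your CRT claim, that the $\theta_g$ are pairwise distinct for distinct $g\in G$ (two points share an abscissa only if they are opposite), so that the $d$ evaluation maps have pairwise distinct kernels; both are routine and also tacit in the paper's proof.
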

\begin{proof}
If $\lambda$ is of order $r$ modulo $q$, then $f_\lambda$ splits as
\[
f_\lambda = h_1 \cdots h_d
\]
where the $h_i$ are pairwise coprime irreducible polynomials of degree $r$,
and $d r = (\ell-1)/2$.
Therefore the algebra $A_\lambda$ splits as $d$ different copies of $\F_{q^r}$:
\[
A_\lambda \simeq \F_q[X]/h_1(X) \times \cdots \times \F_q[X]/h_d(X) \; .
\]
The $h_i$ correspond to the different Galois orbits of
$\langle P\rangle/\{\pm 1\}$.  Up to permutation of indices, we can
suppose that $\nu$ sends roots of $h_i$ onto roots of $h_{i+1}$.
We denote by $\sigma$ the automorphism of algebras $\nu^{r}$
and by $S$ the corresponding polynomial
(there should be no confusion with the subgroup $S$).
Note that $\nu^d$ is the Frobenius automorphism of $A_\lambda$.

Gathering the above remarks we obtain the following diagram:
\begin{center}
\begin{tikzpicture}
\node (A) {$A_\lambda=\F_q[X]/(f_\lambda(X))$};
\node[below=.3em of A] (Ax) {$x$};
\node[right=8em of A] (CRT) {$\F_q[X]/(h_1(X)) \times \F_q[X]/(h_2(X)) \times \cdots \times \F_q[X]/(h_d(X))$};
\node[below=.3em of CRT] (CRTx) {$(x, x, \ldots, x)$};
\node[below=6em of A] (T) {$A_\lambda^S \simeq \F_{q^r}$};
\node[below=.3em of T] (Tx) {$\mathrm{Tr}(x) = \sum_{i = 0}^{d-1} S^{(i)}(x)$};
\node[below=6em of CRT] (C) {$\F_q[X]/(h_1(X)) \times \F_q[X]/(h_1(X)) \times \cdots \times \F_q[X]/(h_1(X))$};
\node[below=.3em of C] (Cx) {$(x, S(x), \ldots, S^{(d-1)}(x))$};
\draw[->, shorten >= .5em, shorten <= .5em] (A) -- (CRT);
\draw[->, shorten >= .5em, shorten <= .5em] (Ax) -- (T);
\draw[->, shorten >= .5em, shorten <= .5em] (CRTx) -- (C);
\end{tikzpicture}
\end{center}

Let $x$ be the image of $X$ in $A_\lambda$.
Suppose that $x$ is normal in $A_\lambda$, then so is its trace in $A_\lambda^S$.
But taking the trace of $x$ in $A_\lambda$ and going through the two isomorphisms
of the diagram is the same thing as computing the period in the
$d$ copies of $\F_{q^r}$:
\[
x \mapsto \sum_{i=0}^{d-1} S^{(i)}(x) \mapsto
\left( \sum_{i=0}^{d-1} S^{(i)}(x) \pmod{h_1(x)}, \ldots,
\sum_{i=0}^{d-1} S^{(i)}(x) \pmod{h_d(x)} \right) \; .
\]
If a linear relation existed between $\eta_{\lambda,S}(P) \in \F_q(x(P))$
and its Galois conjugates for an eigenpoint $P$ of $\lambda$,
then it would be verified in the $d$ copies of $\F_{q^r}$
and it would lift up to $A_\lambda$,
thus forcing the same linear relation on
$\mathrm{Tr}(x)$ and its Galois conjugates.
Therefore, if $x$ is normal in $A_\lambda$,
then $\eta_{\lambda,S}(P)$ is normal in $\F_q(x(P))$.
\end{proof}

The converse is easily shown not to be true by experimentation.
There is also no relationship between the normality of $f_\lambda$
and that of $x(P) \in \F_q(x(P))$ for any eigenpoint $P$ of $\lambda$,
nor between the normality of $x(P)$ and that of $\eta_{\lambda,S}(P)$.

We conclude this section with a proof that normal elements are
abundant in the algebra $A_\lambda$. %
This can be used as a further heuristic argument that
Algorithm~\ref{algorithm:compell} fails with very low probability.

\begin{proposition}
\label{prop:euleralgebra}
Let $A$ be a polynomially cyclic algebra of degree $dr$
over the finite field $\F_q$.
The number of normal elements in $A$ is the same as the
number of normal elements in $\F_{q^{dr}}$:
$\Phi_q(x^{dr}-1) \in O(q^{dr})$ where $\Phi_q$ is the Euler function
for polynomials~\cite[Lemma~3.69]{lidl+niederreiter:2}.
\end{proposition}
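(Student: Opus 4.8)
The statement will follow once we identify $A$, as a module over its automorphism group algebra, with $R := \F_q[x]/(x^{dr}-1)$. First I would recast normality in module-theoretic terms. Let $\nu$ generate $\Aut_{\F_q}(A)$, which is cyclic of order $dr$, set $R=\F_q[\nu]$, and view $A$ as an $R$-module through the action of $\nu$. An element $a\in A$ is normal exactly when $a,\nu(a),\dots,\nu^{dr-1}(a)$ span $A$ over $\F_q$, i.e.\ when $Ra=A$; thus the normal elements of $A$ are precisely the generators of the $R$-module $A$, and likewise the normal elements of $\F_{q^{dr}}$ are the generators of $\F_{q^{dr}}$ as an $R$-module (through the Frobenius). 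Hence it suffices to prove that $A\cong R$ as $R$-modules: the generators of a free rank-one $R$-module are exactly the units $R^\times$, and $|R^\times|=\Phi_q(x^{dr}-1)$ is, by definition, the polynomial Euler function \cite[Lemma~3.69]{lidl+niederreiter:2} (while $|R^\times|\le|R|=q^{dr}$ gives the claimed $O(q^{dr})$ bound). Since $\F_{q^{dr}}\cong R$ as an $R$-module as well — a normal basis exists, and the Frobenius has minimal polynomial $x^{dr}-1$ by Artin's independence theorem \cite[Ch.~VI, Theorem~4.1]{lang} — the two counts then coincide.

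It remains to show $A\cong R$. The plan is to use the decomposition from the proof of Lemma~\ref{prop:xnormal}: $A$ splits as $\F_q[X]/h_1(X)\times\cdots\times\F_q[X]/h_d(X)$ with the $h_i$ irreducible of degree $r$, with $\nu$ cyclically permuting the $d$ factors and $\nu^d$ inducing the Frobenius of each $\F_{q^r}$-factor. Writing $G=\langle\nu\rangle$ and $H=\langle\nu^d\rangle\cong\gal(\F_{q^r}/\F_q)$, this exhibits $A$ as the induced module $\mathrm{Ind}_H^G(\F_{q^r})$. By the classical normal basis theorem $\F_{q^r}\cong\F_q[H]$ as an $\F_q[H]$-module, and inducing up yields $A\cong \F_q[G]\otimes_{\F_q[H]}\F_q[H]=\F_q[G]\cong R$. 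In particular $A$ is free of rank one over $R$; as a byproduct, $1,\nu,\dots,\nu^{dr-1}$ act $\F_q$-independently on $A$, which legitimizes the reformulation above, and the count of normal elements of $A$ equals $|R^\times|=\Phi_q(x^{dr}-1)$, as for $\F_{q^{dr}}$.

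For a general polynomially cyclic algebra $A$ of degree $dr$ — not only those of $A_\lambda$-type — I would instead argue that $A$ is a \emph{faithful} $R$-module of $\F_q$-dimension $dr=\dim_{\F_q}R$ (faithfulness being the $\F_q$-independence of $1,\nu,\dots,\nu^{dr-1}$, which holds by \cite{Mihailescu2010825}, or by applying Artin's independence theorem on the field factors). Writing $dr=p^{a}m$ with $p\nmid m$, one has $R\cong\prod_j \F_{q^{d_j}}[u]/(u^{p^{a}})$, a finite product of local chain rings, and by the structure theorem for finitely generated modules over chain rings a faithful $R$-module of $\F_q$-dimension $\dim_{\F_q}R$ must be isomorphic to $R$; so again $A\cong R$ and the count follows as before. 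The hard part is precisely this last step when $p\mid dr$: there $R$ is not semisimple, so one genuinely has to exclude non-free faithful $R$-modules of the right dimension — everything else is formal or classical.
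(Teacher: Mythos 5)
Your proof is correct, but it takes a genuinely different route from the paper's. The paper simply cites \cite[Theorem 4]{Mihailescu2010825} for the \emph{existence} of a normal element $\alpha\in A$, and then counts: writing every $a\in A$ on the normal basis $\nu^i(\alpha)$, normality of $a$ is equivalent to invertibility of the associated circulant matrix, i.e.\ to the coefficient polynomial being a unit in $\F_q[X]/(X^{dr}-1)$, exactly as in \cite[Theorem~3.73]{lidl+niederreiter:2}; this puts normal elements in bijection with units of $\F_q[X]/(X^{dr}-1)$. Your counting step is the same fact in module-theoretic dress (normal elements $=$ generators of $A$ as an $R$-module $=$ units of $R$ once $A\cong R$), but you replace the citation of Mihailescu's existence theorem by a direct proof that $A$ is free of rank one over $R=\F_q[x]/(x^{dr}-1)$: via the induced-module description $\mathrm{Ind}_H^G(\F_{q^r})\cong\F_q[G]$ coming from the splitting used in Lemma~\ref{prop:xnormal} in the $A_\lambda$ case, and, in general, via faithfulness (Artin/Dedekind independence on the field factors, legitimate since these algebras are products of fields) together with the classification of modules over the chain-ring factors of $R$ when $p\mid dr$. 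That last step is the only place where real care is needed, and your argument there is sound: a faithful module over a product of finite chain rings whose $\F_q$-dimension equals $\dim_{\F_q}R$ must contain, hence equal, a free rank-one summand in each local factor. What your route buys is self-containedness — it reproves the existence of a normal basis of $A$ (the content of the cited Theorem 4) rather than assuming it, and it makes the comparison with $\F_{q^{dr}}$ transparent, since both counts become $\lvert R^\times\rvert=\Phi_q(x^{dr}-1)$; what the paper's route buys is brevity, outsourcing the nonsemisimple case ($p\mid dr$) entirely to the reference instead of invoking the structure theory of chain rings.
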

\begin{proof}
There is at least one normal element in $A$~\cite[Theorem 4]{Mihailescu2010825}
(and it can be constructed from one in $\F_q$).
One such element $\alpha$ can be used to count the number of normal elements
as in the finite field case~\cite[Theorem~3.73]{lidl+niederreiter:2}:
\begin{enumerate}
\item every element $a$ can be written using the normal basis
associated to $\alpha$: $a = \sum_{i=0}^{dr-1} a_i \nu^{(i)}(\alpha)$;
\item as the Galois group of $A$ is cyclic, the decomposition
of the Galois conjugates of $a$ is obtained by shifting the coefficients $a_i$;
\item the circulant matrix $M_a$ associated to the $a_i$'s is invertible
if and only if $a$ is a normal element;
\item the polynomial $\sum_{i=0}^{dr-1} a_i X^i$ in $\F_q[X] / (X^{dr}-1)$
is invertible if and only if $M_a$ is.
\end{enumerate}
Therefore normal elements in $A$ are in one-to-one correspondence
with units of $\F_q[X] / (X^{dr}-1)$.
\end{proof}

\subsection{Experimental evidence for the conjecture}

We now present experimental evidence supporting the validity of
Conjecture~\ref{conj:ellperiods}. %
Assuming elliptic periods behave as random elements, the chance that
one of them does not generate $\F_{q^r}$ decreases as $q^{-r}$
asymptotically. %
Thus we may focus on looking for counterexamples with small values of
$q$ and $r$, as that increases the probability of finding one. %
We now present the various strategies by which we tried to obtain
counterexamples; surprisingly, none of them produced any.

\paragraph{Search strategy 1}
For small values of $q$,
one can generate random elliptic curves over $\F_q$,
look for small values of $\ell$ which are Elkies primes,
compute a point of $\ell$-torsion defined over $\F_{q^r}$,
the associated elliptic period and its minimal polynomial over $\F_q$.

If one restricts to prime values of $r$ and supposes elements
of $\F_{q^r}$ are represented as polynomials with coefficients in $\F_q$,
the computation of the minimal polynomial can be avoided:
one simply checks that the elliptic period is not in $\F_q$,
i.e. a constant polynomial.

\paragraph{Search strategy 2}
With further restrictions on $r$ and $\ell$ it is possible to avoid
the computation of an actual $\ell$-torsion point as well.
For example, let $r$ be a prime such that $\ell = 2 d r + 1$
with $d = 2$ is prime
and suppose that $\ell$ is an Elkies prime for $E$:
the $\ell$-division polynomial $f_\ell$ of $E$ has two factors
$h_1$ and $h_2$ of degree $r$ over $\F_{q}$
(whose product is $f_\lambda$ the kernel polynomial of an $\ell$-isogeny).
The elliptic period is defined as $x(P) + x([i] P)$ for some
$i \in \left(Z / \ell \Z\right)^*$ such that $i^2 \equiv -1 \pmod{\ell}$
and any $P \in E(\F_{q^r})[\ell]$.
Let $m_i$ be multiplication by $i$ map on the curve.
Then the mapping $x + m_i(x)$ sends the roots of $h_1$
and the roots of $h_2$ onto the elliptic periods.
If we suppose that the period is not generating $\F_{q}$, which implies that
$\Res_x(h_1(x),y-x-m_i(x)) = \Res_x(h_2,y-x-m_i(x))$
splits over $\F_{q}$,
then by Galois invariance $x + m_i(x) \equiv t \pmod{h_1}$
for a constant $t \in \F_{q}$,
and since the periods do not depend on the initial choice of $h_1$
or $h_2$,
we conclude that $m_i(x) \equiv (t-x) \pmod{f_\lambda}$.
Hence, we can simply compute $x + m_i(x) \pmod{f_\lambda}$ (or $\pmod{h_1}$
or $\pmod{h_2}$) and see if it is a constant.
The dominating step in this approach is the factorization of $f_\ell$.
Such an approach can easily be generalized to larger values of $d$.

\paragraph{Search strategy 3}
Finally, one can apply the above approach in a global way:
pick a rational elliptic curve with a rational $\ell$-isogeny
to ensure that $\ell$ is an Elkies prime over finite fields,
and such that the $\ell$-torsion (or its abscissas) is defined
over an extension of degree $r$ (or $2r$) with $\ell = 2 d r + 1$,
then compute the kernel polynomial of the isogeny
$f_\lambda = h_1 \ldots h_d$ with $h_1, \ldots, h_d$ of degree $r$
and check that there is no prime, outside those with discriminant $D=0,3,4$,
dividing all the coefficients, except possibly the constant one, of
\[
x + m_i(x) + \cdots + (\underbrace{m_i \circ \cdots \circ m_i}_\text{$d-1$ times}) (x) \pmod{f_\lambda}
\]
where $i$ a $d$-th root of $-1$ in $\left(Z / \ell \Z\right)^*$.
If such a prime $q$ is found, the reduction modulo $q$ is a
candidate for a counterexample.
It should still be checked that over $\F_q$ the kernel polynomial
does not split into more factors than over $\Q$,
and the corresponding counterexample should be explicitly computed.

For $r = 3$, $d = 2$ and $\ell = 2 d r + 1 = 13$, there exists an
infinite family of elliptic curves suitable for this approach.
Indeed, Daniels et al.~\cite{daniels_torsion_2015} showed that
all rational elliptic curves with $13$-torsion defined over a cubic extension
of $\Q$ belong to the parametrized family with $j$-invariant:
\[
j(t)=\frac{\left(t^4-t^3+5 t^2 + t + 1\right) \left(t^8 - 5 t^7 + 7 t^6 - 5 t^5 + 5 t^3 + 7 t^2 + 5 t + 1\right)^3}{t^{13} \left(t^2 -3 t - 1\right)}.
\]
In practice, the limiting step when using this family is computing
the $j$-invariant of the curves because its height quickly grows.

The next candidate for prime $r$ and $\ell$, and $d = 2$, can not be used.
Indeed, for $r = 7$ and $\ell = 29$, there is no rational elliptic curve
with a rational $29$-isogeny~\cite{Mazur1974} (or elliptic curves defined
over a septic number field with $29$-torsion~\cite{Derickx201452}).
More generally, rational elliptic curves with rational $\ell$-isogenies
exist only for a finite set of values of $\ell$ corresponding
to non-cuspidal rational points on $X_0(\ell)$, leaving few chances
to find other interesting curves.

Among the suitable prime values for $\ell \geq 13$~\cite{Mazur1974},
only $\ell = 13$ provides an infinite family~\cite{Lozano-Robledo2013}
parametrized by
\[
j(t) = \frac{\left(t^2 + 5t + 13\right)\left(t^4 + 7t^3 + 20t^2 + 19t + 1\right)^3}{t} \enspace .
\]
Within this family are the aforementioned curves with $13$-torsion
defined over a cubic extension of $\Q$.
The other curves might only have the abscissas of the $13$-torsion points
defined over a cubic extension of $\Q$,
which is completely equivalent for this approach,
or over a sextic extension of $\Q$,
in which case the kernel polynomials are irreducible of degree $6$.
These curves can still be used with $r = 3$ and $d = 2$,
but if a candidate counterexample is found, it should be checked
that the kernel polynomial splits as two cubic factors over $\F_q$.
Indeed, elliptic periods are not normal, and the trace trick used
in the cyclotomic Rains algorithm with an auxiliary extension degree
$s = 2$ does not apply.

\subparagraph{Prime values of $\ell$}
We now treat the other prime values for $\ell \geq 13$
for which only a finite number of $\overline{\Q}$-isomorphism
classes of rational elliptic curves come with a rational $\ell$-isogeny.
Curves labels come from Cremona's tables~\cite{cremona_tables}.

For $\ell = 17$, with $(\ell-1)/2 = 2^3$,
there are two rational elliptic curves
with a rational $17$-isogeny.
The factorization of $(\ell-1)/2$ implies
that $2$ would divide both $r$ and $d$
except in trivial cases where $r = 2^3$ and $d = 1$.
Therefore, none of these curves can be used.

For $\ell = 19$,  with $(\ell-1)/2 = 3^2$,
there are two rational elliptic curves
with a rational $19$-isogeny.
The factorization of $(\ell-1)/2$ implies
that $3$ would divide both $r$ and $d$
except in trivial cases where $r = 3^2$ and $d = 1$.
Therefore, none of these curves can be used.

For $\ell = 37$, with $(\ell-1)/2 = 2\cdot3^2$,
there are two rational elliptic curves
with a rational $37$-isogeny.
The first curve is curve~\texttt{1225H1}
and its kernel polynomial splits into three sextic factors.
It implies that $3$ must divide $d$ and can not divide $r$.
Therefore, the curve can only be used with $r = 2$ and $d = 3^2$.
The second curve is curve~\texttt{1225H2}
and its kernel polynomial is irreducible.
Therefore, it can be used with $r = 2$ and $d = 3^2$,
and $r = 3^2$ and $d = 2$.

For $\ell = 43$, with $(\ell-1)/2 = 3\cdot7$
there is one rational elliptic curve
with a rational $43$-isogeny 
and its kernel polynomial is irreducible.
Therefore, it can be used with $r = 3$ and $d = 7$,
and $r = 7$ and $d = 3$.

For $\ell = 67$, with $(\ell-1)/2 = 3\cdot11$
there is one rational elliptic curve
with a rational $67$-isogeny 
and its kernel polynomial is irreducible.
Therefore, it can be used with $r = 3$ and $d = 11$,
and $r = 11$ and $d = 3$.

For $\ell = 163$, with $(\ell-1)/2 = 3^4$,
there is one rational elliptic curve
with a rational $163$-isogeny.
The factorization of $(\ell-1)/2$ implies
that $3$ would divide both $r$ and $d$
except in trivial cases where $r = 3^4$ and $d = 1$.
Therefore, this curve can not be used.

\subparagraph{Composite values of $\ell$}
There are also a few odd composite values $\ell \geq 13$
for which rational elliptic curves with a rational $\ell$-isogeny exist.
In this case, only points of exact order $\ell$ should be used.
If moreover $\ell$ is not squarefree, periods should be computed
using the more general formula from~\cite{feisel1999normal}.


For $\ell = 15$, with $\phi(15)/2 = 2^2$,
there are four rational elliptic curves with a rational $15$-isogeny.
The factorization of $(\ell-1)/2$ implies
that $2$ would divide both $r$ and $d$
except in trivial cases where $r = 2^2$ and $d = 1$.
Therefore, none of these curves can be used.


For $\ell = 21$, with $\phi(21)/2 = 2\cdot3$,
there are four rational elliptic curves with a rational $21$-isogeny
All of them have a rational point of $3$-torsion,
and among them, curve~\texttt{162B1} is the unique rational elliptic curve
with $21$-torsion over a cubic extension of $\Q$~\cite{najman_cubic}.
For curve~\texttt{162B1},
the part of the kernel polynomial corresponding to primitive $21$-torsion
splits into two cubic factors, whereas,
for the three other curves~\texttt{162B2}, \texttt{162B3} and\texttt{162B4},
this part is a sextic irreducible.
Therefore, the former curve can only be used with $r = 3$ and $d = 2$,
whereas the latter ones can be used both with $r = 2$ and $d = 3$,
and $r = 3$ and $d = 2$.

For $\ell = 25$, with $\phi(25)/2 = 2\cdot5$,
there is an infinity of rational elliptic curves
with a rational $25$-isogeny parametrized~\cite{Lozano-Robledo2013} by
\[
j(t) =
\frac{\left(t^{10}+10t^8+35t^6-12t^5+50t^4-60t^3+25t^2-60t+16\right)^3}
{t^5+5t^3+5t-11}\enspace .
\]
There is also an infinity of elliptic curves defined over a quintic
extension $K$ of $\Q$ with $25$-torsion defined over
$K$~\cite{Derickx201452,2016arXiv160807549D}.
If one of these curves is actually defined over $\Q$,
it has rational $5$-torsion~\cite{gonzalez-jimenez_complete_2016}.
Proving that there is an infinity of such curves or
parametrizing them is outside the scope of this work,
but one can still sample curves within the larger family,
compute the kernel polynomial over the rationals,
and compute periods for $r = 2$ and $d = 5$, or $r = 5$ and $d = 2$,
if it is irreducible,
and $r = 5$ and $d = 2$ if it splits into two quintic factors,
and finally check that the kernel polynomials of the candidate counterexamples
factors as expected over $\F_q$.

For $\ell = 27$, with $\phi(27)/2 = 3^2$,
there is one rational elliptic curve with a rational $27$-isogeny.
The factorization of $(\ell-1)/2$ implies
that $3$ would divide both $r$ and $d$
except in trivial cases where $r = 3^2$ and $d = 1$.
Therefore, this curve can not be used.

\paragraph{Experimental results}
All above strategies were implemented and ran on different ranges of parameters.

For strategy 1, for prime values of $\ell$ in different ranges,
we tested all elliptic curves defined over $\F_q$
with $q$ prime up to some bounds
and such that $\ell$ is an Elkies prime with an eigenvalue
of odd prime order $r$ (without any bound on $r$).
The greatest values of $q$ and the total number of curves tested
are given in Table~\ref{table:strat1}.
Among all ranges we tested more than 43 millions of curves.
\begin{table}[!ht]
\[
\begin{array}{|*{6}{c|}}
\hline
\ell & [0, 100] & [100, 200] & [200, 300] & [300, 400] & [400, 500] \\
\hline
q_{\max} & 10729 & 8147 & 5807 & 4919 & 6427 \\
\hline
\text{nr. of curves} & 18321275 & 9371913 & 4567947 & 2936870 & 3587951 \\
\hline
\hline
\ell & [500, 600] & [600, 700] & [700, 800] & [800, 900] & [900, 1001] \\
\hline
q_{\max} & 3307 & 2939 & 1993 & 3643 & 4051 \\
\hline
\text{nr. of curves} & 1062818 & 1061413 & 450676 & 1194598 & 1284669 \\
\hline
\end{array}
\]
\caption{Largest value of $q$ tested for a given range of $\ell$ with strategy 1.}
\label{table:strat1}
\end{table}

For strategy 2, we picked up small values of $r$ for which
$\ell = 2 \cdot d \cdot r + 1$ with $d = 2$ is prime,
and checked all elliptic curves defined over $\F_q$ with $q$ prime up
to some bound.
The greatest values of $q$ tested are given in Table~\ref{table:strat2}.
\begin{table}[!ht]
\[
\begin{array}{|*{6}{c|}}
\hline
r & 3 & 7 & 13 & 37 & 43 \\
\hline
q_{\max} & 78593 & 29531 & 10993 & 1987 & 1291 \\
\hline
\hline
r & 67 & 73 & 79 & 97 & 127 \\
\hline
q_{\max} & 691 & 743 & 577 & 419 & 271 \\
\hline
\end{array}
\]
\caption{Largest value of $q$ tested for a given $r$ with strategy 2.}
\label{table:strat2}
\end{table}

For strategy 3, we tested all rational elliptic curves
with a rational $13$-isogeny and $13$-torsion defined over a cubic extension
in the infinite family parametrized by
\[
j(t) = \frac{\left(t^2 + 5t + 13\right)\left(t^4 + 7t^3 + 20t^2 + 19t + 1\right)^3}{t}
\]
for rational values of $t$ with numerator and denominator of absolute values
less than $1538$.
We also tested the sporadic curves for prime values of $\ell = 37, 43, 67$.

Among all these computations, no counterexample was found.

\paragraph{Final remark}
As to the significance of our experiments, it is worth noting that a
natural generalization of periods consists in replacing the sum by a
product (or any other symmetric function, indeed). %
Globally, this corresponds to replace a trace by a norm in some number
field, or polynomially cyclic algebra. %
This idea is bound to fail for Gaussian periods, because roots of
unity are conjugate to their inverses; however it looks in principle
as good as our definition in the elliptic case.

Looking more closely, however, we notice that using traces to define
periods grants some very special properties to them. %
For one, there is no analogue of Proposition~\ref{prop:xnormal} for
symmetric functions other than the trace; and, most strikingly, we
were able to find examples of \emph{generalized periods} not
generating their field of definition for most elementary symmetric
functions, except for the trace, using only small scale experiments.

In conclusion, even though we have little theoretical evidence to
support Conjecture~\ref{conj:ellperiods}, disproving it may turn out
to be a very challenging task.


\bibliographystyle{plain}
\bibliography{refs}

\end{document}